\DeclareSymbolFont{calletters}{OMS}{cmsy}{m}{n}
\DeclareSymbolFontAlphabet{\mathcal}{calletters}
\def\be{\begin{eqnarray}}
\def\ee{\end{eqnarray}}
\def\b*{\begin{eqnarray*}}
\def\e*{\end{eqnarray*}}
\newtheorem{Theorem}{Theorem}[part]
\newtheorem{Lemma}{Lemma}[part]
\makeatletter \@addtoreset{equation}{section}
\def \R{\mathbb{R}}
\def\={\;=\;}
\def\.{\;.}
\def\1{{\bf 1}}
\def\b*{\begin{eqnarray*}}
\def\e*{\end{eqnarray*}}
 \def\normeL2#1{\left\|{#1}\right\|_{L^2}}
\begin{document}
\title{Optimal market making\footnote{This research has been conducted with the support of the Research
Initiative ``Nouveaux traitements pour les donn\'ees lacunaires issues des activit\'es de cr\'edit'' financed by BNP Paribas under the aegis of the Europlace Institute of Finance.  The author would like to thank Philippe Amzelek (BNP Paribas), Laurent Carlier (BNP Paribas), \'Alvaro Cartea (Oxford University), David Evangelista (KAUST), Yuyun Fang (BNP Paribas), Jean-David Fermanian (ENSAE and CREST), Joaquin Fernandez-Tapia (Universit\'e Pierre et Marie Curie), Sebastian Jaimungal (University of Toronto), Jean-Michel Lasry (Paris Sciences et Lettres), Charles-Albert Lehalle (Capital Fund Management), Pierre-Louis Lions (Coll\`ege de France), Jiang Pu (Institut Europlace de Finance), Andrei Serjantov (BNP Paribas), Vladimir Vasiliev (BNP Paribas), and Douglas Vieira (Imperial College) for the discussions he had with them on the subject.}}
 \author{Olivier {\sc Gu\'eant} \footnote{Full Professor of Applied Mathematics. Universit\'e Paris 1 Panth\'eon-Sorbonne. Centre d'Economie de la Sorbonne. 106 Boulevard de l'H\^opital, 75013 Paris, France. The author was initially affiliated with ENSAE and CREST (3 avenue Pierre Larousse, 92245 Malakoff Cedex, France). Email: \texttt{olivier.gueant@univ-paris1.fr}}
 }
 \date{}

\maketitle
\begin{abstract}

Market makers provide liquidity to other market participants: they propose prices at which they stand ready to buy and sell a wide variety of assets. They face a complex optimization problem with both static and dynamic components. They need indeed to propose bid and offer/ask prices in an optimal way for making money out of the difference between these two prices (their bid-ask spread). Since they seldom buy and sell simultaneously, and therefore hold long and/or short inventories, they also need to mitigate the risk associated with price changes, and subsequently skew their quotes dynamically. In this paper, (i)~we propose a general modeling framework which generalizes (and reconciles) the various modeling approaches proposed in the literature since the publication of the seminal paper ``High-frequency trading in a limit order book'' by Avellaneda and Stoikov, (ii) we prove new general results on the existence and the characterization of optimal market making strategies, (iii) we obtain new closed-form approximations for the optimal quotes, (iv) we extend the modeling framework to the case of multi-asset market making and we obtain general closed-form approximations for the optimal quotes of a multi-asset market maker, and (v) we show how the model can be used in practice in the specific (and original) case of two credit indices.\\

\vspace{5mm}

\noindent \textbf{Key words:} Market making, Stochastic optimal control, Closed-form approximations, Gu\'eant--Lehalle--Fernandez-Tapia formulas, CDX indices.
\end{abstract}

\section{Introduction}

What is a market maker? In a nutshell, it is a liquidity provider. However, it is complex to give a precise definition because the exact role of market makers depends on the considered market. Furthermore, the very definition of a market maker has been blurred in recent years, because of the electronification of most markets and because of the emergence of high-frequency trading in many of them.\\

On most order-driven markets, such as many stock markets, there are nowadays several kinds of market makers. First, there are ``official'' market makers (actually, market making companies): these market makers have usually signed an agreement with a given exchange, or with a given company, for maintaining fair and orderly markets. The Designated Market Makers (DMM) on the NYSE, which succeeded the market specialists, are examples of such ``official'' market makers. They often have contractual obligations, such as participating to the opening and closing auctions and/or quoting with a reasonable bid-ask spread -- e.g.~the DMMs must quote at the National Best Bid and Offer (NBBO) a specified percentage of the time. In addition to these ``official'' market makers, other market participants in the stock markets, in particular some high-frequency traders, are often regarded as market makers (Menkveld calls them the new market makers in \cite{menkveld2013high}) because they are almost continuously present on both sides of the limit order books. They are acting as liquidity providers even though they have no obligation to do so: they just try to make money out of their high-frequency market making strategies. The electronification of most order-driven markets makes it possible for trading firms to act as liquidity providers, hence a blurring of the definition of ``market maker''.\\

On quote-driven markets, such as the corporate bond markets, the market makers are the dealers (these markets are often also called ``dealer markets''). These dealers provide liquidity to the other market participants (the ``clients'') by quoting bid and offer prices on a regular basis. However, their exact behavior depends on the considered market. On some markets, dealers' quotes are firm quotes, whereas on other markets the quotes are streamed only for information (and for a specific size/notional) and become binding when dealers answer specific requests.\\

In this paper, we consider that a market maker is somebody (or in fact an algorithm) who proposes prices at which he/she/it stands ready to buy or sell one or several assets. In particular, we do not consider any contractual constraint, and we assume that all quotes are firm quotes (for a given fixed size). The problem we consider is the determination of the optimal quotes a market maker should propose at the bid and the offer to make money while mitigating inventory risk.\\

This problem is a complex one from a quantitative viewpoint with both static and dynamic components. Market makers face indeed a classical static trade-off: high margin and low volume vs. low margin and high volume. A market maker who quotes a large spread (with no skew) trades rarely, but each transaction leads to a large Mark-to-Market (MtM) gain. Conversely, a market maker quoting a narrow spread (with no skew) trades often, but each transaction leads to a small MtM gain. In addition to this static trade-off, market makers face a dynamic problem: they must adapt their quotes dynamically to reduce their exposure to price changes. For instance, a single-asset market maker with a long inventory should price conservatively on the bid side and aggressively on the ask side, because he wants to reduce his probability to buy and increase his probability to sell. Symmetrically, if he has a short inventory, then he should price aggressively on the bid side and conservatively on the ask side.\\

Like in almost all the mathematical literature on market making, we consider the problem of a single market maker in a simplified way: (i) market prices\footnote{The exact nature of these market prices depends on the considered market. In the case of most order-driven markets (such as most stock markets), a market price may be a mid-price. It may also be a price based on the most recent transactions. In the case of the European corporate bond market, the Composite Bloomberg Bond Trader (CBBT) price is a composite price which may be regarded as a proxy for the market price of a bond. In the case of the US corporate bond market, a market price may also be built by using a mix between TRACE data (in spite of the lag) and the CBBT prices. In all cases, the market prices involved in the model should be regarded as reference prices.} are modeled by stochastic processes assumed to be exogenous to the market maker's behavior,\footnote{See \cite{cartea2015algorithmic} and \cite{gueant2013dealing} for models with adverse selection effects.} and (ii) the probability that the market maker buys (respectively sells) a security at the bid (respectively offer) price he quotes depends on the distance between the quoted price and the market price of that security -- this is the classical Avellaneda-Stoikov modeling framework  -- see \cite{avellaneda2008high}. In particular, the competition between market makers is not explicitly modeled.\\

Since the publication of the seminal paper ``High-frequency trading in a limit order book'' by Avellaneda and Stoikov (see \cite{avellaneda2008high}), market making has been one of the important research topics in quantitative finance.\footnote{Market making has always been an important topic for economists -- see for instance the model of Grossman and Miller \cite{grossman1988liquidity}. However, the dynamic approaches proposed by mathematicians have shed a new light on market making and make it possible to build algorithms for replacing human market makers. The main (old) economic paper really related to the mathematical literature on market making is the paper \cite{ho1981optimal} by Ho and Stoll published in 1981 -- see also \cite{ho1983dynamics} by the same authors. It is noteworthy that this old paper by Ho and Stoll inspired Avellaneda and Stoikov when they wrote their seminal paper \cite{avellaneda2008high}.} Therefore many models have been proposed to address the problem faced by market makers. Gu\'eant \emph{et al.} considered in \cite{gueant2013dealing} a variant of the model proposed by Avellaneda and Stoikov and showed that the four-dimensional Hamilton-Jacobi-Bellman (HJB) equation arising from the model could be simplified into a linear system of ordinary differential equations when a specific change of variables is used.\footnote{See also \cite{gueant2013general}.} The paper \cite{gueant2013dealing} also contains the Gu\'eant--Lehalle--Fernandez-Tapia formulas which are closed-form approximations of the optimal quotes of a single-asset market maker. These approximation formulas are used in practice by major banks in Europe and Asia for market making in (illiquid) quote-driven markets or for market making in some order-driven markets (in the specific case of a small tick size).\\

In the above papers, the objective function of the market maker is the expected CARA utility\footnote{CARA means Constant Absolute Risk Aversion. CARA utility functions are utility functions of the form $u(x) = - e^{-\gamma x}$ for $\gamma > 0$.} of his P\&L (sometimes with a penalty for the terminal inventory). Other models have been proposed in the literature with different objective functions. In their paper on market making with general price dynamics, Fodra and Labadie \cite{fodra2012} considered, in addition to the expected CARA utility case, the risk-neutral case and the risk-neutral case with a penalization on the terminal inventory. In a few papers, with various coauthors, and in their recent book \cite{cartea2015algorithmic} with Penalva, Cartea and Jaimungal considered as an objective function the expected value of the P\&L minus a running penalty on the inventory -- see for instance \cite{cartea2013robust}, \cite{cartea2013risk}, and \cite{cartea2014buy}.\\

The numerous researchers involved in market making modeling have also included many features in their models. Cartea and Jaimungal, with their coauthors, have proposed models with price impact, the possibility to consider short-term alpha, the existence of an adverse selection effect,\footnote{Adverse selection is also considered in \cite{gueant2013dealing}.} etc. Recently, new models have emerged to deal with ambiguity aversion: see for instance the paper \cite{cartea2013robust} by Cartea \emph{et al.} and the paper \cite{nystrom} by Nystr\"om et al. -- see also the PhD dissertation of Donnelly \cite{donnelly2014ambiguity}.\\

For strange reasons,\footnote{One reason is the interest for high-frequency trading. High-frequency trading is indeed often discussed for its influence on the price formation process of stocks. Another reason is that some market making models can be regarded as generalizations of optimal execution models built to solve problems coming from the cash-equity industry -- see for instance \cite{bayraktar2011liquidation}, \cite{gueant2012optimal} and \cite{huitema2012optimal}.} academic researchers have mainly focused on stock markets, which are certainly the least relevant markets to apply most of the models they have proposed.\footnote{Models \emph{\`a la} Avellaneda-Stoikov can hardly be applied to most stock markets for at least two reasons: (i)~the discrete nature of prices (especially in the case of stocks with a large tick size), and (ii) the fact that the very nature of the limit order books, which are queuing systems with priorities and volumes, is not taken into account. One of the only market making models really well suited to stocks is the model proposed by Guilbaud and Pham in \cite{guilbaud2013optimal} -- see also \cite{gueantbook} for a variant.} In this paper, we have instead in mind the case of a market maker in a quote-driven market, or in an order-driven market if the tick sizes of the securities are small.\\

The academic literature on market making is also mainly focused on the case of a market maker operating on a single asset. However, in practice, almost all market makers are in charge of a list of securities. For a market maker in charge of several correlated assets, applying an independent market making strategy to each asset is suboptimal in terms of risk management. It is therefore of the utmost importance to build a model accounting for the correlation structure of the security price moves, especially in the case of corporate bonds where there are often dozens of bonds issued by the same company (which are therefore highly correlated).\\

In this paper, we consider a modeling framework \emph{\`a la} Avellaneda-Stoikov with general intensity functions, instead of the exponential intensity functions of most models (see Section~2). We show that the four-variable HJB equation arising from the various optimization criteria used in the literature can be transformed into a simple system of ordinary differential equations (see Section 3). This somehow reconciles the different approaches used in the literature and enables to understand the subtle differences between the various criteria used in the literature. In particular it helps understanding what it means to be averse to price risk and to non-execution risk. We then show in Section 4 how to find closed-form approximations for the optimal quotes. These approximations generalize the Gu\'eant--Lehalle--Fernandez-Tapia formulas to the case of general intensity functions and to the case of the different optimization criteria used in the market making literature. In Section 5, we consider a problem that is very rarely dealt with in the academic literature in spite of its importance for practitioners: multi-asset market making. We show that many results obtained in the one-asset case can be generalized to our multi-asset market making model. In particular, we obtain for the first time in this paper closed-form approximations for the optimal quotes of a multi-asset market maker. This result is an important breakthrough for practitioners because most market makers are in charge of dozens of assets (or even hundreds of assets when the market maker is in fact an algorithm) and often reluctant to solve very large systems of nonlinear differential equations. In Section 6, we apply our findings to the case of two highly correlated credit indices: CDX.NA.IG (CDX North America Investment Grade) and CDX.NA.HY (CDX North America High Yield).\\

\section{Modeling framework and notations}

\subsection{Notations}

Let us fix a probability space $(\Omega, \mathcal{F}, \mathbb{P})$ equipped with a filtration $(\mathcal{F}_t)_{t\in \mathbb{R}_+}$ satisfying
the usual conditions. We assume that all stochastic processes are defined on $(\Omega, \mathcal{F},(\mathcal{F}_t)_{t\in \mathbb{R}_+}, \mathbb{P})$.\\

We consider in this section (and in the following two sections) a market maker in charge of a single asset. The reference price of this asset\footnote{There may not be a proper market price (see the above discussion), hence the wording ``reference price''.} is modeled by a process $(S_t)_t$ with the dynamics
\begin{equation}
\label{sec2:dS}
dS_t = \sigma dW_t,\quad S_0 \text{\; given},
\end{equation}
where $(W_t)_t$ is a standard Brownian motion adapted to the filtration $(\mathcal{F}_t)_{t\in \mathbb{R}_+}$.\\

This market maker proposes bid and ask quotes to buy and sell the asset. These bid and ask quotes are modeled by two stochastic processes, respectively denoted by $(S^b_t)_t$ and $(S^a_t)_t$.\\

Transactions occur at random times corresponding to the arrival times of agents willing to buy or sell the asset. The distribution of the trade times depends obviously on the liquidity of the asset, and on the bid and ask prices quoted by the market maker. We denote by $(N^b_t)_t$ and $(N^a_t)_t$ the two point processes modeling the number of transactions at the bid and at the ask, respectively. We assume that assets are traded $\Delta$ by $\Delta$, i.e., that the quantities traded do not vary across trades.\\

The inventory of the market maker, modeled by the process $(q_t)_t$, has therefore the dynamics
\begin{equation}
\label{sec2:dq}
dq_t = \Delta dN^b_t - \Delta dN^a_t, \quad q_0 \text{\; given.}
\end{equation}

We assume that the processes $(N^b_t)_t$ and $(N^a_t)_t$ are independent of the Brownian motion~$(W_t)_t$. We denote by $(\lambda_t^b)_t$ and $(\lambda_t^a)_t$ the intensity processes of $(N^b_t)_t$ and $(N^a_t)_t$, respectively. As in the classical Avellaneda-Stoikov model (see \cite{avellaneda2008high}), we assume that the intensity processes are functions of the difference between the reference price and the prices quoted by the market maker. In addition, we assume that the market maker stops proposing a bid (respectively ask) quote when his position is above (respectively below) a given threshold $Q$ (respectively $-Q$).\footnote{$Q$ is assumed to be a multiple of $\Delta$.} Formally, we assume that $(\lambda_t^b)_t$ and $(\lambda_t^a)_t$ verify
\begin{equation}
  \label{sec2:intensity} \lambda_t^b = \Lambda^b(\delta_t^b)1_{q_{t-}<Q} \quad \text{and} \quad \lambda_t^a = \Lambda^a(\delta_t^a)1_{q_{t-}>-Q},
   \end{equation} where $$\delta_t^b = S_t - S_t^b \quad \text{and} \quad \delta_t^a = S^a_t - S_t,$$
and where $\Lambda^b$ and $\Lambda^a$ are two functions satisfying the following hypotheses:\footnote{The first three hypotheses are natural. The fourth one is more technical. It ensures in particular that the functions $\pi^{b}: \delta \mapsto \delta \Lambda^{b}(\delta)$ and $\pi^{a}: \delta \mapsto \delta \Lambda^{a}(\delta)$,  which are related to the instantaneous (expected) MtM PnL associated with each side, reach a maximum on $\mathbb{R}$ (in fact on $\mathbb{R}_+$). To see this (we focus on the bid side, but the proof is similar for the ask side), let us notice that
$${\pi^{b}}'(\delta) = 0 \iff \delta + \frac{\Lambda^{b}\left(\delta\right)}{{\Lambda^{b}}'\left(\delta\right)} = 0.$$
But $\upsilon^{b}: \delta \mapsto \delta + \frac{\Lambda^{b}\left(\delta\right)}{{\Lambda^{b}}'\left(\delta\right)}$ is a strictly increasing function with $$\inf_{\delta}{\upsilon^{b}}'(\delta) = 2 - \sup_{\delta} \frac{\Lambda^{b}\left(\delta\right) {\Lambda^{b}}''\left(\delta\right)}{\left({\Lambda^{b}}'\left(\delta\right)\right)^2} > 0.$$
Therefore, the equation $\upsilon^{b}(\delta) = 0$ has a unique solution and it corresponds to a unique maximizer for $\pi^{b}$.}
\begin{itemize}
  \item $\Lambda^{b}$ and $\Lambda^{a}$ are twice continuously differentiable,
  \item $\Lambda^{b}$ and $\Lambda^{a}$ are decreasing, with $\forall \delta \in \R$, ${\Lambda^{b}}'(\delta) <0$ and ${\Lambda^{a}}'(\delta) <0$,
  \item $\lim_{\delta \to +\infty} \Lambda^{b}(\delta) = \lim_{\delta \to +\infty} \Lambda^{a}(\delta) = 0$,
  \item $\sup_{\delta}\frac{\Lambda^{b}(\delta){\Lambda^{b}}''(\delta)}{\left({\Lambda^{b}}'(\delta)\right)^2} <  2 \quad \text{and} \quad \sup_{\delta} \frac{\Lambda^{a}(\delta){\Lambda^{a}}''(\delta)}{\left({\Lambda^{a}}'(\delta)\right)^2} <  2.$
\end{itemize}

Finally, the process $(X_t)_t$ models the market maker's cash account. Given our modeling framework, $(X_t)_t$ has the dynamics
\begin{eqnarray}
\nonumber dX_t &=& S^a_t \Delta dN^a_t - S^b_t \Delta dN^b_t\\
&=&  (S_t + \delta^a_t) \Delta dN^a_t - (S_t -
\delta^b_t) \Delta dN^b_t. \label{sec2:dX}
\end{eqnarray}

\subsection{The two classical optimization problems}

In the above paragraphs, we have defined the three processes at the heart of most market making models: the reference price process $(S_t)_t$, the inventory process $(q_t)_t$, and the cash process $(X_t)_t$. We now need to define the problem faced by the market maker. Following the model proposed by Avellaneda and Stoikov in \cite{avellaneda2008high}, one can consider, as in \cite{gueant2013dealing}, that the market maker maximizes the expected value of a CARA utility function (with risk aversion parameter $\gamma > 0$) applied to the MtM value of the portfolio at a given date $T$. This MtM value at time $T$ is basically $X_T + q_T S_T$, or $X_T + q_T S_T - \ell(|q_T|)$ if we add a liquidity premium for the remaining inventory (whatever its sign) -- $\ell$ is a nondecreasing and convex function from $\R_+$ to $\R_+$. In this general framework, the goal of the market maker is to maximize
\begin{equation*}
 \mathbb{E}\left[-\exp\left(-\gamma (X_T + q_T S_T - \ell(|q_T|)) \right)\right], \qquad \textbf{(Model A)}
\end{equation*}
over $(\delta_t^b)_t \in \mathcal{A}$ and $(\delta_t^a)_t \in \mathcal{A}$, where the set of admissible controls $\mathcal{A}$ is simply the set of predictable processes bounded from below.\\

Alternatively, one can consider that the market maker maximizes the expected value of the MtM value of the portfolio at date $T$, but that holding an inventory is penalized over the time interval $[0,T]$. This is typically what is done by Cartea, Jaimungal and their coauthors (see the recent book \cite{cartea2015algorithmic} for several examples). In that kind of model, the goal of the market maker is to maximize an expression of the form
\begin{equation*}
\mathbb{E}\left[X_T + q_T S_T - \ell(|q_T|) - \frac 12 \gamma \sigma^2 \int_0^T q_t^2 dt \right], \qquad \textbf{(Model B)}
\end{equation*}
over $(\delta_t^b)_t \in \mathcal{A}$ and $(\delta_t^a)_t \in \mathcal{A}$.\\

\section{Towards a single system of ordinary differential equations for characterizing the optimal quotes}

Both Model A and Model B can be solved using the classical tools of stochastic optimal control. In particular, we show that, in both models, finding the value function (and the optimal bid and ask quotes) boils down to solving a tridiagonal system of ordinary differential equations (ODEs), and that the equations associated with Model A and Model B are part of the same family of ODEs.

\subsection{Dimensionality of the problem: a reduction from 4 to 2}

The HJB equation associated with Model A is given by
\begin{equation}
\label{sec3:HJBModelA}0= -\partial_t u(t,x,q,S) - \frac 12 \sigma^2 \partial^2_{SS} u(t,x,q,S)
\end{equation}
$$ - 1_{q<Q}\sup_{\delta^b}
\Lambda^b(\delta^b) \left[u(t,x-\Delta S+\Delta\delta^b,q+\Delta,S) - u(t,x,q,S) \right]$$
$$ - 1_{q>-Q} \sup_{\delta^a} \Lambda^a(\delta^a) \left[u(t,x+\Delta S+\Delta \delta^a,q-\Delta,S) -
u(t,x,q,S) \right],$$ for $q \in \mathcal{Q} = \lbrace -Q, -Q + \Delta, \ldots, Q-\Delta, Q \rbrace$, and $(t,S,x) \in [0,T]\times \mathbb{R}^2$, with the terminal condition \begin{equation}\label{sec3:HJBModelACT}u(T,x,q,S) = -\exp\left(-\gamma(x+qS - \ell(|q|))\right).\end{equation}

If one uses the ansatz
\begin{equation}\label{sec3:ansatzModelA}u(t,x,q,S) = -\exp\left(-\gamma(x+qS + \theta(t,q))\right),\end{equation}
then Eq.~(\ref{sec3:HJBModelA}) becomes
\begin{equation}
\label{sec3:thetaModelA}
0=-\partial_t \theta(t,q) + \frac{1}{2} \gamma \sigma^2 q^2
\end{equation}
$$-  1_{q<Q} \sup_{\delta^{b}} \frac{\Lambda^b(\delta^{b})}{\gamma}\left(1-\exp\left(-\gamma\left(\Delta \delta^{b} + \theta(t,q+\Delta) - \theta(t,q) \right)\right)\right)$$
$$- 1_{q>-Q} \sup_{\delta^{a}} \frac{\Lambda^a(\delta^{a})}{\gamma}\left(1-\exp\left(-\gamma \left(\Delta \delta^{a} + \theta(t,q-\Delta) - \theta(t,q) \right)\right)\right),
$$ for $q \in \mathcal{Q}$, and $t \in [0,T]$, and the terminal condition (\ref{sec3:HJBModelACT}) becomes $\theta(T,q) = -\ell(|q|)$.\\

The HJB equation associated with Model B is given by
\begin{equation}
\label{sec3:HJBModelB}0= -\partial_t u(t,x,q,S) + \frac 12 \gamma \sigma^2 q^2 - \frac 12 \sigma^2 \partial^2_{SS} u(t,x,q,S)
\end{equation}
$$ - 1_{q<Q}\sup_{\delta^b}
\Lambda^b(\delta^b) \left[u(t,x-\Delta S+\Delta\delta^b,q+\Delta,S) - u(t,x,q,S) \right]$$
$$ - 1_{q>-Q} \sup_{\delta^a} \Lambda^a(\delta^a) \left[u(t,x+\Delta S+\Delta \delta^a,q-\Delta,S) -
u(t,x,q,S) \right],$$ for $q \in \mathcal{Q}$, and $(t,S,x) \in [0,T]\times \mathbb{R}^2$, with the terminal condition \begin{equation}\label{sec3:HJBModelBCT}u(T,x,q,S) = x+qS - \ell(|q|).\end{equation}

If one uses the ansatz
\begin{equation}\label{sec3:ansatzModelA}u(t,x,q,S) = x+qS + \theta(t,q),\end{equation}
then Eq.~(\ref{sec3:HJBModelB}) becomes
\begin{equation}
\label{sec3:thetaModelB}
0=-\partial_t \theta(t,q) + \frac{1}{2} \gamma \sigma^2 q^2
\end{equation}
$$-  1_{q<Q} \sup_{\delta^{b}} \Lambda^b(\delta^{b})\left(\Delta \delta^{b} + \theta(t,q+\Delta) - \theta(t,q) \right)$$
$$- 1_{q>-Q} \sup_{\delta^{a}} \Lambda^a(\delta^{a})\left(\Delta \delta^{a} + \theta(t,q-\Delta) - \theta(t,q) \right),
$$ for $q \in \mathcal{Q}$, and $t \in [0,T]$, and the terminal condition (\ref{sec3:HJBModelBCT}) becomes $\theta(T,q) = -\ell(|q|)$.\\

Eqs.~(\ref{sec3:thetaModelA}) and (\ref{sec3:thetaModelB}) are in fact two systems of ODEs which belong to the same family. If we introduce for $\xi >0$ the functions
$$H^b_{\xi}(p) = \sup_{\delta} \frac{\Lambda^b(\delta)}{\xi}\left(1-\exp\left(-\xi\Delta \left(\delta - p \right)\right)\right)$$
and
$$H^a_{\xi}(p) = \sup_{\delta} \frac{\Lambda^a(\delta)}{\xi}\left(1-\exp\left(-\xi\Delta \left(\delta - p \right)\right)\right),$$
and the limit functions (for $\xi=0$) $$H^b_{0}(p) = \Delta \sup_{\delta} \Lambda^b(\delta)(\delta - p)$$ and $$H^a_{0}(p) = \Delta \sup_{\delta} \Lambda^a(\delta)(\delta - p),$$  then we can indeed consider the general equation
\begin{equation}
\label{sec3:thetagen}
0=-\partial_t \theta(t,q) + \frac{1}{2} \gamma \sigma^2 q^2
\end{equation}
$$-  1_{q<Q} H^b_{\xi}\left(\frac{\theta(t,q) - \theta(t,q+\Delta)}{\Delta}\right) - 1_{q>-Q} H^a_{\xi}\left(\frac{\theta(t,q) - \theta(t,q-\Delta)}{\Delta}\right),$$
for $q \in \mathcal{Q}$, and $t \in [0,T]$, with the terminal condition
\begin{equation}
\label{sec3:thetagenCT}
\theta(T,q) = -\ell(|q|).
\end{equation}

Eq.~(\ref{sec3:thetaModelA}) corresponds to Eq.~(\ref{sec3:thetagen}) for $\xi=\gamma$ while Eq.~(\ref{sec3:thetaModelB}) corresponds to Eq.~(\ref{sec3:thetagen}) for $\xi=0$.\\

\subsection{Existence and uniqueness of a solution $\theta$}

In the following paragraphs, we prove, for all $\xi \ge 0$, that there exists a unique solution $\theta$ to Eq.~(\ref{sec3:thetagen}) with terminal condition (\ref{sec3:thetagenCT}).\\

Let us start with a lemma on $H^b_{\xi}$ and $H^a_{\xi}$.\\

\begin{Lemma}
\label{sec3:lemmaH}
$\forall \xi \ge 0$, $H^b_{\xi}$ and $H^a_{\xi}$ are two decreasing functions of class $C^2$.\\

The supremum in the definition of $H^b_{\xi}(p)$ is attained at a unique $\tilde{\delta}^{b*}_\xi(p)$ characterized by
$$p=\tilde{\delta}^{b*}_\xi(p) - \frac 1{\xi\Delta} \log\left(1 -\xi \Delta \frac{\Lambda^b\left(\tilde{\delta}^{b*}_\xi(p)\right)}{{\Lambda^b}'\left(\tilde{\delta}^{b*}_\xi(p)\right)}\right),\quad \textrm{if\;} \xi > 0,$$ and
$$p=\tilde{\delta}^{b*}_\xi(p) + \frac{\Lambda^b\left(\tilde{\delta}^{b*}_\xi(p)\right)}{{\Lambda^b}'\left(\tilde{\delta}^{b*}_\xi(p)\right)}, \quad \textrm{if\;} \xi = 0,$$
 or equivalently by
\begin{equation}
\label{sec3:deltab}
\tilde{\delta}^{b*}_\xi(p) = {\Lambda^b}^{-1}\left(\xi H^b_{\xi}(p) - \frac{{H_{\xi}^b}'(p)}{\Delta}\right).
\end{equation}

Similarly, the supremum in the definition of $H^a_{\xi}(p)$ is attained at a unique $\tilde{\delta}^{a*}_\xi(p)$ characterized by
$$p=\tilde{\delta}^{a*}_\xi(p) - \frac 1{\xi\Delta} \log\left(1 -\xi \Delta \frac{\Lambda^a\left(\tilde{\delta}^{a*}_\xi(p)\right)}{{\Lambda^a}'\left(\tilde{\delta}^{a*}_\xi(p)\right)}\right),\quad \textrm{if\;} \xi > 0,$$ and
$$p=\tilde{\delta}^{a*}_\xi(p) + \frac{\Lambda^a\left(\tilde{\delta}^{a*}_\xi(p)\right)}{{\Lambda^a}'\left(\tilde{\delta}^{a*}_\xi(p)\right)}, \quad \textrm{if\;} \xi = 0,$$
 or equivalently
 \begin{equation}
\label{sec3:deltaa}
\tilde{\delta}^{a*}_\xi(p) = {\Lambda^a}^{-1}\left(\xi H^a_{\xi}(p) - \frac{{H_{\xi}^a}'(p)}{\Delta}\right).
\end{equation}

Furthermore, the functions $p \mapsto \tilde{\delta}^{b*}_\xi(p)$ and $p \mapsto \tilde{\delta}^{a*}_\xi(p)$  are $C^1$ and increasing.\\

\end{Lemma}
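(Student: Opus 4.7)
The plan is to reduce everything to the strict monotonicity of an auxiliary function $\upsilon^b_\xi$, exactly as in the footnote for $\pi^b$. I treat only the bid side; the ask case is identical.

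First, fix $p$ and $\xi \geq 0$ and denote by $f_\xi(\delta,p)$ the expression under the supremum defining $H^b_\xi(p)$. This $C^2$ function satisfies $f_\xi(p,p)=0$, is strictly positive on $(p,+\infty)$ and strictly negative on $(-\infty,p)$, and for $\xi>0$ tends to $0$ as $\delta\to +\infty$ by the third hypothesis. Continuity and this boundary behaviour show that $\sup_\delta f_\xi(\cdot,p)$ is strictly positive and attained at some point of $(p,+\infty)$, hence at a critical point of $f_\xi(\cdot,p)$. (For $\xi=0$, existence of a maximiser is recovered a posteriori from uniqueness of the critical point together with the monotonicity of $f_0$ on either side.)

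Setting $\partial_\delta f_\xi=0$ and rearranging, critical points coincide with solutions of $\upsilon^b_\xi(\delta)=p$, where
\begin{equation*}
\upsilon^b_\xi(\delta) = \delta - \tfrac{1}{\xi\Delta}\log\!\Bigl(1 - \xi\Delta\,\tfrac{\Lambda^b(\delta)}{\Lambda^{b\prime}(\delta)}\Bigr)\;\;(\xi>0),\qquad \upsilon^b_0(\delta) = \delta + \tfrac{\Lambda^b(\delta)}{\Lambda^{b\prime}(\delta)}.
\end{equation*}
Putting $A(\delta) = 2 - \Lambda^b\Lambda^{b\prime\prime}/(\Lambda^{b\prime})^2$ and $B(\delta) = -\xi\Delta\,\Lambda^b/\Lambda^{b\prime}$, a short computation gives $(\upsilon^b_\xi)'(\delta) = (A+B)/(1+B)$. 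The fourth hypothesis yields $A \geq \epsilon := 2 - \sup_\delta \Lambda^b\Lambda^{b\prime\prime}/(\Lambda^{b\prime})^2 > 0$, and $B \geq 0$ since $\Lambda^b>0>\Lambda^{b\prime}$. Writing $(A+B)/(1+B)=1+(A-1)/(1+B)$ then yields $(\upsilon^b_\xi)' \geq \min(A,1) \geq \min(\epsilon,1) > 0$ uniformly in $\delta$ and $\xi$, so $\upsilon^b_\xi:\mathbb{R}\to\mathbb{R}$ is a $C^1$ bijection. Hence for every $p$ there is a unique critical point $\tilde\delta^{b*}_\xi(p) := (\upsilon^b_\xi)^{-1}(p)$, which combined with the previous paragraph is the global maximiser, and the inverse function theorem ensures $\tilde\delta^{b*}_\xi \in C^1$ with strictly positive derivative $1/(\upsilon^b_\xi)'(\tilde\delta^{b*}_\xi(\cdot))$.

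Regularity and monotonicity of $H^b_\xi$ now follow from the envelope theorem:
\begin{equation*}
(H^b_\xi)'(p) = \partial_p f_\xi\bigl(\tilde\delta^{b*}_\xi(p),p\bigr) = -\Delta\,\Lambda^b\bigl(\tilde\delta^{b*}_\xi(p)\bigr)\, e^{-\xi\Delta(\tilde\delta^{b*}_\xi(p)-p)} < 0,
\end{equation*}
so $H^b_\xi$ is strictly decreasing; since $\Lambda^b \in C^2$ and $\tilde\delta^{b*}_\xi \in C^1$, the right-hand side is $C^1$ in $p$, whence $H^b_\xi \in C^2$. Finally, plugging the closed forms of $H^b_\xi(p)$ and $(H^b_\xi)'(p)$ into $\xi H^b_\xi(p) - (H^b_\xi)'(p)/\Delta$, the exponential terms cancel and the expression collapses to $\Lambda^b(\tilde\delta^{b*}_\xi(p))$, which one inverts using the strict monotonicity of $\Lambda^b$ to obtain (\ref{sec3:deltab}). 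The main obstacle is the uniform lower bound on $(\upsilon^b_\xi)'$: this is where the fourth hypothesis is used in its full strength, and once it is in hand the rest is bookkeeping.
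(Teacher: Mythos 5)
Your proof is correct and follows essentially the same route as the paper: existence of the maximiser from the sign and boundary behaviour of the objective, uniqueness via strict monotonicity of the auxiliary function $\upsilon^b_\xi$ (the paper's $j$, whose derivative it computes in a footnote exactly as your $(A+B)/(1+B)$), the implicit/inverse function theorem for the regularity and monotonicity of $\tilde{\delta}^{b*}_\xi$, envelope differentiation for $H^b_\xi$, and the cancellation of the exponential terms to obtain \eqref{sec3:deltab}. The only (welcome) refinement is your uniform lower bound $\min(\epsilon,1)$ on $(\upsilon^b_\xi)'$, which makes the surjectivity of $\upsilon^b_\xi$ — and hence the $\xi=0$ existence argument — fully explicit where the paper is terser.
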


\begin{proof}

We prove the results for the bid side. The proof is similar for the ask side.\\

Let us start with $\xi>0$.\\

$\forall p \in \mathbb{R}$, let us define $g_p : \delta \mapsto \frac{\Lambda^b(\delta)}{\xi}\left(1-\exp\left(-\xi\Delta \left(\delta - p \right)\right)\right)$.\\

$g_p$ is a function of class $C^1$, positive for  $\delta \in (p, +\infty)$ and nonpositive otherwise. Because $g_p(p) = 0$ and $\lim_{\delta \to +\infty} g_p(\delta) = 0$, the supremum of $g_p$ is attained at, at least, one point $\tilde{\delta}^{b*}_\xi(p) \in (p,+\infty)$. The first order condition characterizing the suprema of $g_p$ is
$$\frac{{\Lambda^b}'(\delta)}{\xi}\left(1-\exp\left(-\xi\Delta \left(\delta - p \right)\right)\right) + \Delta \Lambda^b(\delta) \exp\left(-\xi\Delta \left(\delta - p \right)\right) = 0.$$

By rearranging the terms, we obtain $$p=\delta - \frac 1{\xi\Delta} \log\left(1 -\xi \Delta \frac{\Lambda^b\left(\delta\right)}{{\Lambda^b}'\left(\delta\right)}\right).$$

Because $\Lambda^{b}(\delta){\Lambda^{b}}''(\delta) <  2 \left({\Lambda^{b}}'(\delta)\right)^2$, the function
$$j : \delta \mapsto \delta - \frac 1{\xi\Delta} \log\left(1 -\xi \Delta \frac{\Lambda^b\left(\delta\right)}{{\Lambda^b}'\left(\delta\right)}\right)$$ is increasing\footnote{We have indeed
$$ j'(\delta) = 1 + \frac{1- \frac{{\Lambda^b}(\delta){\Lambda^b}''(\delta)}{{{\Lambda^b}'(\delta)}^2}}{1 - \xi \Delta\frac{{\Lambda^b}(\delta)}{{\Lambda^b}'(\delta)} }
 = \frac{2 - \frac{{\Lambda^b}(\delta){\Lambda^b}''(\delta)}{{{\Lambda^b}'(\delta)}^2} - \xi \Delta\frac{{\Lambda^b}(\delta)}{{\Lambda^b}'(\delta)}}{1 - \xi \Delta\frac{{\Lambda^b}(\delta)}{{\Lambda^b}'(\delta)} }
> 0.$$
} and there is therefore a unique maximizer $\tilde{\delta}^{b*}_\xi(p)$ of $g_p$, characterized by
$$p=\tilde{\delta}^{b*}_\xi(p) - \frac 1{\xi\Delta} \log\left(1 -\xi \Delta \frac{\Lambda^b\left(\tilde{\delta}^{b*}_\xi(p)\right)}{{\Lambda^b}'\left(\tilde{\delta}^{b*}_\xi(p)\right)}\right).$$

Moreover, by the implicit function theorem, $p \mapsto \tilde{\delta}^{b*}_\xi(p)$ is a function of class $C^1$ which verifies
$$
\tilde{\delta}^{b*'}_\xi(p) =  \frac{1}{j'\left(\tilde{\delta}^{b*}_\xi(p)\right)} = \frac{1 - \xi \Delta\frac{{\Lambda^b}(\tilde{\delta}^{b*}_\xi(p))}{{\Lambda^b}'(\tilde{\delta}^{b*}_\xi(p))} } {2 - \frac{{\Lambda^b}(\tilde{\delta}^{b*}_\xi(p)){\Lambda^b}''(\tilde{\delta}^{b*}_\xi(p))}{{{\Lambda^b}'(\tilde{\delta}^{b*}_\xi(p))}^2} - \xi \Delta\frac{{\Lambda^b}(\tilde{\delta}^{b*}_\xi(p))}{{\Lambda^b}'(\tilde{\delta}^{b*}_\xi(p))}}
> 0.$$
In particular, $p \mapsto \tilde{\delta}^{b*}_\xi(p)$ is increasing.\\

Moreover, the function $H^b_\xi$ is of class $C^2$, with
$${H^b_\xi}'(p) = -{\Lambda^b}(\tilde{\delta}^{b*}_\xi(p)) \Delta \exp\left(-\xi\Delta \left(\tilde{\delta}^{b*}_\xi(p) - p \right)\right)$$
and
$${H^b_\xi}''(p) = \left(-\tilde{\delta}^{b*'}(p) {\Lambda^b}'(\tilde{\delta}^{b*}_\xi(p)) + \xi\Delta {\Lambda^b}(\tilde{\delta}^{b*}_\xi(p)) \left(\tilde{\delta}^{b*'}_\xi(p) - 1 \right) \right) \Delta \exp\left(-\xi\Delta \left(\tilde{\delta}^{b*}_\xi(p) - p \right)\right).$$
In particular, $H^b_\xi$ is decreasing.\\

We also see, by using the definition of $H^b_\xi$, that  $$\tilde{\delta}^{b*}_\xi(p) = {\Lambda^b}^{-1}\left(\xi H^b_{\xi}(p) - \frac{{H_{\xi}^b}'(p)}{\Delta}\right).$$

In the $\xi= 0$ case, we define $\forall p \in \mathbb{R}$, $h_p : \delta \mapsto \Delta \Lambda^{b}(\delta) \left(\delta - p \right)$.\\

$h_p$ is a function of class $C^1$, positive for $\delta \in (p, +\infty)$ and nonpositive otherwise. By using the same reasoning as in footnote 11, we see that there is a unique maximizer $\tilde{\delta}^{b*}_0(p)$ of $h_p$, characterized by
$$p=\tilde{\delta}^{b*}_0(p) + \frac{\Lambda^{b}\left(\tilde{\delta}^{b*}_0(p)\right)}{{\Lambda^{b}}'\left(\tilde{\delta}^{b*}_0(p)\right)}.$$

As above, by the implicit function theorem, $p \mapsto {\delta}^{b*}_0(p)$ is a function of class $C^1$ which verifies
$$
\tilde{\delta}^{b*'}_0(p)= \frac{1} {2 - \frac{\Lambda^{b}(\tilde{\delta}^{b*}_0(p)){\Lambda^{b}}''(\tilde{\delta}^{b*}_0(p))}{{{\Lambda^{b}}'(\tilde{\delta}^{b*}_0(p))}^2} }
> 0.$$
In particular, $p \mapsto \tilde{\delta}^{b*}_0(p)$ is increasing.\\

Moreover, the function $H^{b}_0$ is of class $C^2$, with
$${H^{b}_0}'(p) = - \Delta {\Lambda^{b}}(\tilde{\delta}^{b*}_0(p))$$
and
$${H^{b}_0}''(p) = - \Delta \tilde{\delta}^{b*'}_0(p) {\Lambda^{b}}'(\tilde{\delta}^{b*}_0(p)).$$
In particular, $H^{b}_0$ is decreasing and we have $$\tilde{\delta}^{b*}_0(p) = {\Lambda^{b}}^{-1}\left( - \frac{{H_{0}^{b}}'(p)}{\Delta}\right).$$

This proves the lemma.\qed\\

\end{proof}

We now prove a comparison principle for Eq.~(\ref{sec3:thetagen}) which gives \emph{a priori} bounds that will enable us to prove the existence of a solution to Eq.~(\ref{sec3:thetagen}) with terminal condition (\ref{sec3:thetagenCT}).

\begin{Lemma}
\label{sec3:cp}
Let $\tau \in [0,T)$.\\

Let $\underline{\theta}: [\tau,T]\times \mathcal{Q} \rightarrow \mathbb{R}$ be a $C^1$ function with respect to time satisfying the subsolution property, i.e.,
$$\forall q\in \mathcal{Q},\quad  \underline{\theta}(T,q) \leq -\ell(|q|)$$
and $\forall (t,q) \in [\tau,T)\times \mathcal{Q}$,
$$ -\partial_t \underline{\theta}(t,q) +
\frac{1}{2}\gamma\sigma^2 q^2
-  1_{q<Q} H^b_{\xi}\left(\frac{\underline{\theta}(t,q) - \underline{\theta}(t,q+\Delta)}{\Delta}\right) - 1_{q>-Q} H^a_{\xi}\left(\frac{\underline{\theta}(t,q) - \underline{\theta}(t,q-\Delta)}{\Delta}\right) \leq 0.$$

Let $\overline{\theta}: [\tau,T]\times \mathcal{Q} \rightarrow \mathbb{R}$ be a $C^1$ function with respect to time satisfying the supersolution property, i.e.,
$$\forall q \in \mathcal{Q},\quad \overline{\theta}(T,q) \geq -\ell(|q|)$$
and $\forall (t,q) \in [\tau,T)\times \mathcal{Q}$,
$$-\partial_t \overline{\theta}(t,q) +
\frac{1}{2}\gamma\sigma^2 q^2 - 1_{q<Q} H^b_{\xi}\left(\frac{\overline{\theta}(t,q) - \overline{\theta}(t,q+\Delta)}{\Delta}\right) - 1_{q>-Q} H^a_{\xi}\left(\frac{\overline{\theta}(t,q) - \overline{\theta}(t,q-\Delta)}{\Delta}\right) \geq 0.$$

Then
$$\overline{\theta} \geq \underline{\theta}.$$

\end{Lemma}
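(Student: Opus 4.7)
I will argue by contradiction, using a maximum--principle/doubling--variable style trick adapted to the discrete inventory $\mathcal{Q}$. Set $w(t,q) = \underline{\theta}(t,q) - \overline{\theta}(t,q)$ on $[\tau,T]\times\mathcal{Q}$; the goal is $w\le 0$. For $\epsilon>0$ I perturb by setting $w^{\epsilon}(t,q) = w(t,q) - \epsilon(T-t)$. Suppose for contradiction that $\sup_{[\tau,T]\times\mathcal{Q}} w > 0$; then for $\epsilon$ small enough, $\sup w^{\epsilon} > 0$ as well. Since $\mathcal{Q}$ is a finite set and $w^{\epsilon}$ is continuous in $t$, the supremum is attained at some $(t^{*},q^{*})$; the terminal inequality $w^{\epsilon}(T,q) = w(T,q) \le 0$ forces $t^{*}<T$.

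The next step is to extract two inequalities at the maximizer $(t^{*},q^{*})$. In $t$: maximality (interior in $(\tau,T)$ or at the left endpoint $\tau$, where we still compare with $t^{*}+h$ for $h>0$) gives $\partial_{t} w^{\epsilon}(t^{*},q^{*}) \le 0$, hence $\partial_{t} w(t^{*},q^{*}) \le -\epsilon$. In $q$: whenever $q^{*}\pm\Delta\in\mathcal{Q}$, one has $w(t^{*},q^{*})\ge w(t^{*},q^{*}\pm\Delta)$, which rearranges into
$$\underline{\theta}(t^{*},q^{*})-\underline{\theta}(t^{*},q^{*}+\Delta) \;\ge\; \overline{\theta}(t^{*},q^{*})-\overline{\theta}(t^{*},q^{*}+\Delta),$$
and symmetrically for $q^{*}-\Delta$. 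Therefore the ``subsolution ratios'' dominate the ``supersolution ratios'' on each relevant side.

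Now I invoke the monotonicity from Lemma \ref{sec3:lemmaH}: both $H^{b}_{\xi}$ and $H^{a}_{\xi}$ are decreasing. Plugging the inequalities of the previous step into this monotonicity yields
$$H^{b}_{\xi}\!\left(\tfrac{\underline{\theta}(t^{*},q^{*})-\underline{\theta}(t^{*},q^{*}+\Delta)}{\Delta}\right) \;\le\; H^{b}_{\xi}\!\left(\tfrac{\overline{\theta}(t^{*},q^{*})-\overline{\theta}(t^{*},q^{*}+\Delta)}{\Delta}\right)$$
when $q^{*}<Q$, and analogously for $H^{a}_{\xi}$ when $q^{*}>-Q$. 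Subtracting the subsolution inequality from the supersolution inequality at $(t^{*},q^{*})$, the $\frac{1}{2}\gamma\sigma^{2}{q^{*}}^{2}$ terms cancel, leaving
$$-\partial_{t} w(t^{*},q^{*}) \;\le\; \mathbf{1}_{q^{*}<Q}\bigl[H^{b}_{\xi}(\underline{p}^{b})-H^{b}_{\xi}(\overline{p}^{b})\bigr] + \mathbf{1}_{q^{*}>-Q}\bigl[H^{a}_{\xi}(\underline{p}^{a})-H^{a}_{\xi}(\overline{p}^{a})\bigr] \;\le\; 0,$$
whereas the time--maximality step gave $-\partial_{t} w(t^{*},q^{*}) \ge \epsilon > 0$. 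This is the desired contradiction, so $w\le 0$ and $\overline{\theta}\ge\underline{\theta}$.

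The main subtlety is purely bookkeeping at the inventory boundaries $q^{*}=\pm Q$: one of the indicator terms in the HJB inequalities disappears, but the corresponding difference $w(t^{*},q^{*}) - w(t^{*},q^{*}\mp\Delta)$ is also absent from the comparison, so the argument still closes with only the surviving $H^{a}_{\xi}$ (resp.\ $H^{b}_{\xi}$) term, which is again nonpositive by monotonicity. The key structural point that makes the proof short is that the discreteness of $\mathcal{Q}$ lets us avoid any Lipschitz or viscosity-solution machinery: the monotonicity of $H^{b}_{\xi}$ and $H^{a}_{\xi}$ alone handles the coupling between inventory levels, and the $\epsilon(T-t)$ perturbation handles the time direction.
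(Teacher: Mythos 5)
Your proof is correct and follows essentially the same route as the paper: the $\epsilon(T-t)$ perturbation, locating the maximizer of $\underline{\theta}-\overline{\theta}-\epsilon(T-t)$, using the monotonicity of $H^{b}_{\xi}$ and $H^{a}_{\xi}$ together with the discrete spatial maximality to cancel the coupling terms, and deriving the contradiction $0\le-\epsilon$ unless the maximum sits at $t=T$. The only cosmetic difference is that you frame it as a contradiction from $\sup w>0$ while the paper directly shows the maximizer must lie at the terminal time and then lets $\epsilon\to 0$; the substance is identical.
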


\begin{proof}

Let $\varepsilon > 0$.\\

Let us consider a couple $(t^*_{\varepsilon}, q^*_{\varepsilon})$ such that
$$ \underline{\theta}(t^*_{\varepsilon},q^*_{\varepsilon}) - \overline{\theta}(t^*_{\varepsilon},q^*_{\varepsilon}) - \varepsilon(T-t^*_{\varepsilon}) =
\sup_{(t,q) \in [\tau,T]\times\mathcal{Q}}\underline{\theta}(t,q) - \overline{\theta}(t,q) - \varepsilon(T-t).
$$

If $t^*_{\varepsilon}\neq T$, then
$$\partial_t \underline{\theta}(t^*_{\varepsilon},q^*_{\varepsilon}) - \partial_t \overline{\theta}(t^*_{\varepsilon},q^*_{\varepsilon}) + \varepsilon \leq 0.$$

Now, by using the definition of the functions $\underline{\theta}$ and $\overline{\theta}$, the above inequality gives
$$ 1_{q^*_{\varepsilon}<Q} H^b_{\xi}\left(\frac{\overline{\theta}(t^*_{\varepsilon},q^*_{\varepsilon}) - \overline{\theta}(t^*_{\varepsilon},q^*_{\varepsilon}+\Delta)}{\Delta}\right) + 1_{q^*_{\varepsilon}>-Q} H^a_{\xi}\left(\frac{\overline{\theta}(t^*_{\varepsilon},q^*_{\varepsilon}) - \overline{\theta}(t^*_{\varepsilon},q^*_{\varepsilon}-\Delta)}{\Delta}\right)$$$$ -  1_{q^*_{\varepsilon}<Q} H^b_{\xi}\left(\frac{\underline{\theta}(t^*_{\varepsilon},q^*_{\varepsilon}) - \underline{\theta}(t^*_{\varepsilon},q^*_{\varepsilon}+\Delta)}{\Delta}\right) - 1_{q^*_{\varepsilon}>-Q} H^a_{\xi}\left(\frac{\underline{\theta}(t^*_{\varepsilon},q^*_{\varepsilon}) - \underline{\theta}(t^*_{\varepsilon},q^*_{\varepsilon}-\Delta)}{\Delta}\right) \le -\varepsilon. $$

But, by definition of $(t^*_{\varepsilon},q^*_{\varepsilon})$, since $H^b_{\xi}$ and $H^a_{\xi}$ are decreasing functions, we have
$$1_{q^*_{\varepsilon}<Q} \left(H^b_{\xi}\left(\frac{\overline{\theta}(t^*_{\varepsilon},q^*_{\varepsilon}) - \overline{\theta}(t^*_{\varepsilon},q^*_{\varepsilon}+\Delta)}{\Delta}\right) - H^b_{\xi}\left(\frac{\underline{\theta}(t^*_{\varepsilon},q^*_{\varepsilon}) - \underline{\theta}(t^*_{\varepsilon},q^*_{\varepsilon}+\Delta)}{\Delta}\right)\right) \ge 0,$$
and
$$1_{q^*_{\varepsilon}>-Q} \left(H^a_{\xi}\left(\frac{\overline{\theta}(t^*_{\varepsilon},q^*_{\varepsilon}) - \overline{\theta}(t^*_{\varepsilon},q^*_{\varepsilon}+\Delta)}{\Delta}\right) - H^a_{\xi}\left(\frac{\underline{\theta}(t^*_{\varepsilon},q^*_{\varepsilon}) - \underline{\theta}(t^*_{\varepsilon},q^*_{\varepsilon}+\Delta)}{\Delta}\right)\right) \ge 0.$$

This leads to $0  \leq - \varepsilon$. By contradiction, we must have $t^*_{\varepsilon} = T$.\\

Therefore,
$$\sup_{(t,q) \in [\tau,T]\times\mathcal{Q}}\underline{\theta}(t,q) - \overline{\theta}(t,q) - \varepsilon(T-t) = \underline{\theta}(T,q^*_{\varepsilon}) - \overline{\theta}(T,q^*_{\varepsilon}) \le 0.$$

As a consequence, $\forall (t,q) \in [\tau,T)\times \mathcal{Q}$,
$$\underline{\theta}(t,q) - \overline{\theta}(t,q) \le \varepsilon T.$$

By sending $\varepsilon$ to $0$, we obtain $\underline{\theta} \le \overline{\theta}$.\qed\\

\end{proof}

Let us now come to the existence and uniqueness of a solution to Eq.~(\ref{sec3:thetagen}) with terminal condition (\ref{sec3:thetagenCT}).

\begin{Theorem}
\label{sec3:theotheta}
  There exists a unique function $\theta : [0,T]\times \mathcal{Q} \rightarrow \mathbb{R}$, $C^1$ in time, solution of Eq.~(\ref{sec3:thetagen}) with terminal condition (\ref{sec3:thetagenCT}).
\end{Theorem}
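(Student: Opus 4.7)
The plan is to view Eq.~(\ref{sec3:thetagen}) as a finite-dimensional ODE system on $\R^{|\mathcal{Q}|}$ (since $\mathcal{Q}$ is finite) and combine a local existence argument with \emph{a priori} bounds obtained from Lemma~\ref{sec3:cp}, while uniqueness will follow immediately from the comparison principle.

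First, I would rewrite the problem as a backward Cauchy problem $\theta'(t) = F(\theta(t))$ with $\theta(T) = (-\ell(|q|))_{q \in \mathcal{Q}}$, where the components of $F$ depend on $\theta$ only through the finite differences $\frac{\theta(q)-\theta(q\pm\Delta)}{\Delta}$. By Lemma~\ref{sec3:lemmaH}, $H^b_{\xi}$ and $H^a_{\xi}$ are of class $C^2$, hence locally Lipschitz, so $F$ is locally Lipschitz on $\R^{|\mathcal{Q}|}$. The Cauchy--Lipschitz (Picard--Lindelöf) theorem then yields a unique maximal $C^1$ solution on some interval $(\tau^*, T]$ with $\tau^* \in [0,T)$.

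Next, to extend the solution to $[0,T]$ I would construct explicit sub- and supersolutions. For a supersolution, take $\overline{\theta}(t,q) = C(T-t)$ with $C \ge H^b_{\xi}(0) + H^a_{\xi}(0) + \tfrac{1}{2}\gamma\sigma^2 Q^2$: since $\overline{\theta}$ is constant in $q$, the arguments of $H^b_\xi, H^a_\xi$ vanish, and one checks directly that $-\partial_t \overline{\theta} + \tfrac12\gamma\sigma^2 q^2 - H^b_\xi(0) - H^a_\xi(0) \ge 0$, while $\overline{\theta}(T,q)=0 \ge -\ell(|q|)$. For a subsolution, take $\underline{\theta}(t,q) = -\tfrac12\gamma\sigma^2 Q^2 (T-t) - \ell(|q|)$; using $H^b_\xi, H^a_\xi \ge 0$ (attained at $\delta = p$ the supremum definition gives a nonnegative value) and $q^2 \le Q^2$, the subsolution inequality holds, and $\underline\theta(T,q) = -\ell(|q|)$. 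Applying Lemma~\ref{sec3:cp} on any subinterval $[\tau, T]$ of the domain of the maximal solution yields $\underline{\theta} \le \theta \le \overline{\theta}$, which are uniform bounds on $[0,T]$.

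These uniform bounds prevent the maximal solution from exploding, so $\tau^* = 0$ and $\theta$ is defined on all of $[0,T]$. Finally, uniqueness follows from Lemma~\ref{sec3:cp}: if $\theta_1$ and $\theta_2$ are two solutions, then each is simultaneously a sub- and supersolution of the other on $[0,T]$, so the comparison principle gives both $\theta_1 \le \theta_2$ and $\theta_2 \le \theta_1$. The main (mild) obstacle is choosing a subsolution consistent with the terminal condition $-\ell(|q|)$ while ensuring the subsolution inequality holds uniformly in $q \in \mathcal{Q}$; the choice above works because the penalty term $-\tfrac12\gamma\sigma^2 Q^2 (T-t)$ dominates the running term $\tfrac12\gamma\sigma^2 q^2$, and the nonnegativity of $H^b_\xi, H^a_\xi$ absorbs the Hamiltonian contribution.
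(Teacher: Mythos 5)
Your proof is correct and follows essentially the same route as the paper: local existence via Cauchy--Lipschitz for the finite-dimensional backward ODE system, \emph{a priori} bounds via the comparison principle of Lemma~\ref{sec3:cp}, and hence global existence on $[0,T]$. The only (harmless) differences are that the paper obtains the lower bound by observing that $t \mapsto \theta(t,q) + \frac{1}{2}\gamma\sigma^2 q^2 (T-t)$ is decreasing rather than by exhibiting an explicit subsolution, and it deduces uniqueness from Cauchy--Lipschitz rather than from the comparison principle; both of your variants are valid.
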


\begin{proof}

Eq.~(\ref{sec3:thetagen}) with terminal condition (\ref{sec3:thetagenCT}) can be regarded as a backward Cauchy problem. Since $H^b_\xi$ and $H^a_\xi$ are functions of class $C^1$, by Cauchy-Lipschitz, there exists $\tau \in [0,T)$ and a function $\theta : (\tau,T]\times \mathcal{Q} \rightarrow \mathbb{R}$, $C^1$ in time, solution of Eq.~(\ref{sec3:thetagen}) on $(\tau,T]$ with terminal condition (\ref{sec3:thetagenCT}).\\

It is straightforward to verify that $\forall q \in \mathcal{Q}, t \in (\tau,T] \mapsto \theta(t,q)  + \frac 12 \gamma \sigma^2 q^2 (T-t)$ is a decreasing function. Therefore, the only reason why there would not be a global solution on $[0,T]$ is because $\sup_{q \in \mathcal{Q}} \theta(t,q)$ blows up at $\tau>0$. However, by using Lemma~\ref{sec3:cp}, we know that $$\overline{\theta}(t,q) = (H^b_\xi(0) + H^a_\xi(0))(T-t)$$ defines a supersolution of Eq.~(\ref{sec3:thetagen}) with terminal condition (\ref{sec3:thetagenCT}), and therefore that $\sup_{q \in \mathcal{Q}} \theta(t,q) \le (H^b_\xi(0) + H^a_\xi(0))(T-t)$ cannot blow up in finite time.\\

The conclusion is that $\theta$ is in fact defined on $[0,T]\times \mathcal{Q}$. Uniqueness comes then for the Cauchy-Lipschitz theorem.\qed\\
\end{proof}

The existence (and uniqueness) of a function $\theta$ solution of Eq.~(\ref{sec3:thetagen}) with terminal condition~(\ref{sec3:thetagenCT}) enables us to find a solution to the HJB equation associated with Model A or Model~B. We will use a verification argument in the next subsection in order to prove that the solution to the HJB equation we obtain by this way is indeed the value function of the stochastic optimal control problem under consideration. However, before that, a remark needs to be made on $\theta$ and on Eq.~(\ref{sec3:thetagen}) in the specific case -- often (not to say almost always) used in the academic literature -- of exponential intensities.\\

If we have $\Lambda^b(\delta)= \Lambda^a(\delta) = A e^{-k\delta} =: \Lambda(\delta)$, then we obtain (by straightforward computations)
$$H_\xi(p) := H^b_\xi(p) = H^a_\xi(p) = \frac{A\Delta}{k} C_\xi \exp(-kp),$$
where $$C_\xi = \begin{cases} \left(1 + \frac{\xi\Delta}{k}\right)^{-\frac{k}{\xi\Delta} - 1} &\mbox{if } \xi > 0 \\
e^{-1} & \mbox{if } \xi=0. \end{cases}$$
By using Eq.~(\ref{sec3:thetagen}), the function
$$ v: (t,q) \in [0,T] \times \mathcal{Q} \mapsto v_q(t) = \exp\left(\frac{k}{\Delta} \theta(t,q)\right)$$ is solution of the linear system of ordinary differential equations
\begin{equation}
\label{sec3:eqv}
\forall q \in \mathcal{Q}, \forall t \in [0,T], -\partial_t v_q(t) + \frac{1}{2\Delta}k\gamma\sigma^2 q^2 v_q(t) - A C_\xi \left(1_{q<Q} v_{q+\Delta}(t) + 1_{q>-Q} v_{q-\Delta}(t) \right) = 0,
\end{equation}
with terminal condition $\forall q \in \mathcal{Q}, v_q(T) = \exp\left(-\frac{k}{\Delta} \ell(|q|)\right)$.

\subsection{Verification argument}

We are now ready to solve the stochastic optimal control problems associated with Model~A and Model B. We start with Model A.

\begin{Theorem}
\label{sec3:theomodelA}
Let us consider the solution $\theta$ of Eq.~(\ref{sec3:thetagen}) with terminal condition (\ref{sec3:thetagenCT}) for $\xi=\gamma$.\\

Then, $u : (t,x,q,S) \mapsto -\exp(-\gamma(x+qS + \theta(t,q)))$  defines a solution to Eq.~(\ref{sec3:HJBModelA}) with terminal condition (\ref{sec3:HJBModelACT}), and
$$u(t,x,q,S) = \sup_{(\delta^b_s)_{s\ge t}, (\delta^a_s)_{s\ge t}  \in \mathcal{A}(t)} \mathbb{E}\left[- \exp\left(-\gamma\left(X^{t,x,\delta^b,\delta^a}_T+q^{t,q,\delta^b,\delta^a}_T S^{t,S}_T- \ell(|q^{t,q,\delta^b,\delta^a}_T|)\right)\right) \right],$$
where $\mathcal{A}(t)$ is the set of predictable processes on $[t,T]$, bounded from below and where
$$dS^{t,S}_s = \sigma dW_s, \qquad S^{t,S}_t = S,$$
$$dX^{t,x,\delta^b,\delta^a}_s = (S_s + \delta^a_s)  \Delta dN^a_s - (S_s - \delta^b_s)  \Delta dN^b_s , \qquad  X^{t,x,\delta^b,\delta^a}_t = x,$$
$$dq^{t,q,\delta^b,\delta^a}_s = \Delta dN^b_s - \Delta dN^a_s, \qquad  q^{t,q,\delta^b,\delta^a}_t = q,$$
where the point processes $N^b$ and $N^a$ have stochastic intensity $(\lambda^b_s)_s$ and $(\lambda^a_s)_s$ given by $\lambda^b_s = \Lambda^b(\delta^b_s) 1_{q_{s-} < Q}$ and $\lambda^a_s = \Lambda^a(\delta^a_s) 1_{q_{s-} > -Q}$.\\

The optimal bid and ask quotes $S^b_t = S_t - \delta^{b*}_t$ (for $q_{t-}< Q$) and $S^a_t = S_t + \delta^{a*}_t$ (for $q_{t-}>-Q$) are characterized by\begin{equation}
\label{sec3:deltaoptimal}
\delta^{b*}_t = \tilde{\delta}^{b*}_\gamma\left(\frac{\theta(t,q_{t-}) - \theta(t,q_{t-}+\Delta)}{\Delta}\right) \text{ and } \delta^{a*}_t = \tilde{\delta}^{a*}_\gamma\left(\frac{\theta(t,q_{t-}) - \theta(t,q_{t-}-\Delta)}{\Delta}\right),
\end{equation}where the functions $\tilde{\delta}^{b*}_\gamma(\cdot)$ and $\tilde{\delta}_\gamma^{a*}(\cdot)$ are defined in Eqs.~(\ref{sec3:deltab}) and (\ref{sec3:deltaa}).
\end{Theorem}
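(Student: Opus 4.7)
The plan is a standard verification argument in three stages: substitute the ansatz into the HJB equation, derive a supermartingale inequality via Itô's formula for general admissible controls, and finally verify that the proposed feedback controls attain equality.

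First, I would plug the ansatz $u(t,x,q,S) = -\exp(-\gamma(x+qS+\theta(t,q)))$ directly into Eq.~(\ref{sec3:HJBModelA}). Computing the partial derivatives, the term $-\partial_t u - \tfrac12\sigma^2\partial^2_{SS}u$ produces $\gamma u \cdot(\partial_t\theta - \tfrac12\gamma\sigma^2 q^2)$, while the jump terms reduce, after factoring $-u>0$, to $\gamma u$ times expressions of the form $\tfrac{\Lambda^{b/a}(\delta)}{\gamma}(1-\exp(-\gamma(\Delta\delta+\theta(t,q\pm\Delta)-\theta(t,q))))$. By definition of $H^b_\gamma$ and $H^a_\gamma$ and by Lemma~\ref{sec3:lemmaH}, the suprema are attained at the claimed $\tilde{\delta}^{b*}_\gamma$ and $\tilde{\delta}^{a*}_\gamma$ evaluated at $\frac{\theta(t,q)-\theta(t,q\pm\Delta)}{\Delta}$. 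The resulting equation is exactly Eq.~(\ref{sec3:thetaModelA}), which is Eq.~(\ref{sec3:thetagen}) with $\xi=\gamma$, and the terminal condition matches by construction. So $u$ solves the HJB equation.

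Next, I fix arbitrary admissible controls $(\delta^b_s)_{s\ge t},(\delta^a_s)_{s\ge t}\in\mathcal{A}(t)$ and apply Itô's formula to the process $s\mapsto u(s,X_s,q_s,S_s)$ between $t$ and $T\wedge\tau_n$, where $\tau_n$ is a localizing stopping time (e.g., $\tau_n=\inf\{s\ge t: |S_s|\ge n\}$). Because $\theta$ is bounded on $[0,T]\times\mathcal{Q}$ (by Theorem~\ref{sec3:theotheta} and the finiteness of $\mathcal{Q}$), the Brownian part is a local martingale, while the two compensated jump parts are local martingales as well. The drift contribution is precisely the left-hand side of the HJB equation evaluated at $(\delta^b_s,\delta^a_s)$, which is nonpositive because the suprema are, by construction, $\ge$ the value at any specific $\delta$. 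Hence $u(s,X_s,q_s,S_s)$ is a local supermartingale, and after taking expectations up to $T\wedge\tau_n$ we obtain $u(t,x,q,S)\ge \mathbb{E}[u(T\wedge\tau_n, X_{T\wedge\tau_n},q_{T\wedge\tau_n},S_{T\wedge\tau_n})]$.

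The main obstacle is the limit $n\to\infty$: since $u$ is an exponential of a sum involving $X_s$ and $q_s S_s$, I need uniform integrability. The key observations are that $q_s\in\mathcal{Q}$ is bounded, $\theta$ is bounded, and the cash gains per trade satisfy $\Delta\delta^b_s, \Delta\delta^a_s\ge -c$ for some constant $c$ by admissibility, while the number of trades on $[t,T]$ has all exponential moments finite (because $\Lambda^b,\Lambda^a$ are bounded above by their values at any lower bound of the controls, giving Poisson-like tails). Combining these with the martingale representation of $e^{-\gamma q_s S_s}$ (via the Brownian integrability of $S$) yields $\mathbb{E}[\sup_{s\in[t,T]}|u(s,X_s,q_s,S_s)|]<\infty$, so dominated convergence applies and we conclude $u(t,x,q,S)\ge \mathbb{E}[-\exp(-\gamma(X_T+q_T S_T-\ell(|q_T|)))]$.

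Finally, to obtain equality I insert the feedback controls $\delta^{b*}_s,\delta^{a*}_s$ defined by Eq.~(\ref{sec3:deltaoptimal}). By Lemma~\ref{sec3:lemmaH}, the maps $p\mapsto \tilde\delta^{b*}_\gamma(p), \tilde\delta^{a*}_\gamma(p)$ are continuous, and the arguments $\frac{\theta(t,q_{t-})-\theta(t,q_{t-}\pm\Delta)}{\Delta}$ live in a finite set of continuous-in-time functions on $[0,T]$, so the feedback controls are predictable and bounded, hence admissible. With these controls the drift term in the Itô expansion vanishes identically, so $u(s,X_s,q_s,S_s)$ becomes a true (local) martingale; the same integrability bounds as above upgrade it to a martingale, yielding equality and identifying $u$ as the value function with $(\delta^{b*},\delta^{a*})$ as optimal controls.
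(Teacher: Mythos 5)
Your proposal is correct and follows essentially the same verification argument as the paper: plug in the ansatz to reduce the HJB equation to Eq.~(\ref{sec3:thetagen}) with $\xi=\gamma$, apply It\^o's formula along an arbitrary admissible control to get the supermartingale inequality, and invoke Lemma~\ref{sec3:lemmaH} to show the feedback controls of Eq.~(\ref{sec3:deltaoptimal}) attain equality. The only difference is organizational: you localize with stopping times and pass to the limit via uniform integrability, whereas the paper skips localization and instead directly bounds $\mathbb{E}\bigl[\int_t^T (\partial_S u)^2\,ds\bigr]$ and the compensated jump integrands (via H\"older with exponent $3$, exponential moments of the point processes under the lower bound on the controls, and exponential-martingale estimates for $\int \sigma q\,dW$) — the same estimates you sketch for your uniform-integrability step, so the technical content is identical.
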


\begin{proof}

Let us consider $t\in [0,T)$, and two processes $(\delta^b_s)_{s\ge t}$ and $(\delta^a_s)_{s\ge t}$ in  $\mathcal{A}(t)$. We have
\begin{equation}
\label{verif1}u(T,X^{t,x,\delta^b,\delta^a}_{T-},q^{t,q,\delta^b,\delta^a}_{T-}, S^{t,S}_T) = u(t,x,q,S) + \int_t^T \partial_t u(s,X^{t,x,\delta^b,\delta^a}_{s-},q^{t,q,\delta^b,\delta^a}_{s-}, S^{t,S}_s) ds
\end{equation}$$ + \sigma \int_t^T \partial_{S} u(s,X^{t,x,\delta^b,\delta^a}_{s-},q^{t,q,\delta^b,\delta^a}_{s-}, S^{t,S}_s) dW_s + \frac 12 \sigma^2 \int_t^T \partial^2_{SS} u(s,X^{t,x,\delta^b,\delta^a}_{s-},q^{t,q,\delta^b,\delta^a}_{s-}, S^{t,S}_s) ds$$$$ + \int_t^T \left(u(s,X^{t,x,\delta^b,\delta^a}_{s-} - \Delta S^{t,S}_s + \Delta \delta^b_s ,q^{t,q,\delta^b,\delta^a}_{s-} + \Delta, S^{t,S}_s) - u(s,X^{t,x,\delta^b,\delta^a}_{s-},q^{t,q,\delta^b,\delta^a}_{s-}, S^{t,S}_s)\right)dN^b_s$$
$$ + \int_t^T \left(u(s,X^{t,x,\delta^b,\delta^a}_{s-} + \Delta S^{t,S}_s + \Delta \delta^a_s ,q^{t,q,\delta^b,\delta^a}_{s-} - \Delta, S^{t,S}_s) - u(s,X^{t,x,\delta^b,\delta^a}_{s-},q^{t,q,\delta^b,\delta^a}_{s-}, S^{t,S}_s)\right)dN^a_s.$$
Let $C < 0$. If almost surely $\forall s \in [t,T), \delta^b_s \ge -C$, then
\begin{eqnarray*}
&&\mathbb{E}\left[\int_t^T \partial_S u\left(s,X^{t,x,\delta^b,\delta^a}_{s-},q^{t,q,\delta^b,\delta^a}_{s-}, S^{t,S}_s\right)
^2ds\right]\\
&\le& \mathbb{E}\left[\int_t^T{\gamma^2q^{t,q,\delta^b,\delta^a}_{s-}}^2\exp\left(-2\gamma\left(X^{t,x,\delta^b,\delta^a}_{s-} + q^{t,q,\delta^b,\delta^a}_{s-} S^{t,S}_s + \theta(s,q^{t,q,\delta^b,\delta^a}_{s-}) \right)\right) ds\right]\\
&\le& \gamma^2 Q^2\exp\left(2\gamma \|\theta\|_{L^\infty([t,T]\times\mathcal{Q})}\right)\\
&&\times\mathbb{E}\left[\int_t^T\exp\left(-2\gamma\left(x+qS + \int_t^s \delta^b_u dN^b_u + \int_t^s \delta^a_u dN^a_u + \int_t^s \sigma q^{t,q,\delta^b,\delta^a}_{u-} dW_u \right)\right) ds\right]\\
&\le& \gamma^2 Q^2 \exp\left(2\gamma \|\theta\|_{L^\infty([t,T]\times\mathcal{Q})}\right) \exp(-2\gamma (x+qS))\\
&&\times \mathbb{E}\left[\int_t^T \exp\left(-6\gamma\int_t^s \delta^b_u dN^b_u \right) ds\right]^{\frac 13} \mathbb{E}\left[\int_t^T \exp\left(-6\gamma\int_t^s \delta^a_u dN^a_u \right) ds\right]^{\frac 13}\\
&&\times \mathbb{E}\left[\int_t^T \exp\left(-6\gamma \int_t^s \sigma q^{t,q,\delta^b,\delta^a}_{u-} dW_u\right)ds\right]^{\frac 13}\\
&\le& \gamma^2 Q^2\exp\left(2\gamma \|\theta\|_{L^\infty([t,T]\times\mathcal{Q})}\right) \exp(-2\gamma (x+qS))\\
&&\times \mathbb{E}\left[\exp\left(6\gamma C(N^b_T-N^b_t)\right)(T-t)\right]^{\frac 13} \mathbb{E}\left[\exp\left(6\gamma C(N^a_T-N^a_t)\right)(T-t)\right]^{\frac 13}\\
&&\times \left(\int_t^T \mathbb{E}\left[\exp\left(-6\gamma \int_t^s \sigma q^{t,q,\delta^b,\delta^a}_{u-} dW_u - 18\gamma^2 \int_t^s \sigma^2 {q^{t,q,\delta^b,\delta^a}_{u-}}^2 du\right)\right]ds\right)^{\frac 13}\\
&&\times \exp\left(6\gamma^2 (T-t) \sigma^2 Q^2\right)\\
&\le& \gamma^2 Q^2\exp\left(2\gamma \|\theta\|_{L^\infty([t,T]\times\mathcal{Q})}\right) \exp(-2\gamma (x+qS))\\
&&\times \exp\left(\Lambda^b(-C)(T-t) \left(\exp(6\gamma C)-1\right) \right) \exp\left(\Lambda^a(-C)(T-t) \left(\exp(6\gamma C)-1\right) \right) (T-t)^{\frac 23} \\
&&\times  \exp\left(6\gamma^2 (T-t) \sigma^2 Q^2\right)(T-t)^{\frac 13}<+\infty.\\
\end{eqnarray*}
We also have
\begin{eqnarray*}
&&\mathbb{E}\left[\int_t^T |u(s,X^{t,x,\delta^b,\delta^a}_{s-},q^{t,q,\delta^b,\delta^a}_{s-}, S^{t,S}_s)
|\Lambda^b(\delta^b_s)1_{q^{t,q,\delta^b,\delta^a}_{s-}<Q}ds\right]\\
&\le& \Lambda^b(-C)\mathbb{E}\left[\int_t^T\exp\left(-\gamma\left(X^{t,x,\delta^b,\delta^a}_{s-} + q^{t,q,\delta^b,\delta^a}_{s-} S^{t,S}_s + \theta(s,q^{t,q,\delta^b,\delta^a}_{s-}) \right)\right) ds\right]\\
&\le& \Lambda^b(-C) \exp\left(\gamma \|\theta\|_{L^\infty([t,T]\times\mathcal{Q})}\right)\\
&&\times\mathbb{E}\left[\int_t^T\exp\left(-\gamma\left(x+qS + \int_t^s \delta^b_u dN^b_u + \int_t^s \delta^a_u dN^a_u + \int_t^s \sigma q^{t,q,\delta^b,\delta^a}_{u-} dW_u \right)\right) ds\right]\\
&\le& \Lambda^b(-C) \exp\left(\gamma \|\theta\|_{L^\infty([t,T]\times\mathcal{Q})}\right) \exp(-\gamma (x+qS))\\
&&\times \mathbb{E}\left[\int_t^T \exp\left(-3\gamma\int_t^s \delta^b_u dN^b_u \right) ds\right]^{\frac 13} \mathbb{E}\left[\int_t^T \exp\left(-3\gamma\int_t^s \delta^a_u dN^a_u \right) ds\right]^{\frac 13}\\
&&\times \mathbb{E}\left[\int_t^T \exp\left(-3\gamma \int_t^s \sigma q^{t,q,\delta^b,\delta^a}_{u-} dW_u\right)ds\right]^{\frac 13}\\
&\le& \Lambda^b(-C) \exp\left(\gamma \|\theta\|_{L^\infty([t,T]\times\mathcal{Q})}\right) \exp(-\gamma (x+qS))\\
&&\times \exp\left(\Lambda^b(-C)(T-t) \left(\exp(3\gamma C)-1\right) \right) \exp\left(\Lambda^a(-C)(T-t) \left(\exp(3\gamma C)-1\right) \right) (T-t)^{\frac 23} \\
&&\times  \exp\left(\frac 32\gamma^2 (T-t) \sigma^2 Q^2\right)(T-t)^{\frac 13}\\
&<&+\infty.
\end{eqnarray*}

Similarly
$$\mathbb{E}\left[\int_t^T |u(s,X^{t,x,\delta^b,\delta^a}_{s-} - \Delta S^{t,S}_s + \Delta \delta^b_s ,q^{t,q,\delta^b,\delta^a}_{s-} + \Delta, S^{t,S}_s)
|\Lambda^b(\delta^b_s)1_{q^{t,q,\delta^b,\delta^a}_{s-}<Q}ds\right] < +\infty,$$
$$\mathbb{E}\left[\int_t^T |u(s,X^{t,x,\delta^b,\delta^a}_{s-},q^{t,q,\delta^b,\delta^a}_{s-}, S^{t,S}_s)
|\Lambda^a(\delta^a_s)1_{q^{t,q,\delta^b,\delta^a}_{s-}>-Q}ds\right] < +\infty,$$
and
$$\mathbb{E}\left[\int_t^T |u(s,X^{t,x,\delta^b,\delta^a}_{s-} + \Delta S^{t,S}_s + \Delta \delta^a_s ,q^{t,q,\delta^b,\delta^a}_{s-} - \Delta, S^{t,S}_s)
|\Lambda^a(\delta^a_s)1_{q^{t,q,\delta^b,\delta^a}_{s-}>-Q}ds\right] < +\infty.$$

By taking expectations in Eq.~(\ref{verif1}), we obtain
$$\mathbb{E}\left[u(T,X^{t,x,\delta^b,\delta^a}_{T-},q^{t,q,\delta^b,\delta^a}_{T-}, S^{t,S}_T)\right] = u(t,x,q,S)$$$$ + \mathbb{E}\left[\int_t^T \left(\partial_t u(s,X^{t,x,\delta^b,\delta^a}_{s-},q^{t,q,\delta^b,\delta^a}_{s-}, S^{t,S}_s) + \frac 12 \sigma^2 \partial^2_{SS} u(s,X^{t,x,\delta^b,\delta^a}_{s-},q^{t,q,\delta^b,\delta^a}_{s-}, S^{t,S}_s)\right) ds\right]$$$$ + \mathbb{E}\left[\int_t^T \left(u(s,X^{t,x,\delta^b,\delta^a}_{s-} - \Delta S^{t,S}_s + \Delta \delta^b_s ,q^{t,q,\delta^b,\delta^a}_{s-} + \Delta, S^{t,S}_s)\right.\right.$$$$\left. - u(s,X^{t,x,\delta^b,\delta^a}_{s-},q^{t,q,\delta^b,\delta^a}_{s-}, S^{t,S}_s)\right)\Lambda^b(\delta^b_s)1_{q^{t,q,\delta^b,\delta^a}_{s-}<Q}ds$$
$$ + \int_t^T \left(u(s,X^{t,x,\delta^b,\delta^a}_{s-} + \Delta S^{t,S}_s + \Delta \delta^a_s ,q^{t,q,\delta^b,\delta^a}_{s-} - \Delta, S^{t,S}_s)\right.$$$$\left.\left. - u(s,X^{t,x,\delta^b,\delta^a}_{s-},q^{t,q,\delta^b,\delta^a}_{s-}, S^{t,S}_s)\right)\Lambda^a(\delta^a_s)1_{q^{t,q,\delta^b,\delta^a}_{s-}>-Q}ds\right].$$

By definition of $u$, we have therefore
$$\mathbb{E}\left[u(T,X^{t,x,\delta^b,\delta^a}_{T-},q^{t,q,\delta^b,\delta^a}_{T-}, S^{t,S}_T)\right] = u(t,x,q,S)$$$$ + \mathbb{E}\left[\int_t^T u(s,X^{t,x,\delta^b,\delta^a}_{s-},q^{t,q,\delta^b,\delta^a}_{s-}, S^{t,S}_s)\left( -\gamma \partial_t \theta(s,q^{t,q,\delta^b,\delta^a}_{s-})  + \frac 12 \gamma^2 \sigma^2 {q^{t,q,\delta^b,\delta^a}_{s-}}^2\right) ds\right]$$$$ + \mathbb{E}\Bigg[\int_t^T -u(s,X^{t,x,\delta^b,\delta^a}_{s-},q^{t,q,\delta^b,\delta^a}_{s-}, S^{t,S}_s)\Lambda^b(\delta^b_s)1_{q^{t,q,\delta^b,\delta^a}_{s-}<Q}$$$$\left(1-\exp\left(-\gamma\left(\Delta \delta^b_s + \theta(s,q^{t,q,\delta^b,\delta^a}_{s-}+\Delta) - \theta(s,q^{t,q,\delta^b,\delta^a}_{s-})\right)\right)\right)ds\Bigg]$$
$$ + \mathbb{E}\Bigg[\int_t^T -u(s,X^{t,x,\delta^b,\delta^a}_{s-},q^{t,q,\delta^b,\delta^a}_{s-}, S^{t,S}_s)\Lambda^a(\delta^a_s)1_{q^{t,q,\delta^b,\delta^a}_{s-}>-Q}$$$$\left(1-\exp\left(-\gamma\left(\Delta \delta^a_s + \theta(s,q^{t,q,\delta^b,\delta^a}_{s-}-\Delta) - \theta(s,q^{t,q,\delta^b,\delta^a}_{s-})\right)\right)\right)ds\Bigg].$$

By definition of $\theta$, we have the inequality
$$\mathbb{E}\left[u(T,X^{t,x,\delta^b,\delta^a}_{T},q^{t,q,\delta^b,\delta^a}_{T}, S^{t,S}_T)\right] = \mathbb{E}\left[u(T,X^{t,x,\delta^b,\delta^a}_{T-},q^{t,q,\delta^b,\delta^a}_{T-}, S^{t,S}_T)\right]  \le u(t,x,q,S),$$
i.e.,
$$\mathbb{E}\left[-\exp\left(-\gamma(X^{t,x,\delta^b,\delta^a}_{T}+q^{t,q,\delta^b,\delta^a}_{T} S^{t,S}_T - \ell(|q^{t,q,\delta^b,\delta^a}_{T}|))\right)\right] \le u(t,x,q,S).$$

Furthermore, by Lemma \ref{sec3:lemmaH}, there is equality in the above inequality if $(\delta^b_s)_{s\ge t}$ and $(\delta^a_s)_{s\ge t}$ are given (in closed-loop) by Eq.~(\ref{sec3:deltaoptimal}).\\

Therefore, $u$ is indeed the value function
$$u(t,x,q,S) = \sup_{(\delta^b_s)_{s\ge t}, (\delta^a_s)_{s\ge t}  \in \mathcal{A}(t)} \mathbb{E}\left[- \exp\left(-\gamma\left(X^{t,x,\delta^b,\delta^a}_T+q^{t,q,\delta^b,\delta^a}_T S^{t,S}_T- \ell(|q^{t,q,\delta^b,\delta^a}_T|)\right)\right) \right],$$
and the optimal quotes are given in closed-loop by Eq.~(\ref{sec3:deltaoptimal}).\qed\\

\end{proof}

For Model B, a similar result holds.

\begin{Theorem}
\label{sec3:theomodelB}
Let us consider the solution $\theta$ of Eq.~(\ref{sec3:thetagen}) with terminal condition (\ref{sec3:thetagenCT}) for $\xi=0$.\\

Then, $u : (t,x,q,S) \mapsto x+qS + \theta(t,q)$  defines a solution to Eq.~(\ref{sec3:HJBModelB}) with terminal condition (\ref{sec3:HJBModelBCT}), and
$$u(t,x,q,S) =$$$$ \sup_{(\delta^b_s)_{s\ge t}, (\delta^a_s)_{s\ge t}  \in \mathcal{A}(t)} \mathbb{E}\left[X^{t,x,\delta^b,\delta^a}_T+q^{t,q,\delta^b,\delta^a}_T S^{t,S}_T- \ell(|q^{t,q,\delta^b,\delta^a}_T|) - \frac 12 \gamma \sigma^2 \int_t^T {q^{t,q,\delta^b,\delta^a}_s}^2 ds  \right],$$
where $\mathcal{A}(t)$ is the set of predictable processes on $[t,T]$, bounded from below and where
$$dS^{t,S}_s = \sigma dW_s, \qquad S^{t,S}_t = S,$$
$$dX^{t,x,\delta^b,\delta^a}_s = (S_s + \delta^a_s)  \Delta dN^a_s - (S_s - \delta^b_s)  \Delta dN^b_s , \qquad  X^{t,x,\delta^b,\delta^a}_t = x,$$
$$dq^{t,q,\delta^b,\delta^a}_s = \Delta dN^b_s - \Delta dN^a_s, \qquad  q^{t,q,\delta^b,\delta^a}_t = q,$$
where the point processes $N^b$ and $N^a$ have stochastic intensity $(\lambda^b_s)_s$ and $(\lambda^a_s)_s$ given by $\lambda^b_s = \Lambda^b(\delta^b_s) 1_{q_{s-} < Q}$ and $\lambda^a_s = \Lambda^a(\delta^a_s) 1_{q_{s-} > -Q}$.\\

The optimal bid and ask quotes $S^b_t = S_t - \delta^{b*}_t$ (for $q_{t-}< Q$) and $S^a_t = S_t + \delta^{a*}_t$ (for $q_{t-}>-Q$) are given by
\begin{equation}
\label{sec3:deltaoptimal2}
\delta^{b*}_t = \tilde{\delta}^{b*}_0\left(\frac{\theta(t,q_{t-}) - \theta(t,q_{t-}+\Delta)}{\Delta}\right) \text{ and } \delta^{a*}_t = \tilde{\delta}^{a*}_0\left(\frac{\theta(t,q_{t-}) - \theta(t,q_{t-}-\Delta)}{\Delta}\right),
 \end{equation}
 where the functions $\tilde{\delta}^{b*}_{0}(\cdot)$ and $\tilde{\delta}^{a*}_0(\cdot)$ are defined in Eqs.~(\ref{sec3:deltab}) and (\ref{sec3:deltaa}).
\end{Theorem}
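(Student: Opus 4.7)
The plan is to repeat the verification argument used for Theorem~\ref{sec3:theomodelA}, but the computations are considerably simpler because $u$ is now linear in $(x,S)$ and $q$ stays bounded in $[-Q,Q]$.

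First, I would check directly that the ansatz $u(t,x,q,S) = x + qS + \theta(t,q)$ solves Eq.~(\ref{sec3:HJBModelB}) with terminal condition (\ref{sec3:HJBModelBCT}): since $\partial_x u = 1$, $\partial_S u = q$, $\partial^2_{SS} u = 0$, and
$$u(t,x-\Delta S + \Delta \delta^b, q+\Delta, S) - u(t,x,q,S) = \Delta\delta^b + \theta(t,q+\Delta) - \theta(t,q),$$
(with the symmetric identity on the ask side), the HJB equation reduces exactly to Eq.~(\ref{sec3:thetagen}) with $\xi=0$, which holds by construction of $\theta$.

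Next, I take any $(\delta^b_s)_{s\ge t}, (\delta^a_s)_{s\ge t} \in \mathcal{A}(t)$ and apply Itô's formula to $u(s,X^{t,x,\delta^b,\delta^a}_s, q^{t,q,\delta^b,\delta^a}_s, S^{t,S}_s)$ between $t$ and $T$. The Brownian part contributes $\int_t^T q^{t,q,\delta^b,\delta^a}_{s-} \sigma\, dW_s$, which is a true martingale because $|q^{t,q,\delta^b,\delta^a}| \le Q$. The jump parts, after the $S$-cancellation shown above, contribute compensated integrals of the bounded quantities $\Delta\delta^{b/a}_s + \theta(s,q_{s-}\pm\Delta)-\theta(s,q_{s-})$ against $dN^{b/a}_s$; using the admissibility bound $\delta^{b/a}_s \ge -C$, the monotonicity of $\Lambda^b,\Lambda^a$, the boundedness of $\delta\mapsto\delta\Lambda^{b/a}(\delta)$ on $[-C,+\infty)$ (which follows from hypothesis iv on the intensities), and the boundedness of $\theta$ on the compact set $[0,T]\times \mathcal{Q}$, standard estimates show these integrals are true martingales as well. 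Taking expectations therefore yields
$$\mathbb{E}\!\left[u(T, X_T, q_T, S_T)\right] = u(t,x,q,S) + \mathbb{E}\!\left[\int_t^T \Bigl(\partial_t\theta(s,q_{s-}) + \mathcal{J}^b_s + \mathcal{J}^a_s\Bigr) ds\right],$$
where $\mathcal{J}^b_s = 1_{q_{s-}<Q}\Lambda^b(\delta^b_s)(\Delta \delta^b_s + \theta(s,q_{s-}+\Delta) - \theta(s,q_{s-}))$ and $\mathcal{J}^a_s$ is its ask analog.

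By definition of $H^b_0,H^a_0$ we have $\mathcal{J}^b_s \le 1_{q_{s-}<Q} H^b_0\bigl(\frac{\theta(s,q_{s-})-\theta(s,q_{s-}+\Delta)}{\Delta}\bigr)$ and analogously on the ask side. Substituting the ODE satisfied by $\theta$, i.e.~Eq.~(\ref{sec3:thetagen}) with $\xi=0$, the right-hand side collapses to $u(t,x,q,S) + \mathbb{E}\!\left[\int_t^T \tfrac{1}{2}\gamma\sigma^2 q_s^2\, ds\right] - \mathbb{E}\!\left[\int_t^T (\text{nonnegative slack}) ds\right]$. Since $u(T, X_T, q_T, S_T) = X_T + q_T S_T - \ell(|q_T|)$, this gives
$$\mathbb{E}\!\left[X_T + q_T S_T - \ell(|q_T|) - \tfrac{1}{2}\gamma\sigma^2\!\int_t^T q_s^2\, ds\right] \le u(t,x,q,S),$$
with equality when the slack vanishes. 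Lemma~\ref{sec3:lemmaH} identifies the unique feedback that achieves this: $\delta^{b*}_s = \tilde{\delta}^{b*}_0\bigl(\frac{\theta(s,q_{s-})-\theta(s,q_{s-}+\Delta)}{\Delta}\bigr)$ and $\delta^{a*}_s = \tilde{\delta}^{a*}_0\bigl(\frac{\theta(s,q_{s-})-\theta(s,q_{s-}-\Delta)}{\Delta}\bigr)$, which are bounded (hence admissible) since the arguments live in a finite set. This proves the claimed identification of $u$ with the value function and formula (\ref{sec3:deltaoptimal2}) for the optimal quotes.

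The only mildly delicate step is justifying that the stochastic integrals are true martingales rather than local ones, but the boundedness of $q$, of $\theta$ on $[0,T]\times\mathcal{Q}$, and of $\delta \Lambda^{b/a}(\delta)$ on $[-C,+\infty)$ (which is exactly what hypothesis iv was designed to guarantee, via the argument of footnote~5) makes this routine; no multi-step H\"older estimates of the kind needed in Theorem~\ref{sec3:theomodelA} are required here.
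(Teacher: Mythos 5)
Your proposal is correct and follows essentially the same verification argument as the paper: It\^o's formula applied to $u(s,X_s,q_s,S_s)$, integrability of the Brownian and compensated jump terms via the bounds $|q|\le Q$, $\|\theta\|_{L^\infty([t,T]\times\mathcal{Q})}<\infty$, and $\sup_{\delta\ge -C}|\delta|\Lambda^{b/a}(\delta)\le\max(C\Lambda^{b/a}(-C),H_0^{b/a}(0)/\Delta)$, followed by the supersolution inequality from the definition of $H^b_0,H^a_0$ and equality for the feedback controls of Lemma~\ref{sec3:lemmaH}. Your observation that the estimates are much lighter than in the Model A proof (no exponential moments or H\"older splittings) matches exactly what the paper does.
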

\vspace{10pt}

\begin{proof}

Let us consider $t\in [0,T)$, and two processes $(\delta^b_s)_{s\ge t}$ and $(\delta^a_s)_{s\ge t}$ in  $\mathcal{A}(t)$. We have
\begin{equation}
\label{verif2}u(T,X^{t,x,\delta^b,\delta^a}_{T-},q^{t,q,\delta^b,\delta^a}_{T-}, S^{t,S}_T) = u(t,x,q,S) + \int_t^T \partial_t u(s,X^{t,x,\delta^b,\delta^a}_{s-},q^{t,q,\delta^b,\delta^a}_{s-}, S^{t,S}_s) ds
\end{equation}$$ + \sigma \int_t^T \partial_{S} u(s,X^{t,x,\delta^b,\delta^a}_{s-},q^{t,q,\delta^b,\delta^a}_{s-}, S^{t,S}_s) dW_s + \frac 12 \sigma^2 \int_t^T \partial^2_{SS} u(s,X^{t,x,\delta^b,\delta^a}_{s-},q^{t,q,\delta^b,\delta^a}_{s-}, S^{t,S}_s) ds$$$$ + \int_t^T \left(u(s,X^{t,x,\delta^b,\delta^a}_{s-} - \Delta S^{t,S}_s + \Delta \delta^b_s ,q^{t,q,\delta^b,\delta^a}_{s-} + \Delta, S^{t,S}_s) - u(s,X^{t,x,\delta^b,\delta^a}_{s-},q^{t,q,\delta^b,\delta^a}_{s-}, S^{t,S}_s)\right)dN^b_s$$
$$ + \int_t^T \left(u(s,X^{t,x,\delta^b,\delta^a}_{s-} + \Delta S^{t,S}_s + \Delta \delta^a_s ,q^{t,q,\delta^b,\delta^a}_{s-} - \Delta, S^{t,S}_s) - u(s,X^{t,x,\delta^b,\delta^a}_{s-},q^{t,q,\delta^b,\delta^a}_{s-}, S^{t,S}_s)\right)dN^a_s.$$

If almost surely $\forall s \in [t,T), \delta^b_s \ge -C$, then
\vspace{10pt}
\begin{eqnarray*}
&&\mathbb{E}\left[\int_t^T \partial_S u\left(s,X^{t,x,\delta^b,\delta^a}_{s-},q^{t,q,\delta^b,\delta^a}_{s-}, S^{t,S}_s\right)
^2ds\right]\\
&\le& \mathbb{E}\left[\int_t^T {q^{t,q,\delta^b,\delta^a}_{s-}}^2ds\right]\\
&\le& Q^2(T-t)\\
&<&+\infty.
\end{eqnarray*}

We also have
\begin{eqnarray*}
&&\mathbb{E}\left[\int_t^T |u(s,X^{t,x,\delta^b,\delta^a}_{s-} - \Delta S^{t,S}_s + \Delta \delta^b_s ,q^{t,q,\delta^b,\delta^a}_{s-} + \Delta, S^{t,S}_s) \right.\\
&&\left.- u(s,X^{t,x,\delta^b,\delta^a}_{s-},q^{t,q,\delta^b,\delta^a}_{s-}, S^{t,S}_s)|\Lambda^b(\delta^b_s)1_{q^{t,q,\delta^b,\delta^a}_{s-}<Q}ds\right]\\
&\le& \mathbb{E}\left[\int_t^T \Lambda^b(\delta^b_s) |\Delta \delta^b_s + \theta(s,q^{t,q,\delta^b,\delta^a}_{s-}+\Delta) - \theta(s,q^{t,q,\delta^b,\delta^a}_{s-})|  ds\right]\\
&\le& 2 \Lambda^b(-C) \|\theta\|_{L^\infty([t,T]\times\mathcal{Q})}(T-t) + \Delta (T-t) \sup_{\delta>-C} |\delta| \Lambda^b(\delta)\\
&\le& 2 \Lambda^b(-C) \|\theta\|_{L^\infty([t,T]\times\mathcal{Q})}(T-t) +  (T-t) \max( \Delta C \Lambda^b(-C), H^b_0(0))\\
&<&+\infty.
\end{eqnarray*}

Similarly
$$\mathbb{E}\left[\int_t^T |u(s,X^{t,x,\delta^b,\delta^a}_{s-} + \Delta S^{t,S}_s + \Delta \delta^a_s ,q^{t,q,\delta^b,\delta^a}_{s-} - \Delta, S^{t,S}_s) \right.$$$$\left.- u(s,X^{t,x,\delta^b,\delta^a}_{s-},q^{t,q,\delta^b,\delta^a}_{s-}, S^{t,S}_s)|\Lambda^a(\delta^a_s)1_{q^{t,q,\delta^b,\delta^a}_{s-}>-Q}ds\right] < +\infty.$$

By taking expectations in Eq.~(\ref{verif2}), we obtain
$$\mathbb{E}\left[u(T,X^{t,x,\delta^b,\delta^a}_{T-},q^{t,q,\delta^b,\delta^a}_{T-}, S^{t,S}_T)\right] = u(t,x,q,S)$$$$ + \mathbb{E}\left[\int_t^T \left(\partial_t u(s,X^{t,x,\delta^b,\delta^a}_{s-},q^{t,q,\delta^b,\delta^a}_{s-}, S^{t,S}_s) + \frac 12 \sigma^2 \partial^2_{SS} u(s,X^{t,x,\delta^b,\delta^a}_{s-},q^{t,q,\delta^b,\delta^a}_{s-}, S^{t,S}_s)\right) ds\right]$$$$ + \mathbb{E}\left[\int_t^T \left(u(s,X^{t,x,\delta^b,\delta^a}_{s-} - \Delta S^{t,S}_s + \Delta \delta^b_s ,q^{t,q,\delta^b,\delta^a}_{s-} + \Delta, S^{t,S}_s)\right.\right.$$$$\left. - u(s,X^{t,x,\delta^b,\delta^a}_{s-},q^{t,q,\delta^b,\delta^a}_{s-}, S^{t,S}_s)\right)\Lambda^b(\delta^b_s)1_{q^{t,q,\delta^b,\delta^a}_{s-}<Q}ds$$
$$ + \int_t^T \left(u(s,X^{t,x,\delta^b,\delta^a}_{s-} + \Delta S^{t,S}_s + \Delta \delta^a_s ,q^{t,q,\delta^b,\delta^a}_{s-} - \Delta, S^{t,S}_s)\right.$$$$\left.\left. - u(s,X^{t,x,\delta^b,\delta^a}_{s-},q^{t,q,\delta^b,\delta^a}_{s-}, S^{t,S}_s)\right)\Lambda^a(\delta^a_s)1_{q^{t,q,\delta^b,\delta^a}_{s-}>-Q}ds\right].$$

By definition of $u$, we have therefore
$$\mathbb{E}\left[u(T,X^{t,x,\delta^b,\delta^a}_{T-},q^{t,q,\delta^b,\delta^a}_{T-}, S^{t,S}_T)\right] = u(t,x,q,S) + \mathbb{E}\left[\int_t^T \partial_t \theta(s,q^{t,q,\delta^b,\delta^a}_{s-})  ds\right]$$$$ + \mathbb{E}\Bigg[\int_t^T \Lambda^b(\delta^b_s)1_{q^{t,q,\delta^b,\delta^a}_{s-}<Q}\left(\Delta \delta^b_s + \theta(s,q^{t,q,\delta^b,\delta^a}_{s-}+\Delta) - \theta(s,q^{t,q,\delta^b,\delta^a}_{s-})\right)ds\Bigg]$$
$$ + \mathbb{E}\Bigg[\int_t^T \Lambda^a(\delta^a_s)1_{q^{t,q,\delta^b,\delta^a}_{s-}>-Q}\left(\Delta \delta^a_s + \theta(s,q^{t,q,\delta^b,\delta^a}_{s-}-\Delta) - \theta(s,q^{t,q,\delta^b,\delta^a}_{s-})\right)ds\Bigg].$$

By definition of $\theta$, we have the inequality
\begin{eqnarray*}
&&\mathbb{E}\left[u(T,X^{t,x,\delta^b,\delta^a}_{T},q^{t,q,\delta^b,\delta^a}_{T}, S^{t,S}_T)\right]\\
&=& \mathbb{E}\left[u(T,X^{t,x,\delta^b,\delta^a}_{T-},q^{t,q,\delta^b,\delta^a}_{T-}, S^{t,S}_T)\right]\\
&\le& u(t,x,q,S) + \mathbb{E}\Bigg[\int_t^T \frac 12 \gamma \sigma^2 {q^{t,q,\delta^b,\delta^a}_{s-}}^2 ds\Bigg]\\
&\le& u(t,x,q,S) + \mathbb{E}\Bigg[\int_t^T \frac 12 \gamma \sigma^2 {q^{t,q,\delta^b,\delta^a}_{s}}^2 ds\Bigg]\\
\end{eqnarray*}
i.e.,
$$\mathbb{E}\left[X^{t,x,\delta^b,\delta^a}_{T}+q^{t,q,\delta^b,\delta^a}_{T} S^{t,S}_T - \ell(|q^{t,q,\delta^b,\delta^a}_{T}|) - \int_t^T \frac 12 \gamma \sigma^2 {q^{t,q,\delta^b,\delta^a}_{s}}^2 ds \right] \le u(t,x,q,S).$$

Furthermore, by Lemma \ref{sec3:lemmaH}, there is equality in the above inequality if $(\delta^b_s)_{s\ge t}$ and $(\delta^a_s)_{s\ge t}$ are given (in closed-loop) by Eq.~(\ref{sec3:deltaoptimal2}).\\

Therefore, $u$ is indeed the value function
$$u(t,x,q,S) =$$$$ \sup_{(\delta^b_s)_{s\ge t}, (\delta^a_s)_{s\ge t}  \in \mathcal{A}(t)} \mathbb{E}\left[X^{t,x,\delta^b,\delta^a}_{T}+q^{t,q,\delta^b,\delta^a}_{T} S^{t,S}_T - \ell(|q^{t,q,\delta^b,\delta^a}_{T}|) - \int_t^T \frac 12 \gamma \sigma^2 {q^{t,q,\delta^b,\delta^a}_{s}}^2 ds\right],$$
and the optimal quotes are given in closed-loop by Eq.~(\ref{sec3:deltaoptimal2}).\qed\\

\end{proof}

\subsection{Comments on the results}

In both Model A and Model B, the dynamic optimization problem faced by the market maker was initially characterized by a Hamilton-Jacobi-Bellman equation with 4 variables: the time $t$, and 3 state variables (the cash $x$, the inventory $q$, and the reference price $S$). Computing a numerical approximation for the solution of  a 4-dimensional HJB equation such as Eq.~(\ref{sec3:HJBModelA}) or Eq.~(\ref{sec3:HJBModelB}) is always time-consuming. Therefore, the results obtained in Theorem 3.2 and Theorem 3.3 are very useful: they state that the optimal quotes of a market maker in both Model A and Model B can in fact be computed by solving a tridiagonal system of nonlinear ordinary differential equations. This corresponds to a reduction of the dimensionality of the problem from 4 to 2. Furthermore, the systems of nonlinear ordinary differential equations are similar for Model A and Model B: they correspond to Eq.~(\ref{sec3:thetagen}) -- with terminal condition (\ref{sec3:thetagenCT}) -- with $\xi = \gamma$ for Model A and with $\xi = 0$ for Model B.\\

The objective functions of Model A and Model B lead to similar equations, but it is interesting to understand the differences between the two modeling approaches. In fact, the penalization term $$ \frac 12 \gamma \sigma^2 \int_0^T q_t^2 dt$$ in Model B leads to the term $\frac 12 \gamma \sigma^2 q^2$ in the ODEs characterizing $\theta$ (when $\xi = 0$), and this term arises also in the ODE associated with Model A (when $\xi=\gamma$) because of the market maker's aversion to price risk. However, in Model A, the market maker is not only averse to price risk, but also to the risk of not finding a counterparty to trade with -- this is what we call non-execution risk. There is indeed a source of risk coming from the process $(W_t)_t$, and another source of risk coming from the processes $(N_t^b)_t$ and $(N^a_t)_t$, and risk aversion in Model A applies to both kinds of risk. In other words, things work as if the market maker of Model A was risk averse to both kinds of risk, while the market maker of Model B is only averse to the risk associated with price changes. In particular, the parameter $\xi$ can be regarded as some form of risk aversion parameter applying to non-execution risk only: it is equal to $\gamma$ in the case of Model A, and equal to $0$ in the case of Model B.\\

\section{Closed-form and almost-closed-form approximations}

In \cite{gueant2013dealing}, the authors show in the specific case where $\Lambda^b(\delta)= \Lambda^a(\delta) = A e^{-k\delta} =: \Lambda(\delta)$ that there is an asymptotic regime far from $T$ for the optimal quotes in Model A.\footnote{The authors of \cite{gueant2013dealing}  use the linear system of ODEs (\ref{sec3:eqv}) in the case $\Delta = 1$ and $\xi=\gamma$.} In other words, far from the terminal time $T$, the optimal quotes in \cite{gueant2013dealing} are well approximated by functions that only depend on the inventory $q$ -- and not on the time variable $t$. In practice, in markets (such as most dealer-driven OTC markets) for which there is no natural terminal time $T$, this result is not surprising -- even, somehow, reassuring -- and only the asymptotic formula should be used. Furthermore, the authors of \cite{gueant2013dealing} proposed closed-form approximations for the asymptotic values of the optimal quotes. In this section, we propose new approximation formulas which generalize those obtained in \cite{gueant2013dealing} to a more general set of intensity functions, and to both Model~A and Model~B (only Model~A was considered in \cite{gueant2013dealing}). These more general approximations are based on heuristic arguments, and we will see in the numerical experiments of Section 6 when they are (or are not) satisfactory.

\subsection{Approximation with an elliptic partial differential equation}

To compute the optimal quotes given in Eqs.~(\ref{sec3:deltaoptimal}) and (\ref{sec3:deltaoptimal2}), the first step consists in computing the function $\theta$ solution of the system of ODEs (\ref{sec3:thetagen}), with terminal condition~(\ref{sec3:thetagenCT}). In order to approximate the optimal quotes, we first approximate therefore the function $\theta$.\\

To carry out our reasoning, we suppose that the intensity functions $\Lambda^b$ and $\Lambda^a$ are identical (equal to $\Lambda$), and that $H_\xi : = H^b_\xi = H^a_\xi$ verifies $H_\xi''(0)  > 0$.\footnote{The condition $H''_{\xi}(0) > 0$ is always verified when $\xi = 0$ (see the proof of Lemma \ref{sec3:lemmaH}). A sufficient condition in general is
$$\forall \delta \in \mathbb{R},\quad  \xi \Delta \frac{\Lambda(\delta)^2 \Lambda''(\delta)}{\Lambda'(\delta)^3} < 1.$$ This condition (obtained by using the expression of $H''_\xi$ in the proof of Lemma \ref{sec3:lemmaH}) is verified for instance if $\Lambda$ is convex (exponential intensities enter this category).}\\

Our heuristic reasoning consists in replacing the function $\theta : [0,T]\times \mathcal{Q} \to \mathbb{R}$ by a function $\tilde{\theta} : [0,T]\times \mathbb{R} \to \mathbb{R}$ and to replace the system of ODEs (\ref{sec3:thetagen}) characterizing $\theta$, i.e.,
$$
0=-\partial_t \theta(t,q) + \frac{1}{2} \gamma \sigma^2 q^2
$$
$$-  1_{q<Q} H^b_{\xi}\left(\frac{\theta(t,q) - \theta(t,q+\Delta)}{\Delta}\right) - 1_{q>-Q} H^a_{\xi}\left(\frac{\theta(t,q) - \theta(t,q-\Delta)}{\Delta}\right)$$
by the PDE
  $$0 = - \partial_t \tilde\theta(t,q) + \frac 12 \gamma \sigma^2 q^2 - 2H_{\xi}\left(0\right)$$\begin{equation}\label{sec4:thetatilde} - H_{\xi}''(0) (\partial_q \tilde\theta(t,q))^2 + \Delta H_{\xi}'(0) \partial^2_{qq} \tilde\theta(t,q).\end{equation}

This PDE comes from an expansion in $\epsilon$ of the expression\footnote{We remove here the boundaries associated with $-Q$ and $Q$.}
$$
0=-\partial_t \tilde{\theta}(t,q) + \frac{1}{2} \gamma \sigma^2 q^2
$$
$$-  H_{\xi}\left(\frac{\tilde{\theta}(t,q) - \tilde{\theta}(t,q+\epsilon\Delta)}{\Delta}\right) - H_{\xi}\left(\frac{\tilde\theta(t,q) - \tilde\theta(t,q-\epsilon\Delta)}{\Delta}\right),$$ applied to $\epsilon =1$.\footnote{Another way to see this expansion is to consider an expansion of order 2 in $\Delta$ (an expansion of order 1 would correspond, after rescaling $H_\xi$, to a fluid-limit regime where non-execution risk vanishes) combined with an approximation of $H_{\xi}$ by using the first three terms of its Taylor expansion (in 0).}\\

We have indeed
\begin{eqnarray*}
&& H_{\xi}\left(\frac{\tilde{\theta}(t,q) - \tilde{\theta}(t,q+\epsilon\Delta)}{\Delta}\right) + H_{\xi}\left(\frac{\tilde\theta(t,q) - \tilde\theta(t,q-\epsilon\Delta)}{\Delta}\right)\\
&=& H_{\xi}\left(-\epsilon \partial_q \tilde{\theta}(t,q) - \frac 12 \epsilon^2 \Delta \partial^2_{qq}\tilde{\theta}(t,q) + o(\epsilon^2)\right) + H_{\xi}\left(\epsilon \partial_q \tilde{\theta}(t,q) - \frac 12 \epsilon^2 \Delta \partial^2_{qq}\tilde{\theta}(t,q) + o(\epsilon^2)\right)\\
&=& 2 H_{\xi}(0) - \epsilon^2 \Delta H'_{\xi}(0) \partial^2_{qq}\tilde{\theta}(t,q) + \epsilon^2 H''_{\xi}(0) \left(\partial_q \tilde{\theta}(t,q)\right)^2 + o(\epsilon^2).\\
\end{eqnarray*}

By considering
$$\tilde{v}(t,q) = \exp\left(-\frac{H''_{\xi}(0)}{\Delta H'_{\xi}(0) } \tilde{\theta}(t,q)\right),$$
the nonlinear PDE (\ref{sec4:thetatilde}) becomes the linear PDE\footnote{One can see the proximity with Eq.~(\ref{sec3:eqv}).}
\begin{equation}
\label{sec4:v}0 = \partial_t \tilde{v}(t,q) - \frac{H''_{\xi}(0)}{\Delta H'_{\xi}(0)}\left( 2 H_{\xi}(0)  - \frac 12 \gamma \sigma^2 q^2\right) \tilde{v}(t,q) -\Delta H'_{\xi}(0) \partial^2_{qq} \tilde{v}(t,q),\end{equation}
and the terminal condition relevant with our problem is
$$\tilde{v}(T,q) = \exp\left(\frac{H''_{\xi}(0)}{\Delta H'_{\xi}(0) } \ell(|q|)\right).$$

Eq.~(\ref{sec4:v}) is a linear PDE and it can be studied using basic tools of spectral theory. Our goal is to study the asymptotic behavior of $\tilde{v}(t,q)$ when $T$ tends to infinity, and to use the formulas obtained in this asymptotic regime in order to approximate successively $\tilde{v}$, $\tilde{\theta}$, $\theta$, and ultimately the optimal quotes $(\delta_t^{b*})_t$ and $(\delta^{a*}_t)_t$.\\

\subsection{Generalization of the Gu\'eant-Lehalle-Fernandez-Tapia's formulas}

By classical spectral theory,\footnote{The basic reasoning consists in proving that the operator $\tilde{v} \mapsto -\frac 12\frac{H''_{\xi}(0)}{\Delta H'_{\xi}(0)} \gamma \sigma^2 q^2 \tilde{v} + \Delta H'_{\xi}(0) \partial^2_{qq} \tilde{v}$ is a positive self-adjoint operator with a compact inverse (see Chapter 6 of \cite{brezis} for more details). Therefore, this operator can be diagonalized in an orthonormal basis. Its minimum eigenvalue can be shown to be simple by using the same methodology as in \cite{gueant2013dealing}.} we know that $$\tilde{v}(t,q) \sim_{T \to +\infty} (\tilde{v}(T,\cdot),\tilde{f}^0)\tilde{f}^0(q) \exp\left(\nu (T-t)\right),$$ where $\nu$ and $\tilde{f}^0$ are respectively the minimum and a minimizer of the functional
$$ \tilde{f} \in \lbrace \tilde{g} \in H^1(\mathbb{R}), \|\tilde{g}\|_{L^2(\mathbb{R})}=1\rbrace \mapsto \int_{-\infty}^{\infty} \left(\alpha x^2 \tilde{f}(x)^2 + \eta \tilde{f}'(x)^2\right) dx,$$
   with $$\alpha = - \frac 12 \frac{H''_{\xi}(0)}{\Delta H'_{\xi}(0)} \gamma \sigma^2 \text{ and } \eta = -\Delta H'_{\xi}(0),$$ and where $(\cdot,\cdot)$ designates the scalar product in $L^2(\mathbb{R})$.\\

In particular, 
   $$\tilde{f}^0(q) \propto \exp\left(-\frac{1}{2}\sqrt{\frac{\alpha}{\eta}}q^2\right).$$

From 
\begin{eqnarray*}
\tilde{v}(t,q) & \sim_{T \to +\infty}& C \exp\left(-\frac{1}{2}\sqrt{\frac{\alpha}{\eta}}q^2\right) \exp(\nu (T-t)),\\
\end{eqnarray*}
where $C$ is a constant, independent of $(t,q)$, we deduce: 
\begin{eqnarray*}
\tilde{\theta}(t,q) +  \frac{\Delta H'_{\xi}(0) }{H''_{\xi}(0)} \nu(T-t) &\to_{T \to +\infty}& - \frac{\Delta H'_{\xi}(0) }{H''_{\xi}(0)}\left(\log(C) -\frac{1}{2}\sqrt{\frac{\alpha}{\eta}}q^2\right)\\
\end{eqnarray*}
i.e.,
\begin{eqnarray*}
\tilde{\theta}(t,q) +  \frac{\Delta H'_{\xi}(0) }{H''_{\xi}(0)} \nu(T-t) &\to_{T \to +\infty}& - \frac{\Delta H'_{\xi}(0) }{H''_{\xi}(0)}\left(\log(C) -\frac{1}{2}\sqrt{\frac{\gamma\sigma^2}{2H''_{\xi}(0)}}q^2\right).\\
\end{eqnarray*}

As a consequence, we consider the approximations
$$\frac{\theta(t,q) - \theta(t,q+\Delta)}{\Delta} \simeq \frac{2q+\Delta}{2} \sqrt{\frac{\gamma\sigma^2}{2H''_{\xi}(0)}}$$
and
$$\frac{\theta(t,q) - \theta(t,q-\Delta)}{\Delta} \simeq -\frac{2q-\Delta}{2} \sqrt{\frac{\gamma\sigma^2}{2H''_{\xi}(0)}}.$$

These approximations are independent of $t$ and of the final penalty function $\ell$. They can be plugged into Eqs.~(\ref{sec3:deltaoptimal}) and (\ref{sec3:deltaoptimal2}) to obtain the general approximation formulas
\begin{equation}
\label{sec4:gueantformulab}\delta^{b*}_t \simeq \delta_{\text{approx}}^{b*}(q_{t-}) := \tilde{\delta}^{*}_\xi\left(\frac{2q_{t-}+\Delta}{2} \sqrt{\frac{\gamma\sigma^2}{2H''_{\xi}(0)}}\right)
\end{equation}
and
\begin{equation}
\label{sec4:gueantformulaa}
\delta^{a*}_t \simeq \delta_{\text{approx}}^{a*}(q_{t-}) := \tilde{\delta}^{*}_\xi\left(-\frac{2q_{t-}-\Delta}{2} \sqrt{\frac{\gamma\sigma^2}{2H''_{\xi}(0)}}\right),
\end{equation} where
\begin{equation}
\label{sec4:deltatilde}
\tilde{\delta}^{*}_\xi(p) = {\Lambda}^{-1}\left(\xi H_{\xi}(p) - \frac{{H_{\xi}}'(p)}{\Delta}\right).
\end{equation}

In particular, if $\Lambda(\delta) = A e^{-k\delta}$, then
$$\tilde{\delta}_\xi^*(p) = \begin{cases} p + \frac{1}{\xi \Delta} \log\left(1 + \frac{\xi\Delta}{k}\right) &\mbox{if } \xi > 0 \\
p + \frac 1k & \mbox{if } \xi=0, \end{cases}$$
and we obtain
\begin{equation}
\label{sec4:glftformulab}\delta_{\text{approx}}^{b*}(q) =  \begin{cases} \frac{1}{\xi \Delta} \log\left(1 + \frac{\xi\Delta}{k}\right) + \frac{2q+\Delta}{2} \sqrt{\frac{\gamma\sigma^2}{2A\Delta k} \left(1 + \frac{\xi\Delta}{k}\right)^{\frac{k}{\xi\Delta} + 1} } &\mbox{if } \xi > 0\\
\frac{1}k + \frac{2q+\Delta}{2} \sqrt{\frac{\gamma\sigma^2e}{2A\Delta k} } & \mbox{if } \xi=0, \end{cases}
\end{equation}
and
\begin{equation}
\label{sec4:glftformulaa}
\delta_{\text{approx}}^{a*}(q) = \begin{cases} \frac{1}{\xi \Delta} \log\left(1 + \frac{\xi\Delta}{k}\right) - \frac{2q-\Delta}{2} \sqrt{\frac{\gamma\sigma^2}{2A\Delta k} \left(1 + \frac{\xi\Delta}{k}\right)^{\frac{k}{\xi\Delta} + 1} } &\mbox{if } \xi > 0\\
\frac{1}k - \frac{2q-\Delta}{2} \sqrt{\frac{\gamma\sigma^2e}{2A\Delta k} } & \mbox{if } \xi=0. \end{cases}
\end{equation}

In particular, we recover, in the specific case where $\Delta = 1$ and $\xi=\gamma$, the Gu\'eant-Lehalle-Fernandez-Tapia's formula of \cite{gueant2013dealing} and \cite{gueantbook} often used in the industry.

\subsection{Comments on the approximations}

The approximations obtained above deserve a few comments. First, in the general case (i.e., even when the intensity function $\Lambda$ is not exponential), the approximations are almost in closed form, in the sense that they are only functions of the parameters and of transforms of $\Lambda$. In practice, one simply needs to compute $\Lambda^{-1}$, $H_\xi$, $H'_\xi$, and $H''_\xi$, in order to compute the approximations (\ref{sec4:gueantformulab}) and (\ref{sec4:gueantformulaa}). Second, the above approximations enable to better understand the optimal strategy of a market maker, and the role played by the different parameters. In particular, they enable to better understand the different types of risk faced by a market maker.\\

By using Eqs.~(\ref{sec4:gueantformulab}) and (\ref{sec4:gueantformulaa}), we see that
$$
\frac{d\ }{dq}\delta_{\text{approx}}^{b*}(q) = \sqrt{\frac{\gamma\sigma^2}{2H''_{\xi}(0)}}\tilde{\delta}^{*'}_\xi\left(\frac{2q+\Delta}{2} \sqrt{\frac{\gamma\sigma^2}{2H''_{\xi}(0)}}\right) > 0
$$
and
$$
\frac{d\ }{dq} \delta_{\text{approx}}^{a*}(q) =  \sqrt{\frac{\gamma\sigma^2}{2H''_{\xi}(0)}} \tilde{\delta}^{*'}_\xi\left(-\frac{2q-\Delta}{2} \sqrt{\frac{\gamma\sigma^2}{2H''_{\xi}(0)}}\right)< 0.
$$
This means that a market maker proposes lower prices at the bid and at the ask when his inventory increases, and conversely, higher prices at the bid and at the ask when his inventory decreases. In particular, a market maker with a positive or negative inventory always skews his bid and ask prices in order to increase his chance to go back to a flat position.\\

In the particular case of exponential intensities, it is interesting to notice that the approximation of the bid-ask spread is independent of $q$, and the skew is linear in $q$:
\begin{equation}
\label{sec4:spread}\delta_{\text{approx}}^{b*}(q) + \delta_{\text{approx}}^{a*}(q) =  \begin{cases} \frac{2}{\xi \Delta} \log\left(1 + \frac{\xi\Delta}{k}\right) + \Delta \sqrt{\frac{\gamma\sigma^2}{2A\Delta k} \left(1 + \frac{\xi\Delta}{k}\right)^{\frac{k}{\xi\Delta} + 1} } &\mbox{if } \xi > 0\\
\frac{2}k + \Delta \sqrt{\frac{\gamma\sigma^2e}{2A\Delta k} } & \mbox{if } \xi=0, \end{cases}
\end{equation}
\begin{equation}
\label{sec4:skew}\delta_{\text{approx}}^{b*}(q) - \delta_{\text{approx}}^{a*}(q) =  \begin{cases} 2q \sqrt{\frac{\gamma\sigma^2}{2A\Delta k} \left(1 + \frac{\xi\Delta}{k}\right)^{\frac{k}{\xi\Delta} + 1} } &\mbox{if } \xi > 0\\
2q\sqrt{\frac{\gamma\sigma^2e}{2A\Delta k} } & \mbox{if } \xi=0. \end{cases}
\end{equation}

As far as volatility is concerned, we have
$$
\frac{d\ }{d\sigma}\delta_{\text{approx}}^{b*}(q) = \frac{2q+\Delta}{2} \sqrt{\frac{\gamma}{2H''_{\xi}(0)}}  \tilde{\delta}^{*'}_\xi\left(\frac{2q+\Delta}{2} \sqrt{\frac{\gamma\sigma^2}{2H''_{\xi}(0)}}\right)
$$
and
$$
\frac{d\ }{d\sigma}\delta_{\text{approx}}^{a*}(q) = -\frac{2q-\Delta}{2} \sqrt{\frac{\gamma}{2H''_{\xi}(0)}}  \tilde{\delta}^{*'}_\xi\left(-\frac{2q-\Delta}{2} \sqrt{\frac{\gamma\sigma^2}{2H''_{\xi}(0)}}\right).
$$
Therefore we have three cases:
\begin{itemize}
  \item if $q=0$, then $\frac{d\ }{d\sigma}\delta_{\text{approx}}^{b*}(q)  = \frac{d\ }{d\sigma}\delta_{\text{approx}}^{a*}(q) > 0$. In other words, an increase in volatility leads to an increase in the bid-ask spread, symmetric around the reference price (no skew).
  \item if $q \ge \Delta$, then $\frac{d\ }{d\sigma}\delta_{\text{approx}}^{b*}(q) >0$ and $\frac{d\ }{d\sigma}\delta_{\text{approx}}^{a*}(q) < 0$. In other words, an increase in volatility leads to lower bid and ask prices: it increases the skew in absolute value, \emph{ceteris paribus}.
  \item if $q \le -\Delta$, then $\frac{d\ }{d\sigma}\delta_{\text{approx}}^{b*}(q) <0$ and $\frac{d\ }{d\sigma}\delta_{\text{approx}}^{a*}(q) > 0$. In other words, an increase in volatility leads to higher bid and ask prices: it increases the skew in absolute value, \emph{ceteris paribus}.
\end{itemize}

In the particular case of exponential intensities, it is interesting to notice that the bid-ask spread is approximated by an affine function of $\sigma$, and the skew by a linear function of $\sigma$ (see Eqs.~(\ref{sec4:spread}) and  (\ref{sec4:skew})).\\

As far as liquidity is concerned, if we replace $\Lambda$ by $\beta\Lambda$, for $\beta >0$, then we see that $H_\xi$ is replaced by $\beta H_\xi$, and that $\tilde{\delta}_\xi$ is unchanged (see Eq.~(\ref{sec4:deltatilde})). Therefore, we see from Eqs.~(\ref{sec4:gueantformulab}) and (\ref{sec4:gueantformulaa}), that replacing $\Lambda$ by $\beta\Lambda$ is equivalent to replacing $\sigma^2$ by $\frac{\sigma^2}\beta$. In other words, an increase in liquidity is equivalent to a decrease in volatility and, conversely, a decrease in liquidity has the same effects as an increase in volatility.\\

As far as risk aversion is concerned, the differences between Model A and Model B help to clarify the different roles played by $\gamma$.\\

In the case of Model B, where $\xi=0$, we see from Eqs.~(\ref{sec4:gueantformulab}) and (\ref{sec4:gueantformulaa}), that an increase in~$\gamma$ is equivalent to an increase in $\sigma^2$. In particular, an increase in $\gamma$ increases the bid-ask spread and increases the skew in absolute value. This is expected, since $\gamma$, in Model B, penalizes positive and negative inventory.\\

In the case of Model A, the situation is different, but the introduction of the variable $\xi$ helps to understand what is at stake. As already mentioned, everything works as if $\xi$ was a risk aversion parameter for non-execution risk and $\gamma$ a risk aversion parameter for price risk. To analyze the different effects, we consider the specific case of exponential intensities. We see in Eq.~(\ref{sec4:spread}) that the approximation of the bid-ask spread is made of two parts:
\begin{enumerate}
  \item $\frac{2}{\xi \Delta} \log\left(1 + \frac{\xi\Delta}{k}\right)$, which is decreasing in $\xi$. This term is related to the static risk faced by a market maker, associated with transaction uncertainty only. When $\xi=\gamma$ increases, a market maker reduces his bid-ask spread to lower the uncertainty with respect to transactions.
  \item $\Delta \sqrt{\frac{\gamma\sigma^2}{2A\Delta k} \left(1 + \frac{\xi\Delta}{k}\right)^{\frac{k}{\xi\Delta} + 1} }$, which is increasing in $\gamma$ and $\xi = \gamma$. This term, that only appears with volatility, is related to the dynamic risk faced by a market maker. This risk is complex and definitely more subtle than the classical risk that the price moves. In fact, both $\xi$ and $\gamma$ appear in the formula because the risk faced by a market maker is actually the risk that the price moves adversely without him being able to unwind his position rapidly enough (because of trade uncertainty). The higher the risk aversion to this combination of price risk and non-execution risk, the larger the bid-ask spread, because a market maker wants to avoid holding large inventories (in absolute value).
\end{enumerate}
As far as the skew is concerned, only the second effect matters. This is confirmed by Eq.~(\ref{sec4:skew}), and we see that the skew in absolute value is increasing with $\gamma$ and $\xi=\gamma$.\\

Comparative statics is always interesting to understand the role played by the different parameters involved in a model. Here, we have carried out comparative statics on almost-closed-form and closed-form approximations, and not on the original optimal bid and ask quotes, which can only be computed numerically. We will see in Section 6 the differences between the actual optimal bid and ask quotes and the approximations proposed in this section.\footnote{In particular, in the case of exponential intensities, the actual bid-ask spread is not independent of $q$.}

\section{Multi-asset market making strategies}

In most papers of the academic literature on market making, only single-asset market making is tackled. In practice, however, market makers are often in charge of a book of several assets. An evident case is the one of corporate bonds, since there are usually dozens of bonds issued by the same company, and the same market maker is in charge of all these bonds. As a consequence, optimal quotes for a specific bond should not depend on the market maker's inventory in that bond, but instead on the risk profile of the whole bond portfolio with respect to the issuer. In particular, when a market maker has a short inventory in an asset and an almost equivalent long inventory in another asset, highly correlated with the first, there may be no reason for him to skew his bid and ask quotes on these two assets, contrary to what single-asset market making models would suggest. In this section, we generalize our market making model to the multi-asset case. In particular, we obtain closed-form approximations for the optimal quotes of a multi-asset market maker.

\subsection{Modeling framework and notations}

We consider a market maker in charge of $d$ assets. For $i \in \{ 1, \ldots, d\}$, the reference price of asset $i$ is modeled by a process $(S^i_t)_t$ with the following dynamics
\begin{equation}
\label{sec5:dS}
dS^i_t = \sigma^i dW^i_t,\quad S^i_0 \text{\; given},
\end{equation}
where $(W^1_t, \ldots, W^d_t)_t$ is a $d$-dimensional Brownian motion adapted to the filtration $(\mathcal{F}_t)_{t\in \mathbb{R}_+}$, with nonsingular correlation matrix. We denote by $\Sigma = (\rho^{i,j} \sigma^i\sigma^j)_{1 \le i,j \le d}$, the variance-covariance matrix associated with the process $(S_t)_t = (S^1_t, \ldots, S^d_t)_t$.\\

This market maker proposes bid and ask quotes to buy and sell the $d$ assets. These bid and ask quotes are modeled by $2d$ stochastic processes, respectively denoted by $(S^{1,b}_t)_t, \ldots, (S^{d,b}_t)_t$  and $(S^{1,a}_t)_t, \ldots, (S^{d,a}_t)_t$.\\

As in the single-asset case, we denote by $(N^{i,b}_t)_t$ and $(N^{i,a}_t)_t$, for each $i \in \{1, \ldots, d \}$, the two point processes modeling the number of transactions at the bid and at the ask, respectively, for asset $i$. We assume that the asset $i$ is traded $\Delta^i$ units by $\Delta^i$ units.\\

The inventory of the market maker, modeled by the $d$-dimensional process $(q_t)_t = (q^1_t, \ldots, q^d_t)_t$, has therefore the following dynamics:
\begin{equation}
\label{sec5:dq}
\forall i \in \{1, \ldots, d\}, dq^i_t = \Delta^i dN^{i,b}_t - \Delta^i dN^{i,a}_t, \quad q^i_0 \text{\; given.}
\end{equation}

We assume that the processes $(N^{1,b}_t, \ldots, N^{d,b}_t)_{t}$ and $(N^{1,a}_t, \ldots, N^{d,a}_t)_t$ are independent of the Brownian motion $(W^1_t, \ldots, W^d_t)_t$. For $i \in \{1, \ldots, d\}$, we denote by $(\lambda_t^{i,b})_t$ and $(\lambda_t^{i,a})_t$ the intensity processes of $(N^{i,b}_t)_t$ and $(N^{i,a}_t)_t$, respectively. We assume that $(\lambda_t^{i,b})_t$ and $(\lambda_t^{i,a})_t$ verify
\begin{equation}
  \label{sec5:intensity} \lambda_t^{i,b} = \Lambda^{i,b}(\delta_t^{i,b})1_{q^i_{t-}<Q^i} \quad \text{and} \quad \lambda_t^{i,a} = \Lambda^{i,a}(\delta_t^{i,a})1_{q^i_{t-}>-Q^i},
   \end{equation} where $$\delta_t^{i,b} = S^i_t - S_t^{i,b} \quad \text{and} \quad \delta_t^{i,a} = S^{i,a}_t - S^i_t,$$
and where $\Lambda^{i,b}$ and $\Lambda^{i,a}$ are two functions satisfying the following hypotheses:
\begin{itemize}
  \item $\Lambda^{i,b}$ and $\Lambda^{i,a}$ are twice continuously differentiable,
  \item $\Lambda^{i,b}$ and $\Lambda^{i,a}$ are decreasing, with $\forall \delta \in \R$, ${\Lambda^{i,b}}'(\delta) <0$ and ${\Lambda^{i,a}}'(\delta) <0$,
  \item $\lim_{\delta \to +\infty} \Lambda^{i,b}(\delta) = \lim_{\delta \to +\infty} \Lambda^{i,a}(\delta) = 0$,
    \item $\sup_{\delta}\frac{\Lambda^{i,b}(\delta){\Lambda^{i,b}}''(\delta)}{\left({\Lambda^{i,b}}'(\delta)\right)^2} <  2 \quad \text{and} \quad \sup_{\delta} \frac{\Lambda^{i,a}(\delta){\Lambda^{i,a}}''(\delta)}{\left({\Lambda^{i,a}}'(\delta)\right)^2} <  2.$
\end{itemize}

Finally, the process $(X_t)_t$ modeling the market maker's cash account has the dynamics
\begin{eqnarray}
\nonumber dX_t &=& \sum_{i=1}^d S^{i,a}_t \Delta^i dN^{i,a}_t - S^{i,b}_t \Delta^i dN^{i,b}_t\\
&=&  \sum_{i=1}^d (S^i_t + \delta^{i,a}_t) \Delta^i dN^{i,a}_t - (S^i_t -
\delta^{i,b}_t) \Delta^i dN^{i,b}_t. \label{sec5:dX}
\end{eqnarray}

In the $d$-dimensional generalization of Model A, the problem consists in maximizing
\begin{equation*}
 \mathbb{E}\left[-\exp\left(-\gamma \left(X_T + \sum_{i=1}^d q^i_T S^i_T - \ell_d(q^1_T, \ldots, q^d_T)\right) \right)\right], \qquad \textbf{(Model A)}
\end{equation*}
over $(\delta_t^{1,b}, \ldots, \delta_t^{d,b})_t \in \mathcal{A}^d$ and $(\delta_t^{1,a}, \ldots, \delta_t^{d,a})_t \in \mathcal{A}^d$, where $\ell_d$ is a penalty function.\\

In the $d$-dimensional generalization of Model B, the problem consists instead in maximizing
\begin{equation*}
\mathbb{E}\left[X_T + \sum_{i=1}^d q^i_T S^i_T - \ell_d(q^1_T, \ldots, q^d_T) - \frac 12 \gamma \int_0^T \sum_{i=1}^d \sum_{j=1}^d \rho^{i,j} \sigma^i\sigma^j q^i_t q^j_t  dt \right], \qquad \textbf{(Model B)}
\end{equation*}
over $(\delta_t^{1,b}, \ldots, \delta_t^{d,b})_t \in \mathcal{A}^d$ and $(\delta_t^{1,a}, \ldots, \delta_t^{d,b})_t \in \mathcal{A}^d$.

\subsection{Towards a general system of ordinary differential equations}

For solving the two stochastic optimal control problems of Model A and Model B, we use similar changes of variables as in Section 3. In particular, we show that finding the value function (and the optimal bid and ask quotes) in both models boils down to solving a system of ordinary differential equations, and that, as in the single-asset case, the equations associated with Model~A and Model B are part of the same family of ODEs.\\

The HJB equation associated with Model A is given by\footnote{We denote by $(e^1, \ldots, e^d)$ the canonical basis of $\mathbb{R}^d$.}
\begin{equation}
\label{sec5:HJBModelA}0= -\partial_t u(t,x,q,S) - \frac 12 \sum_{i=1}^d \sum_{j=1}^d \rho^{i,j} \sigma^i\sigma^j \partial^2_{S^iS^j} u(t,x,q,S)
\end{equation}
$$ - \sum_{i=1}^d 1_{q^i<Q^i}\sup_{\delta^{i,b}}
\Lambda^{i,b}(\delta^{i,b}) \left[u(t,x-\Delta^i S^i+\Delta^i\delta^{i,b},q+\Delta^i e^i,S) - u(t,x,q,S) \right]$$
$$ - \sum_{i=1}^d 1_{q^i>-Q^i} \sup_{\delta^{i,a}} \Lambda^{i,a}(\delta^{i,a}) \left[u(t,x+\Delta^i S^i+\Delta^i \delta^{i,a},q-\Delta^ie^i,S) -
u(t,x,q,S) \right],$$ for $\forall i \in \{ 1, \ldots, d\}, q^i \in \mathcal{Q}^i = \lbrace -Q^i, -Q^i + \Delta^i, \ldots, Q^i-\Delta^i, Q^i \rbrace$, and $(t,S,x) \in [0,T]\times \mathbb{R}^d\times \mathbb{R}$, with the terminal condition \begin{equation}\label{sec5:HJBModelACT}u(T,x,q,S) = -\exp\left(-\gamma\left(x+\sum_{i=1}^d q^iS^i - \ell_d(q^1,\ldots, q^d)\right)\right).\end{equation}

If one uses the ansatz
\begin{equation}\label{sec5:ansatzModelA}u(t,x,q,S) = -\exp\left(-\gamma\left(x+\sum_{i=1}^d q^iS^i + \theta(t,q)\right)\right),\end{equation}
then Eq.~(\ref{sec5:HJBModelA}) becomes
\begin{equation}
\label{sec5:thetaModelA}
0=-\partial_t \theta(t,q) + \frac{1}{2} \gamma \sum_{i=1}^d\sum_{j=1}^d \rho^{i,j} \sigma^i\sigma^j q^i q^j
\end{equation}
$$-  \sum_{i=1}^d 1_{q^i<Q^i}\sup_{\delta^{i,b}} \frac{\Lambda^{i,b}(\delta^{i,b})}{\gamma}\left(1-\exp\left(-\gamma\left(\Delta^i \delta^{i,b} + \theta(t,q+\Delta^ie^i) - \theta(t,q) \right)\right)\right)$$
$$- \sum_{i=1}^d  1_{q^i>-Q^i} \sup_{\delta^{i,a}} \frac{\Lambda^{i,a}(\delta^{i,a})}{\gamma}\left(1-\exp\left(-\gamma \left(\Delta^i \delta^{i,a} + \theta(t,q-\Delta^ie^i) - \theta(t,q) \right)\right)\right),
$$ for $\forall i \in \{ 1, \ldots, d\}, q^i \in \mathcal{Q}^i$, and $t \in [0,T]$, and the terminal condition (\ref{sec5:HJBModelACT}) becomes $\theta(T,q) = -\ell_d(q^1,\ldots,q^d)$.\\

The HJB equation associated with Model B is given by
\begin{equation}
\label{sec5:HJBModelB}0= -\partial_t u(t,x,q,S) + \frac{1}{2} \gamma \sum_{i=1}^d\sum_{j=1}^d \rho^{i,j} \sigma^i\sigma^j q^i q^j
\end{equation}
$$- \frac 12 \sum_{i=1}^d \sum_{j=1}^d \rho^{i,j} \sigma^i\sigma^j \partial^2_{S^iS^j} u(t,x,q,S)$$
$$ - \sum_{i=1}^d 1_{q^i<Q^i}\sup_{\delta^{i,b}}
\Lambda^{i,b}(\delta^{i,b}) \left[u(t,x-\Delta^i S^i+\Delta^i\delta^{i,b},q+\Delta^i e^i,S) - u(t,x,q,S) \right]$$
$$ - \sum_{i=1}^d 1_{q^i>-Q^i} \sup_{\delta^{i,a}} \Lambda^{i,a}(\delta^{i,a}) \left[u(t,x+\Delta^i S^i+\Delta^i \delta^{i,a},q-\Delta^ie^i,S) -
u(t,x,q,S) \right],$$ for $\forall i \in \{ 1, \ldots, d\}, q^i \in \mathcal{Q}^i$ and $(t,S,x) \in [0,T]\times \mathbb{R}^d\times \mathbb{R}$, with the terminal condition \begin{equation}\label{sec5:HJBModelBCT}u(T,x,q,S) = x+\sum_{i=1}^d q^iS^i - \ell_d(q^1,\ldots,q^d).\end{equation}

If one uses the ansatz
\begin{equation}\label{sec5:ansatzModelB}u(t,x,q,S) = x+\sum_{i=1}^d q^iS^i + \theta(t,q),\end{equation}
then Eq.~(\ref{sec5:HJBModelB}) becomes
\begin{equation}
\label{sec5:thetaModelB}
0=-\partial_t \theta(t,q) + \frac{1}{2} \gamma \sum_{i=1}^d\sum_{j=1}^d \rho^{i,j} \sigma^i\sigma^j q^i q^j
\end{equation}
$$-  \sum_{i=1}^d 1_{q^i<Q^i} \sup_{\delta^{i,b}} \Lambda^{i,b}(\delta^{i,b})\left(\Delta^i \delta^{i,b} + \theta(t,q+\Delta^ie^i) - \theta(t,q) \right)$$
$$- \sum_{i=1}^d 1_{q^i>-Q^i} \sup_{\delta^{i,a}} \Lambda^{i,a}(\delta^{i,a})\left(\Delta^i \delta^{i,a} + \theta(t,q-\Delta^ie^i) - \theta(t,q) \right),
$$ for $\forall i \in \{ 1, \ldots, d\}, q^i \in \mathcal{Q}^i$, and $t \in [0,T]$, and the terminal condition (\ref{sec5:HJBModelBCT}) becomes $\theta(T,q) = -\ell_d(q^1,\ldots,q^d)$.\\

As in the single-asset case, Eqs.~(\ref{sec5:thetaModelA}) and (\ref{sec5:thetaModelB}) are in fact two systems of ODEs which belong to the same family. Let us introduce for $\xi >0$ the functions
$$H^{i,b}_{\xi}(p) = \sup_{\delta} \frac{\Lambda^{i,b}(\delta)}{\xi}\left(1-\exp\left(-\xi\Delta^i \left(\delta - p \right)\right)\right)$$
and
$$H^{i,a}_{\xi}(p) = \sup_{\delta} \frac{\Lambda^{i,a}(\delta)}{\xi}\left(1-\exp\left(-\xi\Delta^i \left(\delta - p \right)\right)\right),$$
and the limit functions (for $\xi=0$) $$H^{i,b}_{0}(p) = \Delta^i \sup_{\delta} \Lambda^{i,b}(\delta)(\delta - p),$$ and $$H^{i,a}_{0}(p) = \Delta^i \sup_{\delta} \Lambda^{i,a}(\delta)(\delta - p).$$  Then, we can consider the general equation
\begin{equation}
\label{sec5:thetagen}
0=-\partial_t \theta(t,q) + \frac{1}{2} \gamma \sum_{i=1}^d\sum_{j=1}^d \rho^{i,j} \sigma^i\sigma^j q^i q^j
\end{equation}
$$-  \sum_{i=1}^d 1_{q^i<Q^i} H^{i,b}_{\xi}\left(\frac{\theta(t,q) - \theta(t,q+\Delta^ie^i)}{\Delta^i}\right) - \sum_{i=1}^d 1_{q^i>-Q^i} H^{i,a}_{\xi}\left(\frac{\theta(t,q) - \theta(t,q-\Delta^ie^i)}{\Delta^i}\right),$$
for $\forall i \in \{ 1, \ldots, d\}, q^i \in \mathcal{Q}^i$, and $t \in [0,T]$, with the terminal condition
\begin{equation}
\label{sec5:thetagenCT}
\theta(T,q) = -\ell_d(q^1,\ldots,q^d).
\end{equation}

Eq.~(\ref{sec5:thetaModelA}) corresponds to Eq.~(\ref{sec5:thetagen}) for $\xi=\gamma$ while Eq.~(\ref{sec5:thetaModelB}) corresponds to Eq.~(\ref{sec5:thetagen}) for $\xi=0$.\\

\subsection{Solution of the market making problem}

In order to characterize the optimal quotes in our multi-asset market making model, we proceed as in the single-asset case. In particular, we start by proving that there exists a solution of Eq.~(\ref{sec5:thetagen}) with terminal condition (\ref{sec5:thetagenCT}).\\

\begin{Theorem}
\label{sec5:theotheta}
  There exists a unique function $\theta : [0,T]\times \prod_{i=1}^{d}\mathcal{Q}^i \rightarrow \mathbb{R}$, $C^1$ in time, solution of Eq.~(\ref{sec5:thetagen}) with terminal condition (\ref{sec5:thetagenCT}).
\end{Theorem}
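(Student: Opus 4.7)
The plan is to follow exactly the same scheme used for Theorem 3.2.1 (the single-asset analogue), namely: local existence via Cauchy--Lipschitz, then a comparison principle together with explicit sub- and super-solutions to obtain a priori bounds, and finally extension of the local solution to the whole interval $[0,T]$. Observe first that $\prod_{i=1}^{d}\mathcal{Q}^i$ is a finite set, so Eq.~(\ref{sec5:thetagen}) is a finite (although large) system of ODEs in the single variable $t$, with unknowns $(\theta(\cdot,q))_{q\in\prod_i\mathcal{Q}^i}$. By the same argument as in Lemma~\ref{sec3:lemmaH} applied component by component, each function $H^{i,b}_\xi$ and $H^{i,a}_\xi$ is $C^1$, hence the right-hand side of (\ref{sec5:thetagen}) is a $C^1$ function of $\theta$. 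The Cauchy--Lipschitz theorem therefore yields a maximal $\tau\in[0,T)$ and a unique $C^1$ solution $\theta:(\tau,T]\times \prod_i\mathcal{Q}^i\to\mathbb{R}$ satisfying the terminal condition (\ref{sec5:thetagenCT}).

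Next I would state and prove a multi-asset analogue of Lemma~\ref{sec3:cp}. Given a subsolution $\underline{\theta}$ and a supersolution $\overline{\theta}$ on $[\tau,T]\times\prod_i\mathcal{Q}^i$, one considers a maximizer $(t^*_\varepsilon,q^*_\varepsilon)$ of $(t,q)\mapsto\underline{\theta}(t,q)-\overline{\theta}(t,q)-\varepsilon(T-t)$. Assuming $t^*_\varepsilon<T$ and using the sub-/super-solution inequalities yields
\[
\sum_{i=1}^{d} 1_{q^{*i}_\varepsilon<Q^i}\!\left[H^{i,b}_\xi\!\left(\tfrac{\overline\theta(t^*_\varepsilon,q^*_\varepsilon)-\overline\theta(t^*_\varepsilon,q^*_\varepsilon+\Delta^i e^i)}{\Delta^i}\right)-H^{i,b}_\xi\!\left(\tfrac{\underline\theta(t^*_\varepsilon,q^*_\varepsilon)-\underline\theta(t^*_\varepsilon,q^*_\varepsilon+\Delta^i e^i)}{\Delta^i}\right)\right]
\]
plus the analogous ask term is $\le-\varepsilon$. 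But by the maximality property of $(t^*_\varepsilon,q^*_\varepsilon)$, each bracket above is nonnegative (because each $H^{i,b}_\xi, H^{i,a}_\xi$ is decreasing and the inner arguments compare the right way), giving the contradiction $0\le-\varepsilon$. Hence $t^*_\varepsilon=T$ and one concludes $\underline\theta\le\overline\theta$ by sending $\varepsilon\to 0$; this is essentially a verbatim transcription of the single-asset argument, with the key point being that the nonlocal terms split additively across assets.

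With the comparison principle in hand, I would establish two-sided a priori bounds on the local solution $\theta$. For the upper bound, the function
\[
\overline\theta(t,q)=\sum_{i=1}^{d}\bigl(H^{i,b}_\xi(0)+H^{i,a}_\xi(0)\bigr)(T-t)
\]
is a supersolution of Eq.~(\ref{sec5:thetagen}) with $\overline\theta(T,\cdot)=0\ge-\ell_d$: its time derivative contributes $\sum_i(H^{i,b}_\xi(0)+H^{i,a}_\xi(0))$, which on the interior dominates the nonlocal terms (since $\overline\theta$ is independent of $q$, those evaluate to $H^{i,b}_\xi(0)+H^{i,a}_\xi(0)$), and the quadratic form $\frac12\gamma\sum_{i,j}\rho^{i,j}\sigma^i\sigma^j q^iq^j\ge 0$ only helps; on the boundary some indicator functions vanish, which makes the inequality even easier. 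Hence $\theta\le\overline\theta$. For the lower bound, differentiating in $t$ the map
\[
t\mapsto \theta(t,q)+\tfrac12\gamma\sum_{i=1}^{d}\sum_{j=1}^{d}\rho^{i,j}\sigma^i\sigma^j q^iq^j(T-t)
\]
yields $-$ (nonnegative $H$-terms) $\le 0$ (using $H^{i,b}_\xi,H^{i,a}_\xi\ge 0$, which follows from their definition by taking $\delta=p$), so this map is nonincreasing, and at $t=T$ it equals $-\ell_d(q)$. Since $\prod_i\mathcal{Q}^i$ is finite, both bounds are uniform in $q$ on any interval $[\tau',T]$.

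Since $\sup_q|\theta(t,q)|$ is controlled by explicit continuous functions of $t$ on the maximal interval of existence, a blow-up at some $\tau>0$ is impossible, and the solution therefore extends to the whole of $[0,T]\times\prod_i\mathcal{Q}^i$. Uniqueness is again a direct consequence of the Cauchy--Lipschitz theorem. The only genuinely new ingredient relative to the single-asset case is the bookkeeping for the vector-valued inventory and the quadratic form $\sum_{i,j}\rho^{i,j}\sigma^i\sigma^j q^iq^j$, which is nonnegative thanks to the fact that the correlation matrix of $(W^1,\dots,W^d)$ is positive definite; this is what I expect to be the only slightly delicate point, but it plays no role beyond guaranteeing the right sign in the monotonicity argument.
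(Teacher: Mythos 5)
Your proposal is correct and follows exactly the paper's own argument: local existence and uniqueness by Cauchy--Lipschitz for the finite ODE system, a multi-asset transcription of the comparison principle of Lemma~\ref{sec3:cp} giving the supersolution bound $\theta(t,q)\le\sum_{i}(H^{i,b}_\xi(0)+H^{i,a}_\xi(0))(T-t)$, the monotonicity of $t\mapsto\theta(t,q)+\frac12\gamma\sum_{i,j}\rho^{i,j}\sigma^i\sigma^jq^iq^j(T-t)$ for the lower bound, and hence no finite-time blow-up. The only difference is that you spell out the comparison-principle and bound verifications that the paper leaves as ``similar to Lemma~\ref{sec3:cp}''.
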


\begin{proof}

Eq.~(\ref{sec5:thetagen}) with terminal condition (\ref{sec5:thetagenCT}) is a backward Cauchy problem. Because the functions $H^{i,b}_\xi$ and $H^{i,a}_\xi$ are functions of class $C^1$ for all $i \in \{1,\ldots,d\}$, the Cauchy-Lipschitz theorem applies, and there exists $\tau \in [0,T)$ and a function $\theta : (\tau,T]\times \prod_{i=1}^{d}\mathcal{Q}^i  \rightarrow \mathbb{R}$, $C^1$ in time, solution of Eq.~(\ref{sec5:thetagen}) on $(\tau,T]$ with terminal condition (\ref{sec5:thetagenCT}).\\

$\forall q \in \prod_{i=1}^{d}\mathcal{Q}^i$, the function $t \in (\tau,T] \mapsto \theta(t,q)  + \frac 12 \gamma \sum_{i=1}^d\sum_{j=1}^d \rho^{i,j} \sigma^i\sigma^j q^i q^j (T-t)$ is a decreasing function. Therefore, the only reason why there would not be a global solution on $[0,T]$ is because $\sup_{q \in \prod_{i=1}^{d}\mathcal{Q}^i} \theta(t,q)$ blows up at $\tau>0$. However, by using a comparison principle similar to that of Lemma~\ref{sec3:cp}, we easily see that $$\sup_{q \in \prod_{i=1}^{d}\mathcal{Q}^i} \theta(t,q) \le \sum_{i=1}^d(H^{i,b}_\xi(0) + H^{i,a}_\xi(0))(T-t).$$ Therefore, $\sup_{q \in \prod_{i=1}^{d}\mathcal{Q}^i} \theta(t,q)$ cannot blow up in finite time, and $\theta$ is in fact defined on $[0,T]\times \prod_{i=1}^{d}\mathcal{Q}^i$.\\

Uniqueness comes then for the Cauchy-Lipschitz theorem.\qed\\
\end{proof}

We are now ready to state the two theorems characterizing the optimal quotes in Model~A and Model B. The proofs of these results are based on verification arguments, and are (\emph{mutatis mutandis}) identical to those in the single-asset case.\\

Let us start with Model A.\\

\begin{Theorem}
\label{sec5:theomodelA}
Let us consider the solution $\theta$ of Eq.~(\ref{sec5:thetagen}) with terminal condition (\ref{sec5:thetagenCT}) for $\xi=\gamma$.\\

Then, $u : (t,x,q,S) \mapsto -\exp(-\gamma(x+\sum_{i=1}^d q^iS^i + \theta(t,q)))$  defines a solution to Eq.~(\ref{sec5:HJBModelA}) with terminal condition (\ref{sec5:HJBModelACT}), and
$$u(t,x,q,S) = \sup_{(\delta^{1,b}_s,\ldots, \delta^{d,b}_s)_{s\ge t}, (\delta^{1,a}_s,\ldots, \delta^{d,a}_s )_{s\ge t}  \in \mathcal{A}(t)^d} \mathbb{E}\Bigg[- \exp\Bigg(-\gamma\Bigg(X^{t,x,\delta^b,\delta^a}_T+\sum_{i=1}^d q^{i,t,q^i,\delta^b,\delta^a}_T S^{i,t,S^i}_T$$$$- \ell_d(q^{1,t,q^1,\delta^b,\delta^a}_T, \ldots, q^{d,t,q^d,\delta^b,\delta^a}_T)\Bigg)\Bigg) \Bigg],$$
where
$$\forall i \in \{1, \ldots, d\}, \quad dS^{i,t,S^i}_s = \sigma^i dW^i_s, \qquad S^{i,t,S^i}_t = S^i,$$
$$dX^{t,x,\delta^b,\delta^a}_s = \sum_{i=1}^d (S^i_s + \delta^{i,a}_s)  \Delta^i dN^{i,a}_s - (S^i_s - \delta^{i,b}_s)  \Delta^i dN^{i,b}_s , \qquad  X^{t,x,\delta^b,\delta^a}_t = x,$$
$$\forall i \in \{1, \ldots, d\}, \quad dq^{i,t,q^i,\delta^b,\delta^a}_s = \Delta^i dN^{i,b}_s - \Delta^i dN^{i,a}_s, \qquad  q^{i,t,q^i,\delta^b,\delta^a}_t = q^i,$$
and where $\forall i \in \{1, \ldots, d\}$, the point processes $N^{i,b}$ and $N^{i,a}$ have stochastic intensity $(\lambda^{i,b}_s)_s$ and $(\lambda^{i,a}_s)_s$ given by $\lambda^{i,b}_s = \Lambda^{i,b}(\delta^{i,b}_s) 1_{q^i_{s-} < Q^i}$ and $\lambda^{i,a}_s = \Lambda^{i,a}(\delta^{i,a}_s) 1_{q^i_{s-} > -Q^i}$.\\

For $i \in \{1, \ldots, d\},$ the optimal bid and ask quotes $S^{i,b}_t = S^i_t - \delta^{i,b*}_t$ (for $q^i_{t-}< Q^i$) and $S^{i,a}_t = S^i_t + \delta^{i,a*}_t$ (for $q^i_{t-}>-Q^i$) are characterized by\begin{equation}
\label{sec5:deltaoptimalA}
\delta^{i,b*}_t = \tilde{\delta}^{i,b*}_\gamma\left(\frac{\theta(t,q_{t-}) - \theta(t,q_{t-}+\Delta^i e^i)}{\Delta^i}\right) \text{ and } \delta^{i,a*}_t = \tilde{\delta}^{i,a*}_\gamma\left(\frac{\theta(t,q_{t-}) - \theta(t,q_{t-}-\Delta^i e^i)}{\Delta^i}\right),
\end{equation}where the functions $\tilde{\delta}^{i,b*}_\gamma(\cdot)$ and $\tilde{\delta}_\gamma^{i,a*}(\cdot)$ are defined by
$$\tilde{\delta}^{i,b*}_\gamma(p) = {\Lambda^{i,b}}^{-1}\left(\gamma H^{i,b}_{\gamma}(p) - \frac{{H_{\gamma}^{i,b}}'(p)}{\Delta^i}\right) \text{ and } \tilde{\delta}^{i,a*}_\gamma(p) = {\Lambda^{i,a}}^{-1}\left(\gamma H^{i,a}_{\gamma}(p) - \frac{{H_{\gamma}^{i,a}}'(p)}{\Delta^i}\right).$$

\end{Theorem}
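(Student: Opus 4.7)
The plan is to mirror the single-asset verification argument of Theorem~3.2, working component by component in the $d$-dimensional state. First I would check directly, by substituting the ansatz $u(t,x,q,S)=-\exp(-\gamma(x+\sum_i q^iS^i+\theta(t,q)))$ into Eq.~(\ref{sec5:HJBModelA}) and using the fact that $\theta$ solves Eq.~(\ref{sec5:thetagen}) with $\xi=\gamma$, that $u$ is a classical solution of the HJB equation with the prescribed terminal condition. The Hamiltonians split asset-by-asset, and the suprema are attained at the quotes given in Eq.~(\ref{sec5:deltaoptimalA}) thanks to the analogue of Lemma~\ref{sec3:lemmaH} applied to each pair $(\Lambda^{i,b},\Lambda^{i,a})$.

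Next, to prove $u$ is the value function, I would fix admissible controls $(\delta^{i,b},\delta^{i,a})_{s\ge t}\in\mathcal{A}(t)^d$ and apply Itô's formula for jump-diffusions to $u(s,X_s,q_s,S_s)$ between $t$ and $T$. This produces a time integral involving $\partial_t u+\tfrac12\sum_{i,j}\rho^{i,j}\sigma^i\sigma^j\partial^2_{S^iS^j}u$, a continuous martingale part driven by $(W^1,\ldots,W^d)$, and $2d$ jump integrals against $N^{i,b}$ and $N^{i,a}$. Using that each $\delta^{i,\cdot}_s\ge -C$ a.s. for some $C<0$ (admissibility), $|q^i_s|\le Q^i$, and boundedness of $\theta$ on $[t,T]\times\prod_i\mathcal{Q}^i$ from Theorem~\ref{sec5:theotheta}, I would reproduce the chain of Hölder/exponential-martingale estimates from the proof of Theorem~3.2, but with the product $\prod_{i=1}^d$ replacing single factors: at each step one controls $\mathbb{E}[\exp(-\alpha\int_t^s\delta^{i,b}_u dN^{i,b}_u)]$ by $\exp(\Lambda^{i,b}(-C)(T-t)(e^{\alpha C}-1))$, and similarly for the ask side and for the stochastic exponential of the Brownian integral $\int_t^s\sigma^i q^i_{u-}dW^i_u$. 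This justifies taking expectations and discarding the local-martingale parts.

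After taking expectations, the HJB inequality (substituting the definition of $\theta$) yields
\[
\mathbb{E}\bigl[-\exp(-\gamma(X^{t,x,\delta^b,\delta^a}_T+\textstyle\sum_i q^{i}_T S^{i,t,S^i}_T-\ell_d(q^1_T,\ldots,q^d_T)))\bigr]\le u(t,x,q,S),
\]
with equality when each $\delta^{i,b*}$, $\delta^{i,a*}$ is chosen in feedback form according to Eq.~(\ref{sec5:deltaoptimalA}), since this choice achieves the pointwise suprema in each Hamiltonian $H^{i,b}_\gamma$, $H^{i,a}_\gamma$ by (the multi-asset version of) Lemma~\ref{sec3:lemmaH}. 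Finally, one checks that these feedback controls are themselves in $\mathcal{A}(t)^d$, using that $\theta$ is bounded and that $p\mapsto\tilde\delta^{i,b*}_\gamma(p)$, $\tilde\delta^{i,a*}_\gamma(p)$ are continuous, hence the increments $(\theta(t,q)-\theta(t,q\pm\Delta^i e^i))/\Delta^i$ take finitely many values, which gives bounded, predictable optimal quotes.

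The main obstacle is purely technical: carrying out the integrability estimates cleanly with $d$ counting processes instead of two. Using independence of the Brownian motion and the point processes, and the fact that the inventories are bounded so $\int_t^s\sigma^i q^i_{u-}dW^i_u$ has bounded quadratic variation, a single application of Hölder's inequality with exponents $2d+1$ (one factor for the Brownian part, $2d$ factors for the compensated-jump parts) reduces the problem to the one-dimensional bounds already used in the proof of Theorem~3.2. Once these estimates are in place, the rest of the verification argument is a routine transcription of the single-asset proof, so no further subtlety is expected.
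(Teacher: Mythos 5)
Your proposal is correct and coincides with the paper's approach: the paper itself only states that the multi-asset proofs are verification arguments ``(\emph{mutatis mutandis}) identical'' to the single-asset case, and your write-up is precisely that transcription of the proof of Theorem~3.2 --- ansatz substitution, asset-by-asset splitting of the Hamiltonians via the analogue of Lemma~3.1, It\^o's formula, and the H\"older/exponential-moment estimates adapted to $2d$ point processes and the correlated Brownian part. The only genuinely new bookkeeping (the choice of H\"older exponents and treating $\sum_i\int\sigma^i q^i_{u-}\,dW^i_u$ as a single continuous local martingale with bounded quadratic variation) is handled correctly, so nothing further is needed.
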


For model B, the result is the following:

\begin{Theorem}
\label{sec5:theomodelB}
Let us consider the solution $\theta$ of Eq.~(\ref{sec5:thetagen}) with terminal condition (\ref{sec5:thetagenCT}) for $\xi=0$.\\

Then, $u : (t,x,q,S) \mapsto x+\sum_{i=1}^d q^iS^i + \theta(t,q)$  defines a solution to Eq.~(\ref{sec5:HJBModelB}) with terminal condition (\ref{sec5:HJBModelBCT}), and
$$u(t,x,q,S) = \sup_{(\delta^{1,b}_s,\ldots, \delta^{d,b}_s)_{s\ge t}, (\delta^{1,a}_s,\ldots, \delta^{d,a}_s )_{s\ge t}  \in \mathcal{A}(t)^d} \mathbb{E}\Bigg[X^{t,x,\delta^b,\delta^a}_T+\sum_{i=1}^d q^{i,t,q^i,\delta^b,\delta^a}_T S^{i,t,S^i}_T$$$$- \ell_d(q^{1,t,q^1,\delta^b,\delta^a}_T, \ldots, q^{d,t,q^d,\delta^b,\delta^a}_T) - \frac 12 \gamma \int_0^T \sum_{i=1}^d \sum_{j=1}^d \rho^{i,j} \sigma^i\sigma^j q^{i,t,q^i,\delta^b,\delta^a}_t q^{j,t,q^j,\delta^b,\delta^a}_t  dt\Bigg],$$
where:
$$\forall i \in \{1, \ldots, d\}, \quad dS^{i,t,S^i}_s = \sigma^i dW^i_s, \qquad S^{i,t,S^i}_t = S^i,$$
$$dX^{t,x,\delta^b,\delta^a}_s = \sum_{i=1}^d (S^i_s + \delta^{i,a}_s)  \Delta^i dN^{i,a}_s - (S^i_s - \delta^{i,b}_s)  \Delta^i dN^{i,b}_s , \qquad  X^{t,x,\delta^b,\delta^a}_t = x,$$
$$\forall i \in \{1, \ldots, d\}, \quad dq^{i,t,q^i,\delta^b,\delta^a}_s = \Delta^i dN^{i,b}_s - \Delta^i dN^{i,a}_s, \qquad  q^{i,t,q^i,\delta^b,\delta^a}_t = q^i,$$
and where $\forall i \in \{1, \ldots, d\}$, the point processes $N^{i,b}$ and $N^{i,a}$ have stochastic intensity $(\lambda^{i,b}_s)_s$ and $(\lambda^{i,a}_s)_s$ given by $\lambda^{i,b}_s = \Lambda^{i,b}(\delta^{i,b}_s) 1_{q^i_{s-} < Q^i}$ and $\lambda^{i,a}_s = \Lambda^{i,a}(\delta^{i,a}_s) 1_{q^i_{s-} > -Q^i}$.\\

For $i \in \{1, \ldots, d\},$ the optimal bid and ask quotes $S^{i,b}_t = S^i_t - \delta^{i,b*}_t$ (for $q^i_{t-}< Q^i$) and $S^{i,a}_t = S^i_t + \delta^{i,a*}_t$ (for $q^i_{t-}>-Q^i$) are characterized by
\vspace{3mm}\begin{equation}
\label{sec5:deltaoptimalB}
\delta^{i,b*}_t = \tilde{\delta}^{i,b*}_0\left(\frac{\theta(t,q_{t-}) - \theta(t,q_{t-}+\Delta^i e^i)}{\Delta^i}\right) \text{ and } \delta^{i,a*}_t = \tilde{\delta}^{i,a*}_0\left(\frac{\theta(t,q_{t-}) - \theta(t,q_{t-}-\Delta^i e^i)}{\Delta^i}\right),
\end{equation}where the functions $\tilde{\delta}^{i,b*}_0(\cdot)$ and $\tilde{\delta}_0^{i,a*}(\cdot)$ are defined by
$$\tilde{\delta}^{i,b*}_0(p) = {\Lambda^{i,b}}^{-1}\left(- \frac{{H_{0}^{i,b}}'(p)}{\Delta^i}\right) \text{ and } \tilde{\delta}^{i,a*}_0(p) = {\Lambda^{i,a}}^{-1}\left(- \frac{{H_{0}^{i,a}}'(p)}{\Delta^i}\right).$$
\end{Theorem}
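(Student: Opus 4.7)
The plan is to run a verification argument parallel to the proof of Theorem~\ref{sec3:theomodelB}, adapted to the $d$-asset setting. Given the candidate $u(t,x,q,S) = x + \sum_{i=1}^d q^i S^i + \theta(t,q)$, with $\theta$ the unique solution of Eq.~(\ref{sec5:thetagen})--(\ref{sec5:thetagenCT}) for $\xi=0$ supplied by Theorem~\ref{sec5:theotheta}, I would first apply It\^o's formula to $u$ along the controlled trajectory on $[t,T]$ for arbitrary admissible controls. Since $u$ is affine in $(x, S^1, \ldots, S^d)$, one has $\partial_{S^i} u = q^i$ and $\partial^2_{S^iS^j} u = 0$, so the quadratic covariation term $\frac{1}{2}\sum_{i,j} \rho^{i,j}\sigma^i\sigma^j \partial^2_{S^iS^j} u$ vanishes identically despite the correlation between the Brownian motions. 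The diffusion piece reduces to the local martingale $\sum_{i=1}^d \sigma^i q^{i}_{s} dW^i_s$, and each bid (resp.\ ask) jump of asset $i$ contributes $\Delta^i \delta^{i,b}_s + \theta(s, q_{s-}+\Delta^i e^i) - \theta(s, q_{s-})$ (resp.\ the symmetric ask expression).

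Second, I would verify the integrability conditions that allow passage to expectations and elimination of the martingale terms. Because each coordinate $q^i$ lives in the bounded lattice $\mathcal{Q}^i$, $\theta$ is bounded on $[0,T] \times \prod_i \mathcal{Q}^i$ by Theorem~\ref{sec5:theotheta}, and the intensities $\Lambda^{i,b}(\delta^{i,b}_s)$, $\Lambda^{i,a}(\delta^{i,a}_s)$ are dominated by $\Lambda^{i,b}(-C)$, $\Lambda^{i,a}(-C)$ whenever the controls are bounded below by $-C$, the required $L^1$ estimates follow by exactly the same short computations used at the end of the proof of Theorem~\ref{sec3:theomodelB}. Since $u$ has only affine growth in the state, no exponential-moment estimates of the Model~A type are required.

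Third, substituting the expression for $\partial_t \theta$ provided by Eq.~(\ref{sec5:thetagen}) with $\xi=0$ and using the pointwise inequality
\begin{equation*}
\Lambda^{i,b}(\delta^{i,b}_s)\bigl(\Delta^i \delta^{i,b}_s + \theta(s, q_{s-}+\Delta^i e^i) - \theta(s, q_{s-})\bigr) \;\le\; H^{i,b}_0\!\left(\frac{\theta(s,q_{s-}) - \theta(s, q_{s-}+\Delta^i e^i)}{\Delta^i}\right)
\end{equation*}
(valid by definition of $H^{i,b}_0$, and similarly on the ask side) yields the desired upper bound on the objective in terms of $u(t,x,q,S)$. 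Equality is attained when the controls are chosen as the coordinatewise pointwise maximizers of the $H^{i,b}_0$, $H^{i,a}_0$ suprema, which by the natural $d$-asset analogue of Lemma~\ref{sec3:lemmaH} (the optimizations decouple across $i$) are unique and produce exactly the closed-loop feedback of Eq.~(\ref{sec5:deltaoptimalB}); admissibility is immediate because the boundedness of $\theta$ forces the arguments fed into $\tilde{\delta}^{i,b*}_0, \tilde{\delta}^{i,a*}_0$ to remain bounded.

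The main obstacle I anticipate is purely bookkeeping: carefully matching the running-penalty quadratic form $\frac{1}{2}\gamma \sum_{i,j} \rho^{i,j}\sigma^i\sigma^j q^i q^j$ with the term carried by $\partial_t \theta$ in Eq.~(\ref{sec5:thetagen}). The fact that the ansatz is affine in $(x,S)$ is what guarantees that the correlation structure $\Sigma$ never enters the second-order diffusion operator and shows up only through this penalty, so the multi-asset extension introduces no genuinely new analytic ingredient beyond those already present in the single-asset Theorem~\ref{sec3:theomodelB}.
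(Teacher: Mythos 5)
Your proposal is correct and follows exactly the route the paper intends: the paper does not write out this proof but states that it is \emph{mutatis mutandis} identical to the single-asset verification argument of Theorem~\ref{sec3:theomodelB}, which is precisely what you execute (It\^o's formula on the affine-in-$(x,S)$ ansatz, the easy $L^1$ estimates from the bounded inventory lattice and bounded-below controls, the pointwise inequality from the definition of $H^{i,b}_0$ and $H^{i,a}_0$, and equality at the coordinatewise maximizers given by Lemma~\ref{sec3:lemmaH}). Your observation that the correlation matrix $\Sigma$ drops out of the diffusion operator because $\partial^2_{S^iS^j}u=0$ and enters only through the running penalty matched by Eq.~(\ref{sec5:thetagen}) is exactly the point that makes the multi-asset extension routine.
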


\subsection{About closed-form approximations}

In the single-asset case, closed-form approximations were obtained in Section 4, in the special case where $\Lambda^b = \Lambda^a =: \Lambda$ and $H''_\xi(0) > 0$.\footnote{Very recently, closed-form approximations have also been found in the case of asymmetric intensities -- see~\cite{egv2017}.} In the multi-asset case, if we assume that $\forall i \in \{1, \ldots, d\}, \Lambda^{i,b} = \Lambda^{i,a} =: \Lambda^i$, and $H^{i''}_\xi(0) > 0$, then it is natural to wonder whether the same techniques can be used in order to obtain closed-form approximations.\\

The answer is in fact that the change of variables used to derive closed-form approximations does not work in general in dimension higher than 1. However, the idea of transforming Eq.~\eqref{sec5:thetagen} into a multidimensional equivalent of Eq.~\eqref{sec4:thetatilde} enables to obtain results, without using the Hopf-Cole transform -- i.e., without relying on a  multidimensional equivalent of Eq.~\eqref{sec4:v}.\\

Following the same reasoning as in Section 4, we can indeed introduce the PDE
$$0 = - \partial_t \tilde\theta(t,q) + \frac 12 \gamma \sum_{i=1}^d\sum_{j=1}^d \rho^{i,j} \sigma^i\sigma^j q^i q^j - 2\sum_{i=1}^d H^i_{\xi}\left(0\right)$$\begin{equation}\label{sec5:thetatilde}- \sum_{i=1}^d H_{\xi}^{i''}(0) (\partial_{q^i} \tilde\theta(t,q))^2 + \Delta^i H_{\xi}^{i'}(0) \partial^2_{q^iq^i} \tilde\theta(t,q),\end{equation}
with final condition $\tilde\theta(T,q_1,\ldots,q_d) = -\ell_d(q_1,\ldots,q_d)$.\\

In the case where $\ell_d(q_1,\ldots,q_d) = \sum_{i=1}^d\sum_{j=1}^d a^{i,j} q^i q^j$ with $(a^{i,j})_{i,j}$ a symmetric positive matrix, it is easy to see that Eq.~\eqref{sec5:thetatilde} can be solved in closed-form by using the ansatz
$$\tilde{\theta}(t,q) = \theta_0(t) - q'\theta_2(t)q,$$
where $\theta_2(t)$ is a $d\times d$ symmetric matrix (see the companion paper \cite{egv2017}).\\

In particular, we show in \cite{egv2017} that $\theta_2(t)$ verifies:
$$\theta_2(t) \to_{T \to +\infty} \frac 12 \sqrt{\frac \gamma 2} \Gamma,$$ where$$\Gamma = D^{-\frac 12}\left(D^{\frac 12} \Sigma D^{\frac 12}\right)^{\frac 12}D^{-\frac 12}, \qquad D =
\left(\begin{matrix}
H_{\xi}^{1''}(0) &  0  & \ldots & 0\\
0  &  H_{\xi}^{2''}(0) & \ldots & 0\\
\vdots & \vdots & \ddots & \vdots\\
0  &   0       &\ldots & H_{\xi}^{d''}(0)
\end{matrix}\right).
$$

As a consequence, we can consider the approximations
$$\frac{\theta(t,q) - \theta(t,q+\Delta^i e^i)}{\Delta^i} \simeq \sqrt{\frac \gamma 2 }\left( \Gamma^{ii} \frac{2q^i+\Delta^i}{2} + \sum_{1\le j \le d, j\not= i} \Gamma^{ij} q^j \right)$$
and
$$\frac{\theta(t,q) - \theta(t,q-\Delta^i e^i)}{\Delta^i} \simeq -\sqrt{\frac \gamma 2 } \left(\Gamma^{ii} \frac{2q^i-\Delta^i}{2} + \sum_{1\le j \le d, j\not= i} \Gamma^{ij} q^j \right).$$

These approximations can be plugged into Eqs.~(\ref{sec5:deltaoptimalA}) and (\ref{sec5:deltaoptimalB}) to obtain the general approximation formulas
\begin{equation}
\label{sec5:gueantformulab}\delta^{i,b*}_t \simeq \delta_{\text{approx}}^{i,b*}(q_{t-}) := \tilde{\delta}^{i*}_\xi\left(\sqrt{\frac \gamma 2 } \left( \Gamma^{ii} \frac{2q_{t-}^i+\Delta^i}{2} + \sum_{1\le j \le d, j\not= i} \Gamma^{ij} q_{t-}^j \right)\right)
\end{equation}
and
\begin{equation}
\label{sec5:gueantformulaa}
\delta^{i,a*}_t \simeq \delta_{\text{approx}}^{i,a*}(q_{t-}) := \tilde{\delta}^{i*}_\xi\left(-\sqrt{\frac \gamma 2 } \left( \Gamma^{ii} \frac{2q_{t-}^i-\Delta^i}{2} + \sum_{1\le j \le d, j\not= i} \Gamma^{ij} q_{t-}^j \right)\right),
\end{equation} where
\begin{equation}
\label{sec5:deltatilde}
\tilde{\delta}^{i*}_\xi(p) = {\Lambda^i}^{-1}\left(\xi H^i_{\xi}(p) - \frac{{H^i_{\xi}}'(p)}{\Delta^i}\right).
\end{equation}

These approximation formulas are interesting because we see the cross-effects coming from the non-diagonal terms of the matrix $\Gamma$.

\section{Application: the case of two credit indices}

In this section, we apply our single-asset and multi-asset market making models, along with the associated closed-form approximations, to the case of two credit (or CDS) indices: the investment grade (IG) index CDX.NA.IG and the high yield (HY) index CDX.NA.HY. We consider a market maker who is in charge of proposing bid and ask quotes for these two indices, and we will assume throughout this section that this market maker is only concerned with spread risk and not with default risk -- this hypothesis is always made by practitioners for market making fixed-income and credit instruments.\\

Without going into the details of these indices,\footnote{See www.markit.com for more details.} we need to specify their main financial characteristics. Basically, for the IG index, the protection buyer pays quarterly (at fixed dates in order to ease compensation) a coupon corresponding to an annualized rate of 100~bps, and pays upfront an amount (positive or negative) corresponding to an upfront rate (positive or negative) determined by the market. In practice, for market making, the upfront rate is the relevant variable because a round trip on the index leads to a PnL corresponding to the difference between upfront rates (times the notional of the  transaction). However, in practice, this index is quoted in spread -- this spread being computed using a basic CDS model. For the HY index, the protection buyer pays quarterly (at fixed dates) a coupon corresponding to an annualized rate of 500 bps, and pays upfront an amount (positive or negative) corresponding to an upfront rate (positive or negative) determined by the market. Unlike the IG index, the HY index is quoted in upfront rate, or more precisely as $100(1- \text{upfront\; rate})$. It is also noteworthy that, in practice, buying the IG index means buying protection, whereas buying the HY index means selling protection. For simplifying the exposition, we will consider that buying always means buying protection, and that the index quotes are the upfront rates. The conversion of our numerical results into market standard quotes can easily be carried out by using a basic CDS model.\\

In order to apply our models to these credit indices, we need first to estimate the value of the different parameters. This has been done thanks to the data provided by BNP Paribas in the framework of the Research Initiative ``Nouveaux traitements pour les donn\'ees lacunaires issues des activit\'es de cr\'edit'', which is financed by BNP Paribas under the aegis of the Europlace Institute of Finance. For estimating the volatilty and correlation parameters $\sigma^{IG}, \sigma^{HY},$ and $\rho$, mid-prices (prices here are upfront rates) have been considered. For the intensity functions, exponential intensities have been considered and the parameters $A^{IG}$, $k^{IG}$, $A^{HY}$, and $k^{HY}$ have been estimated with classical likelihood maximization techniques using real quotes posted by the bank and the trades occurring between the bank and other market participants.\footnote{The period of estimation was over the first semester of 2016.}\\

If we consider that the two theoretical assets correspond to $\$1$ of each index respectively, the value of the parameters are the following (figures are rounded):
\begin{center}
\begin{tabular}{|c|c|c|}
  \hline
   & IG index & HY index \\
  \hline \hline
  $\sigma\; (\$.s^{-\frac 12})$ & $\sigma^{IG} = 5.83\cdot10^{-6}$  & $\sigma^{HY} = 2.15\cdot10^{-5}$ \\
  \hline
  $\rho$ & \multicolumn{2}{c|}{$\rho = 0.9$}  \\
  \hline
  $A\; (s^{-1})$  & $A^{IG} = 9.10\cdot10^{-4}$ & $A^{HY} = 1.06\cdot10^{-3}$ \\
  \hline
  $k\; (\$^{-1})$ & $k^{IG} = 1.79\cdot10^{4}$ & $k^{HY} = 5.47\cdot10^{3}$ \\
  \hline
\end{tabular}
\end{center}

Coming now to the order sizes, we consider orders of size $\Delta^{IG} = \$50$ million for the IG index, and orders of size $ \Delta^{HY} = \$10$ million for the HY index.\\

As far as risk aversion is concerned, we consider a reference value $\gamma = 6 \cdot10^{-5} \$^{-1}$.\\

Regarding risk limits, we consider that $\frac{Q^{IG}}{\Delta^{IG}} = \frac{Q^{HY}}{\Delta^{HY}} = 4$.\\

Finally, we always consider a final time $T = 7200 \;s$, corresponding to 2 hours. We will see indeed on the examples below that the asymptotic regime is reached very rapidly, in far less than 2 hours.\\

We can consider first the case of the IG index alone. We approximated the solution $\theta$ of the systems of ODEs (\ref{sec3:thetagen}) by using an implicit scheme and a Newton's method at each time step to deal with the nonlinearity. Then we obtained the feedback control function $$(t,q^{IG}) \mapsto (\delta^{IG,b}(t,q^{IG}), \delta^{IG,a}(t,q^{IG}))$$ which gives the optimal bid and ask quotes\footnote{In fact the difference between the reference price and the actual quote, as in the rest of the paper.} at time $t$ when $q^{IG}_{t-} = q^{IG}$.\\

We see in Figure \ref{convergence1} that the asymptotic regime is reached after less than 1 hour.\\

\begin{figure}[H]
  \centering
  \includegraphics[width=0.65\textwidth]{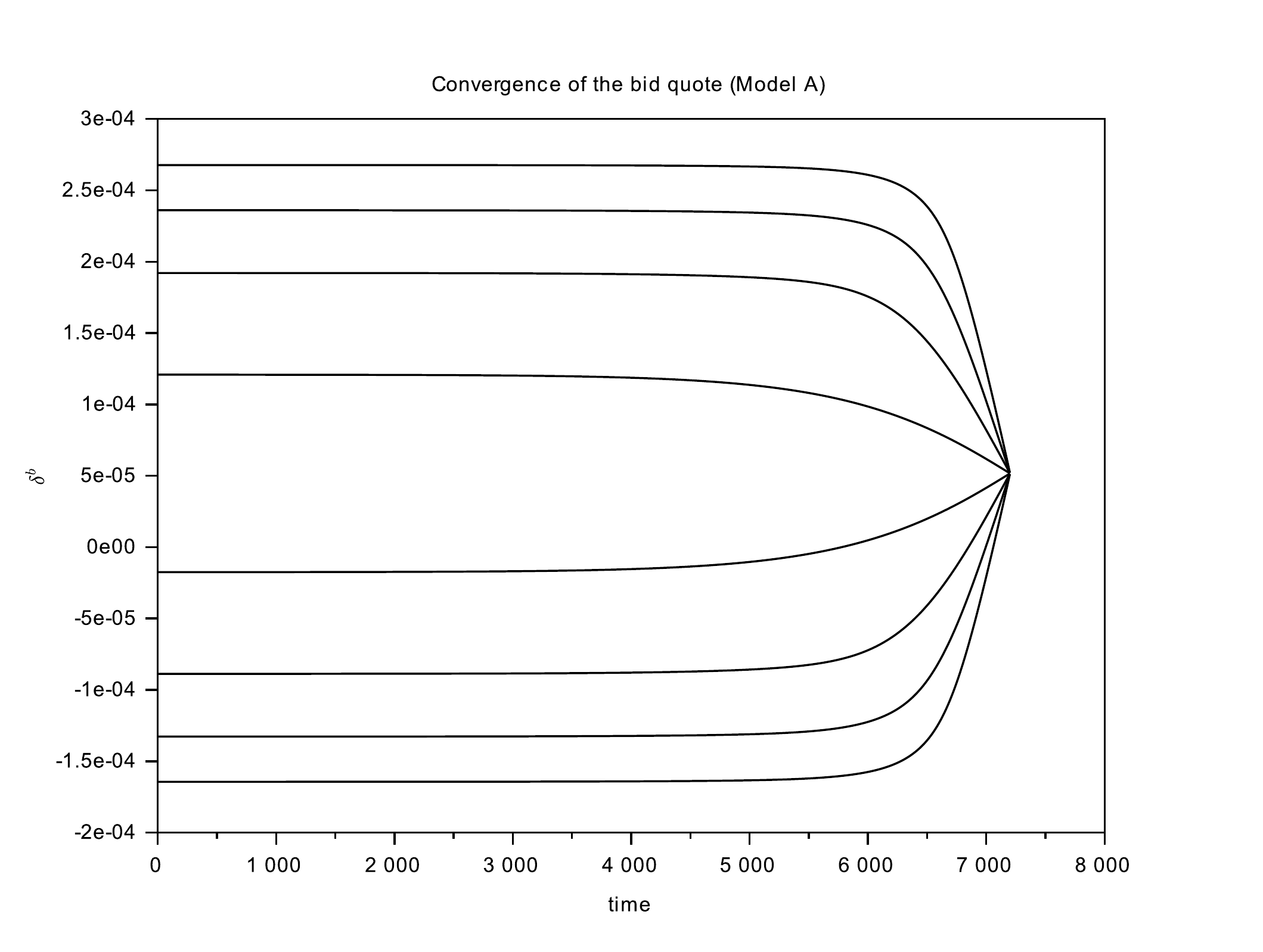}
  \caption{$t \mapsto \delta^{IG,b}(t,q^{IG})$ in Model A for the different values of $q^{IG}$.}\label{convergence1}
\end{figure}

In Figures \ref{IG_bid} and \ref{IG_ask}, we plot the initial (i.e., asymptotic) values of the bid and ask quotes, obtained with Model A, for the IG index, when it is considered on a stand-alone basis. We see that the market maker quotes conservatively at the bid and aggressively at the ask when he is long, and conversely that he quotes conservatively at the ask and aggressively at the bid when he is short.\\

\begin{figure}[H]
  \centering
  \includegraphics[width=0.65\textwidth]{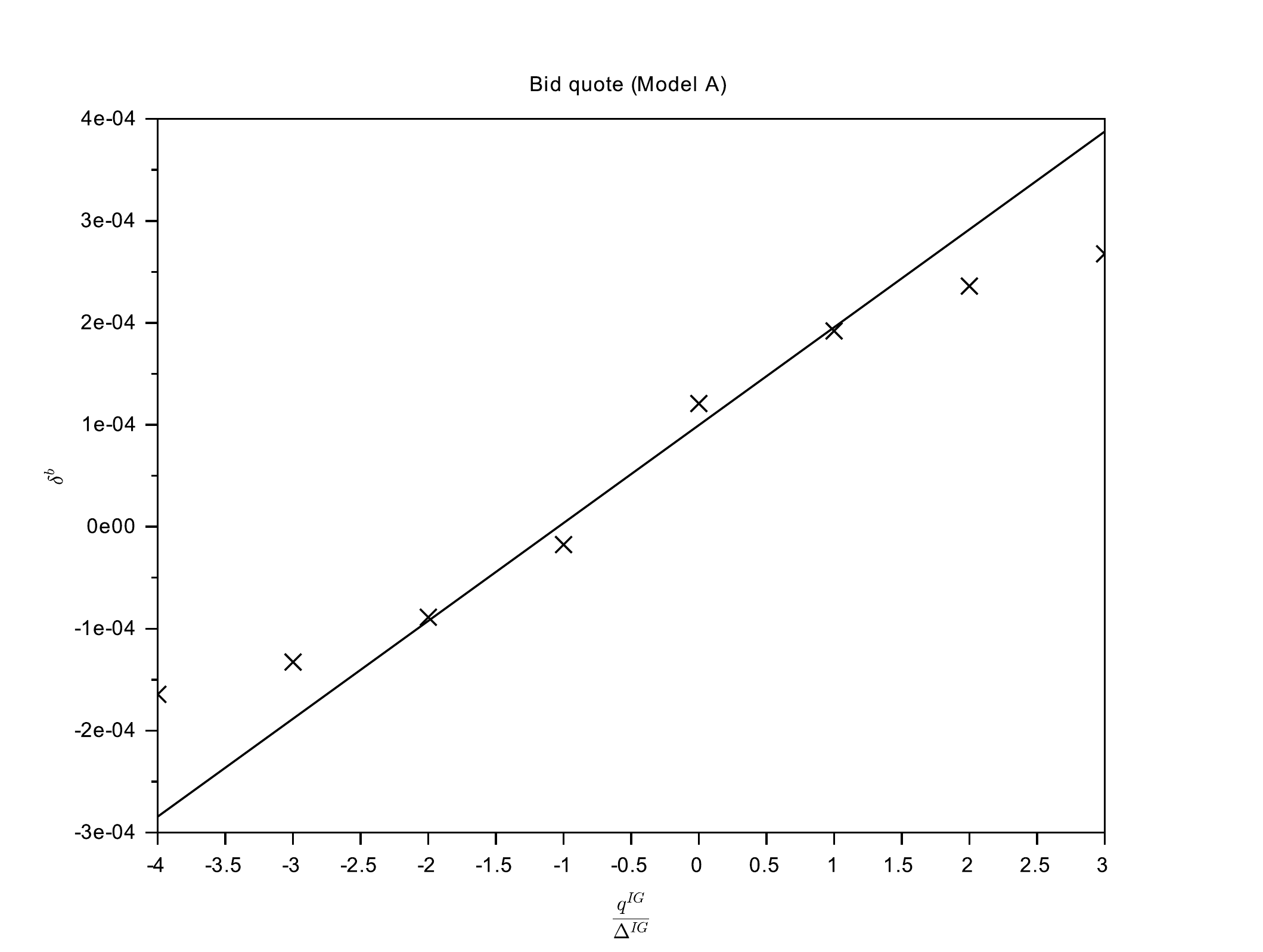}
  \caption{$q^{IG} \mapsto \delta^{IG,b}(0,q^{IG})$ (crosses) and the associated closed-form approximations (line) obtained with Eq. (\ref{sec4:glftformulab})  -- in the case of Model A.}\label{IG_bid}
\end{figure}

\begin{figure}[H]
  \centering
  \includegraphics[width=0.65\textwidth]{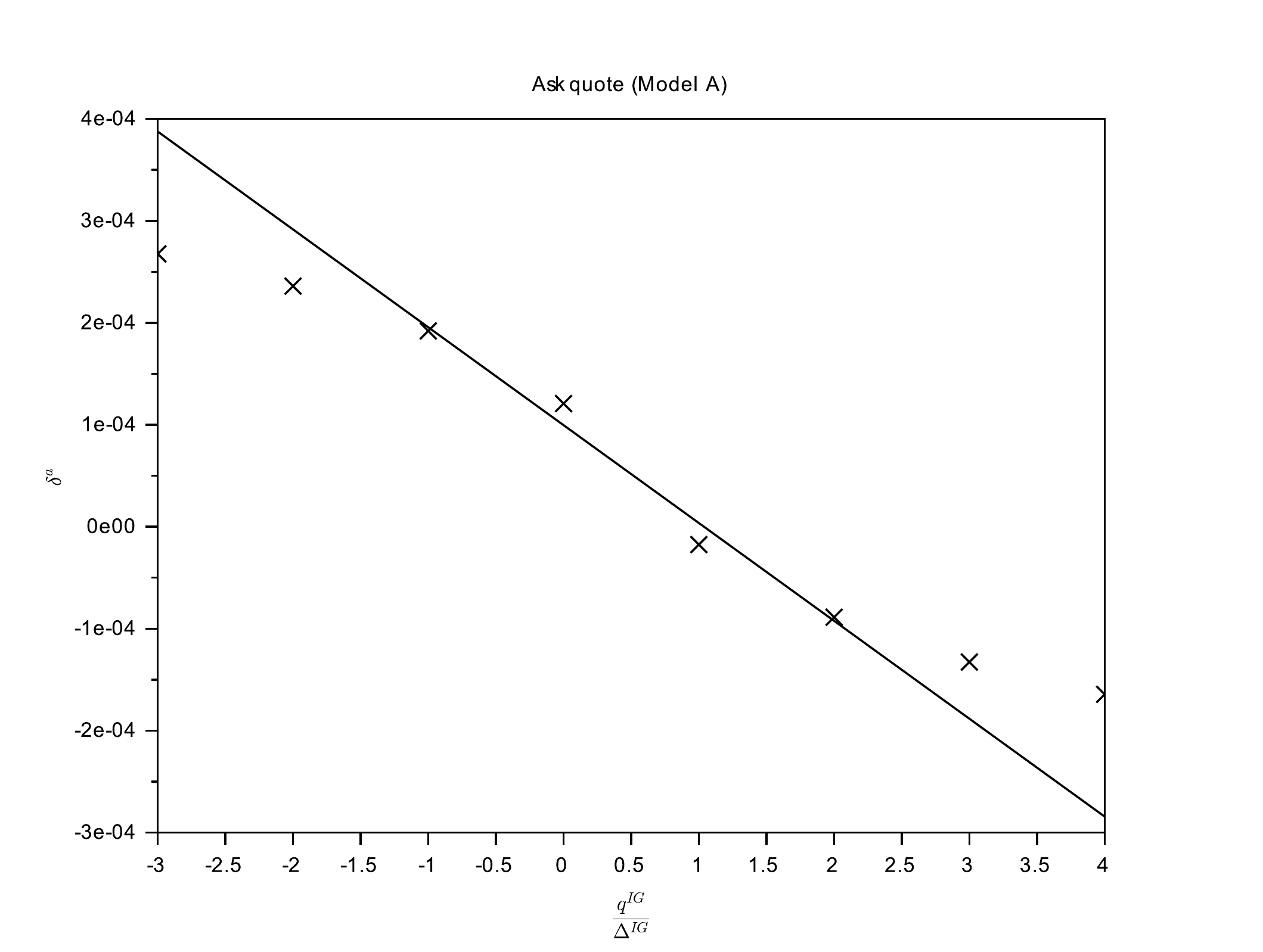}
  \caption{$q^{IG} \mapsto \delta^{IG,a}(0,q^{IG})$ (crosses) and the associated closed-form approximations (line)  obtained with Eq. (\ref{sec4:glftformulaa})  -- in the case of Model A.}\label{IG_ask}
\end{figure}

We also see that the closed-form approximations are satisfactory for small values of the inventory (in absolute value), but more questionable for larger values. In particular, the optimal quotes are not affine functions of the inventory as the closed-form approximations suggest.\\

The difference between actual values, obtained through the numerical approximation of the solution of a system of ODEs, and closed-form approximations can also be seen in Figures \ref{IG_spread} and \ref{IG_skew}, which represent the bid-ask spread and the skew of a market maker quoting optimally. The bid-ask spread is indeed not constant, and the skew is not linear on our example.\\

\begin{figure}[H]
  \centering
  \includegraphics[width=0.65\textwidth]{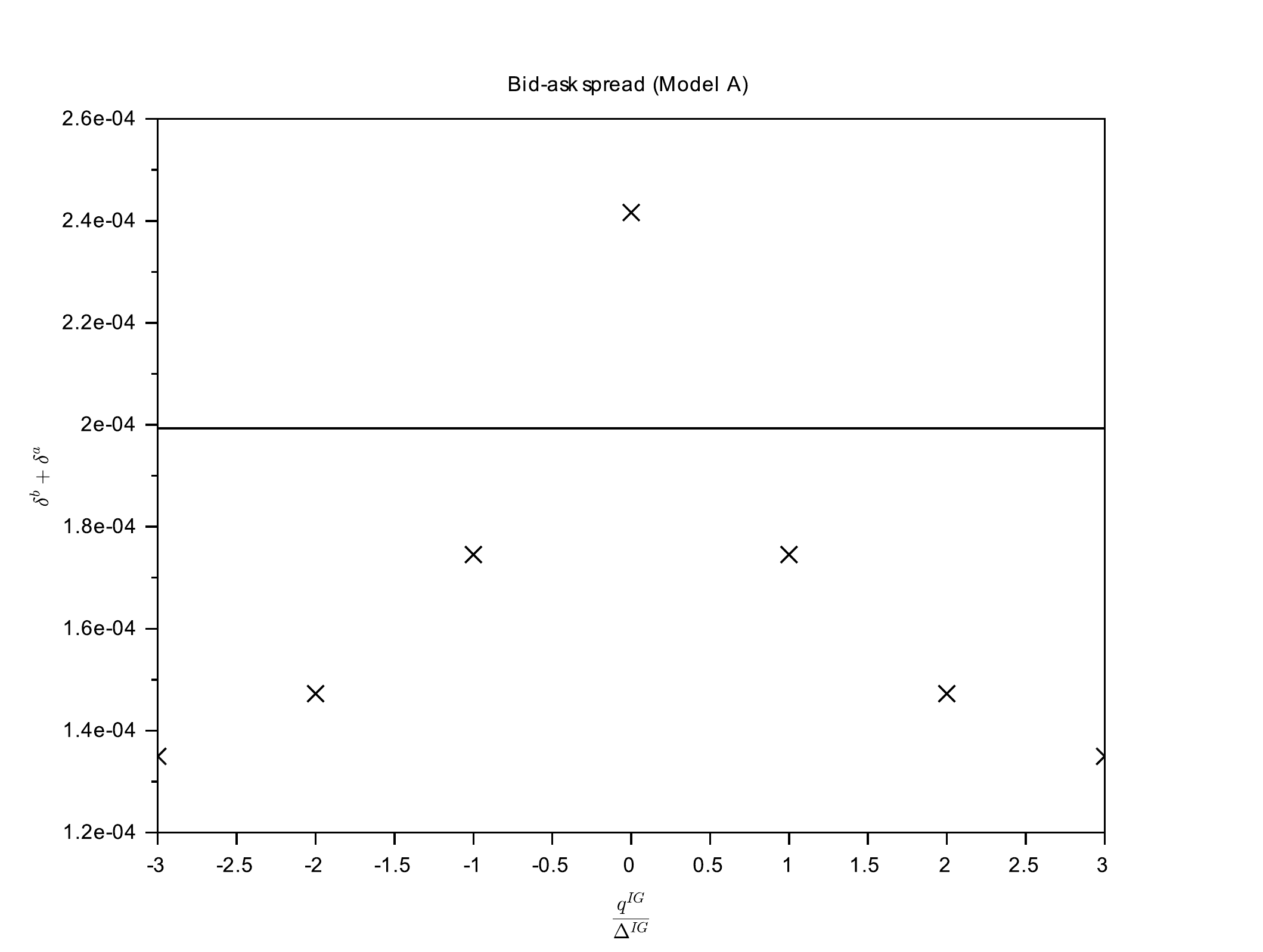}
  \caption{$q^{IG} \mapsto \delta^{IG,b}(0,q^{IG}) + \delta^{IG,a}(0,q^{IG})$ (crosses) and the associated closed-form approximations (line)  obtained with Eq. (\ref{sec4:spread})  -- in the case of Model A.}\label{IG_spread}
\end{figure}

\begin{figure}[H]
  \centering
  \includegraphics[width=0.65\textwidth]{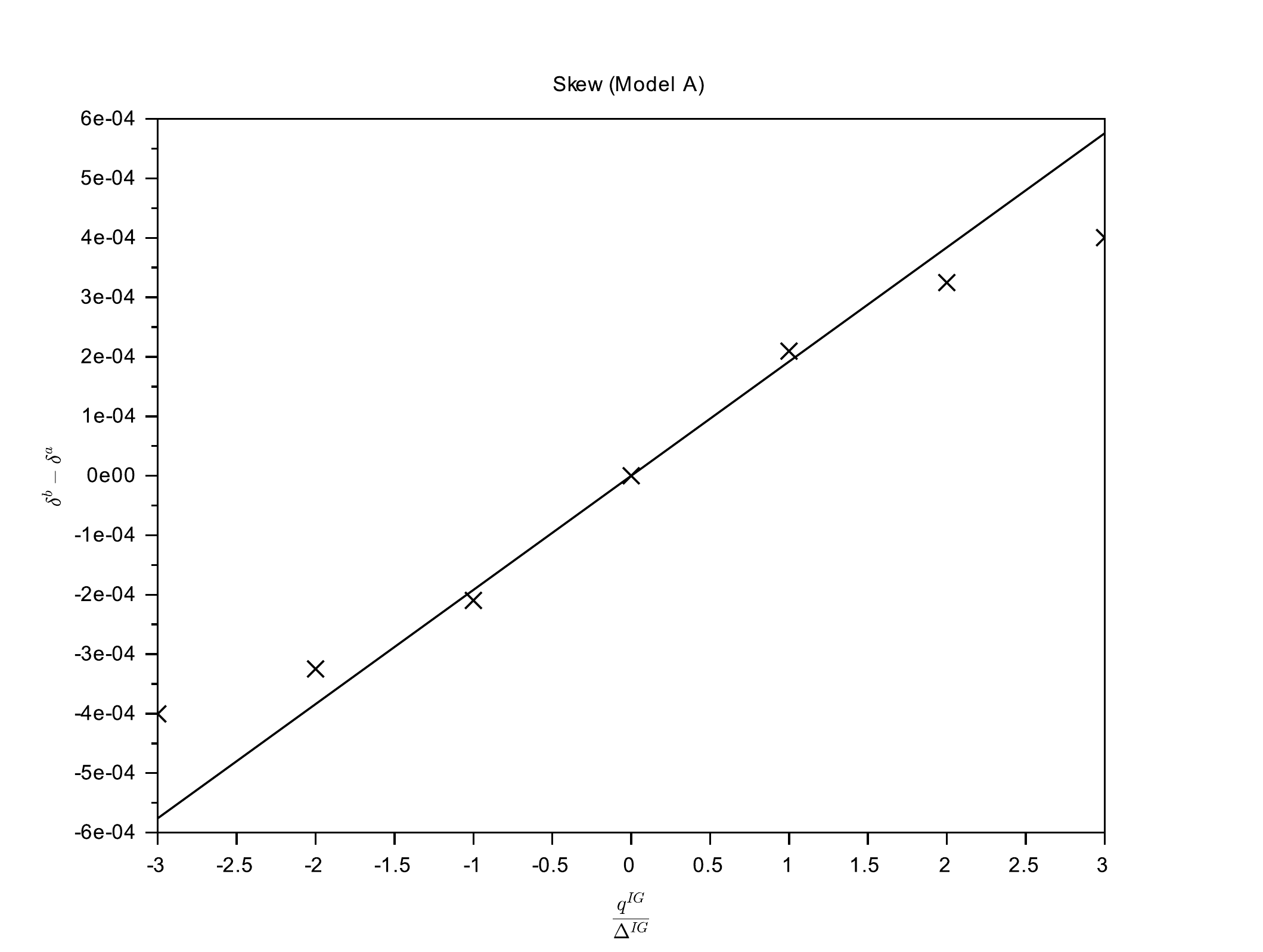}
  \caption{$q^{IG} \mapsto \delta^{IG,b}(0,q^{IG}) - \delta^{IG,a}(0,q^{IG})$ (crosses) and the associated closed-form approximations (line)  obtained with Eq. (\ref{sec4:skew})  -- in the case of Model A.}\label{IG_skew}
\end{figure}

However, if we consider market conditions with less volatility, then the closed-form approximations are far better -- see Figures \ref{low1} and \ref{low2} where we computed the optimal bid and ask quotes (in Model A) for a value of $\sigma^{IG}$ divided by 2. The quality of the approximations depends therefore strongly on the considered market and on the market context. Practitioners must subsequently understand in depth the trade-off between accuracy and computational time (especially when there are hundreds of assets) in order to choose between the two methods. \\

\begin{figure}[H]
  \centering
  \includegraphics[width=0.65\textwidth]{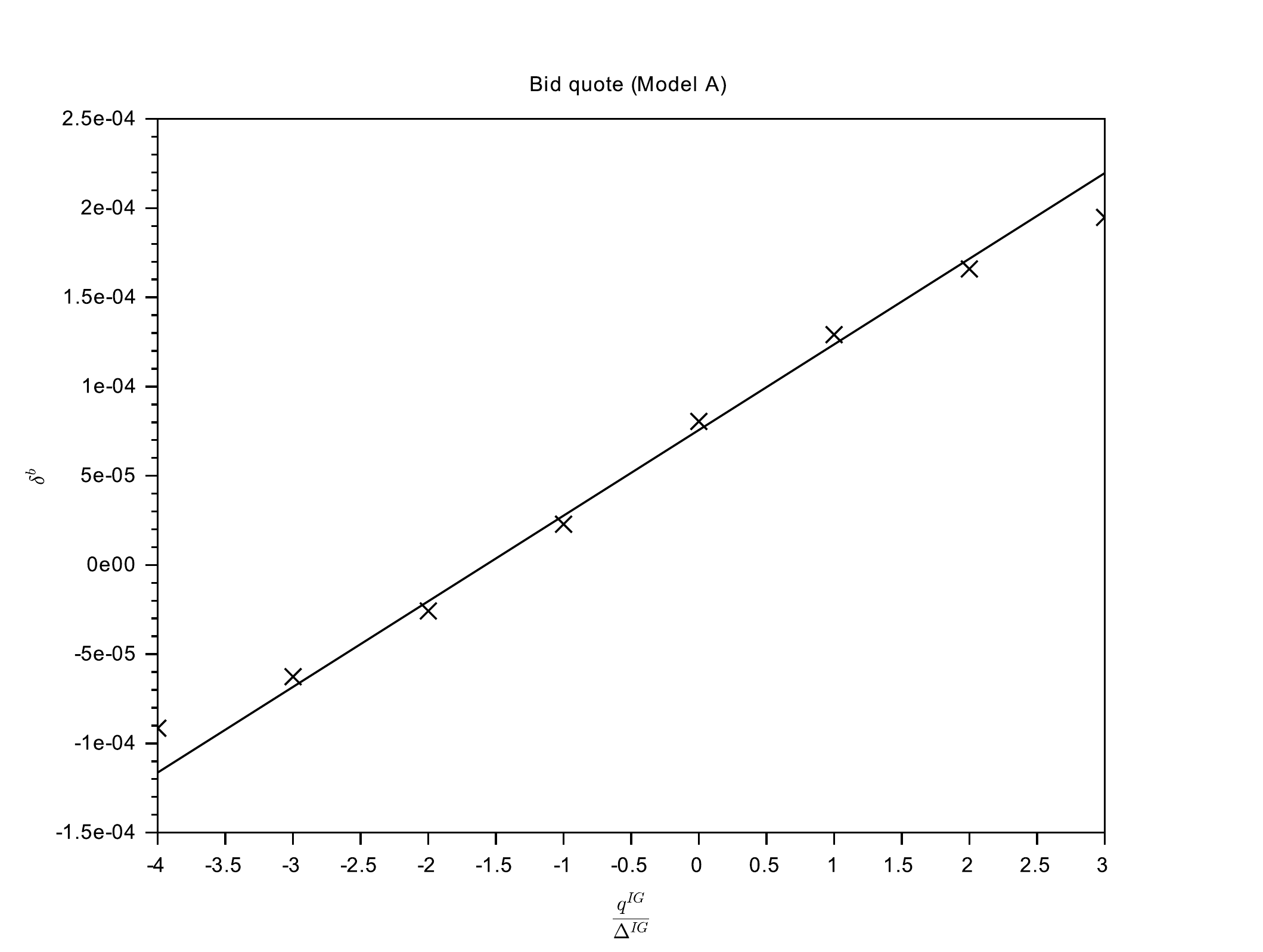}
  \caption{$q^{IG} \mapsto \delta^{IG,b}(0,q^{IG})$ (crosses) and the associated closed-form approximations (line) obtained with Eq. (\ref{sec4:glftformulab})  -- in the case of Model A, when $\sigma^{IG}$ is reduced by half.}\label{low1}
\end{figure}

\begin{figure}[H]
  \centering
  \includegraphics[width=0.65\textwidth]{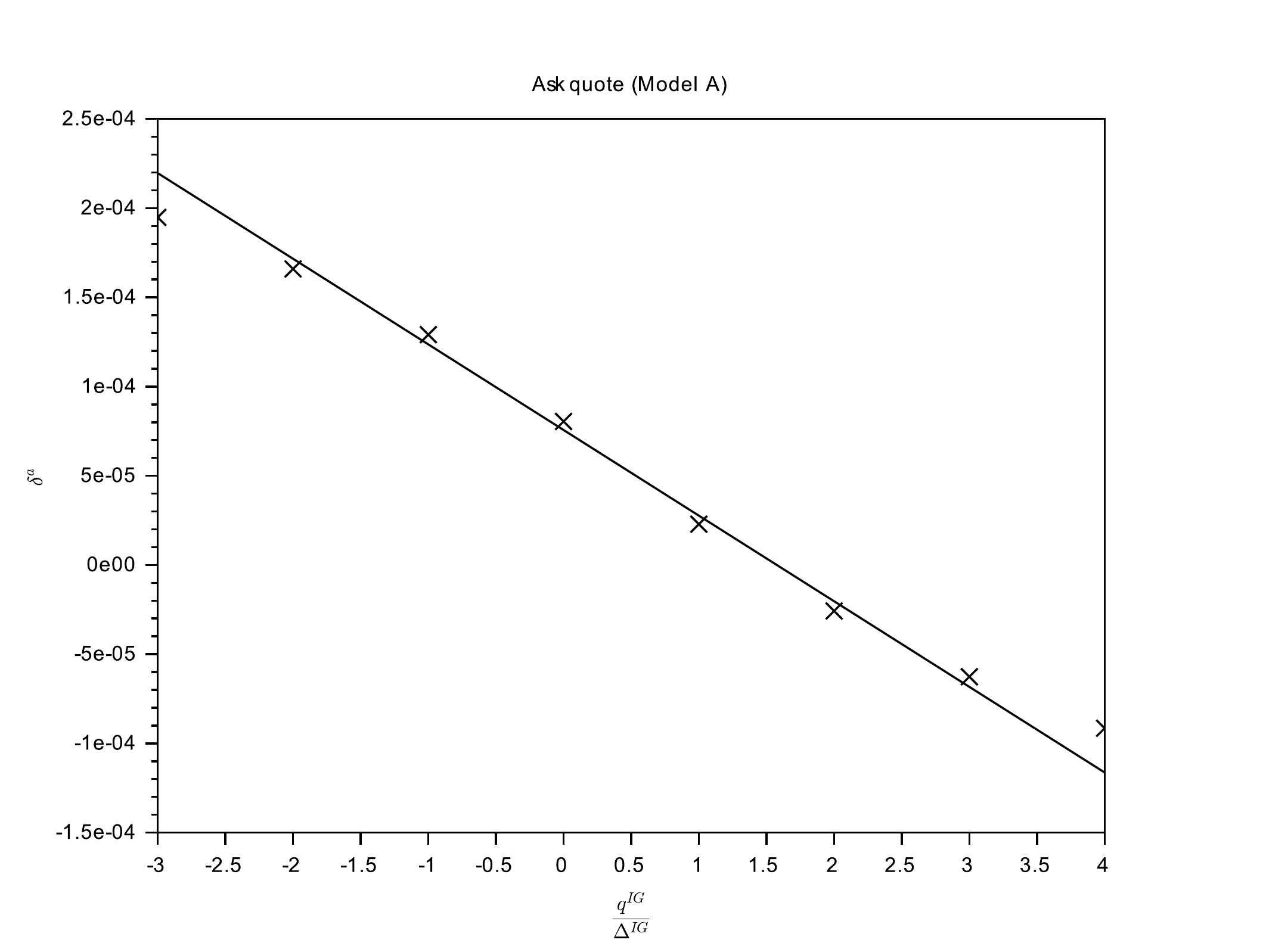}
  \caption{$q^{IG} \mapsto \delta^{IG,a}(0,q^{IG})$ (crosses) and the associated closed-form approximations (line)  obtained with Eq. (\ref{sec4:glftformulaa})  -- in the case of Model A, when $\sigma^{IG}$ is reduced by half.}\label{low2}
\end{figure}

So far in this section, we have only considered optimal quotes in Model A. We see in Figures~\ref{comp1} and \ref{comp2} that the differences between the two models is in fact very small. In other words, although Model B ignores part of the risk (or more precisely aversion to part of the risk), it constitutes a very interesting simplification of Model A.\\

\begin{figure}[H]
  \centering
  \includegraphics[width=0.65\textwidth]{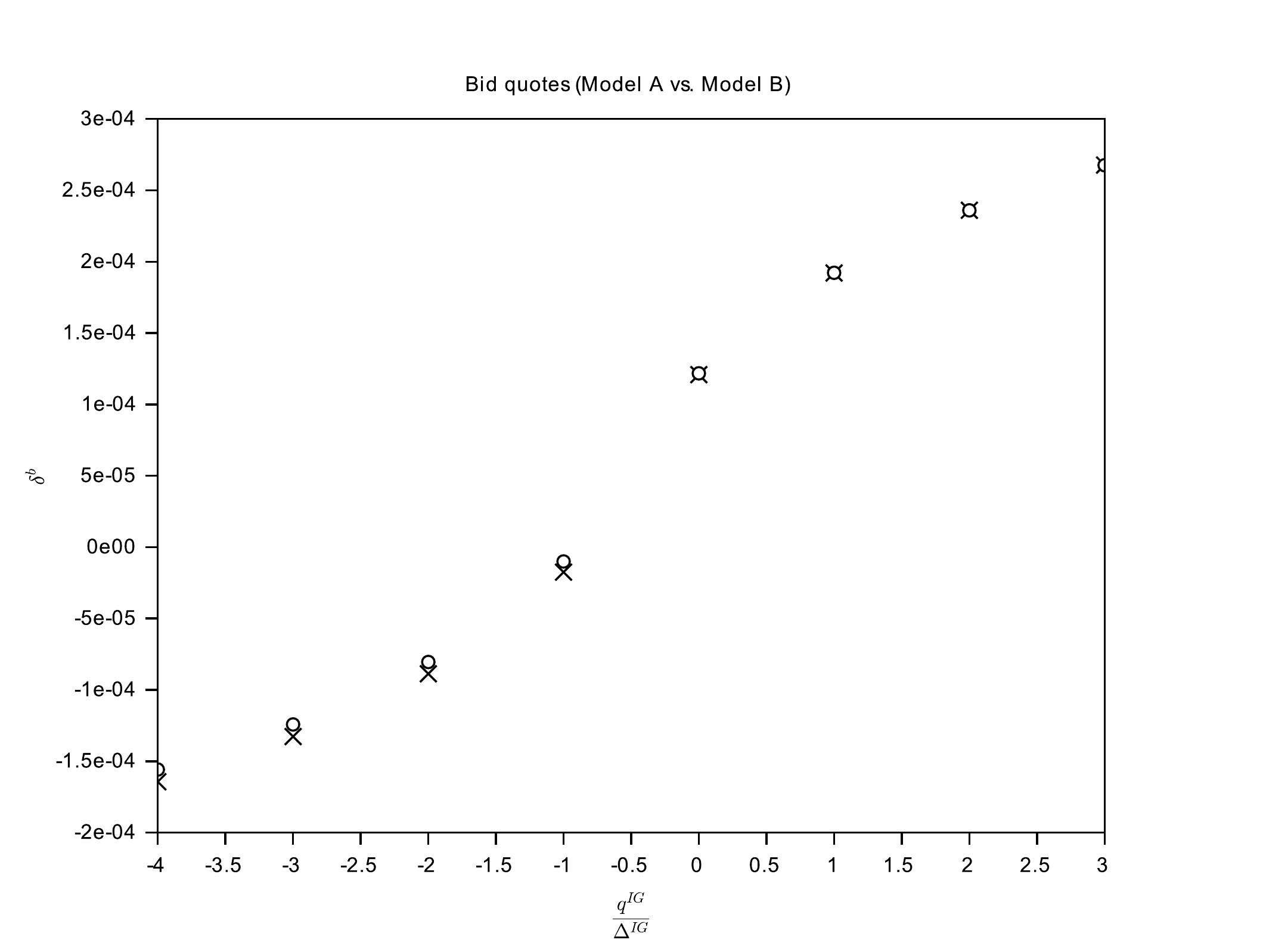}
  \caption{$q^{IG} \mapsto \delta^{IG,b}(0,q^{IG})$ in Model A (crosses) and $q^{IG} \mapsto \delta^{IG,b}(0,q^{IG})$ in Model B (circles).}\label{comp1}
\end{figure}

\begin{figure}[H]
  \centering
  \includegraphics[width=0.65\textwidth]{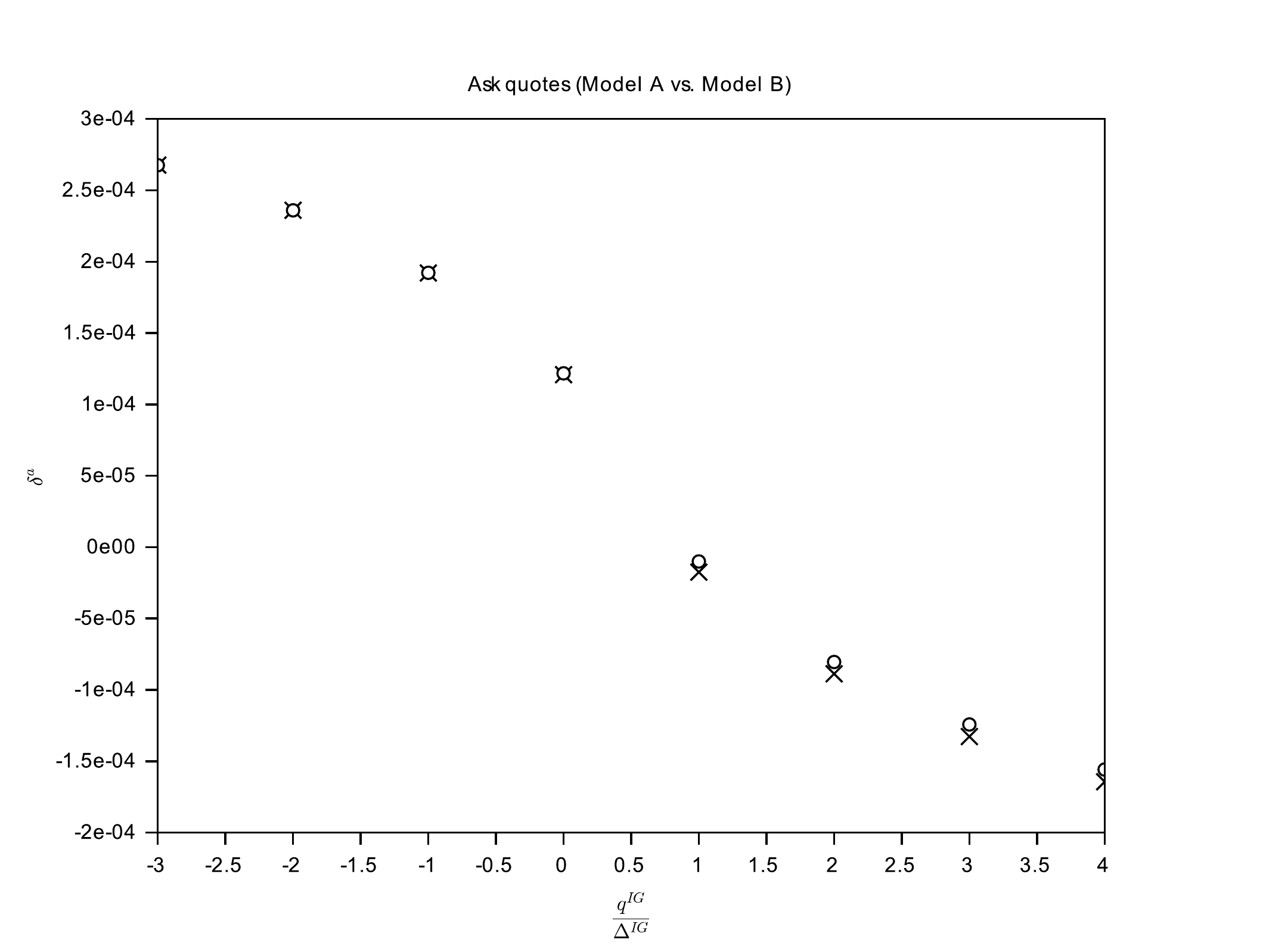}
  \caption{$q^{IG} \mapsto \delta^{IG,a}(0,q^{IG})$ in Model A (crosses) and $q^{IG} \mapsto \delta^{IG,a}(0,q^{IG})$ in Model B (circles).}\label{comp2}
\end{figure}

Let us now come to the case of the HY index alone. Like for the IG index, we approximated the solution $\theta$ of the systems of ODEs (\ref{sec3:thetagen}) by using an implicit scheme and a Newton's method at each time step to deal with the nonlinearity. Then we obtained the feedback control function $$(t,q^{HY}) \mapsto (\delta^{HY,b}(t,q^{HY}), \delta^{HY,a}(t,q^{HY}))$$ which gives the optimal bid and ask quotes at time $t$ when $q^{HY}_{t-} = q^{HY}$.\\

We see in Figure \ref{convergence2} that the asymptotic regime is reached after nearly 1 hour.\\

\begin{figure}[H]
  \centering
  \includegraphics[width=0.65\textwidth]{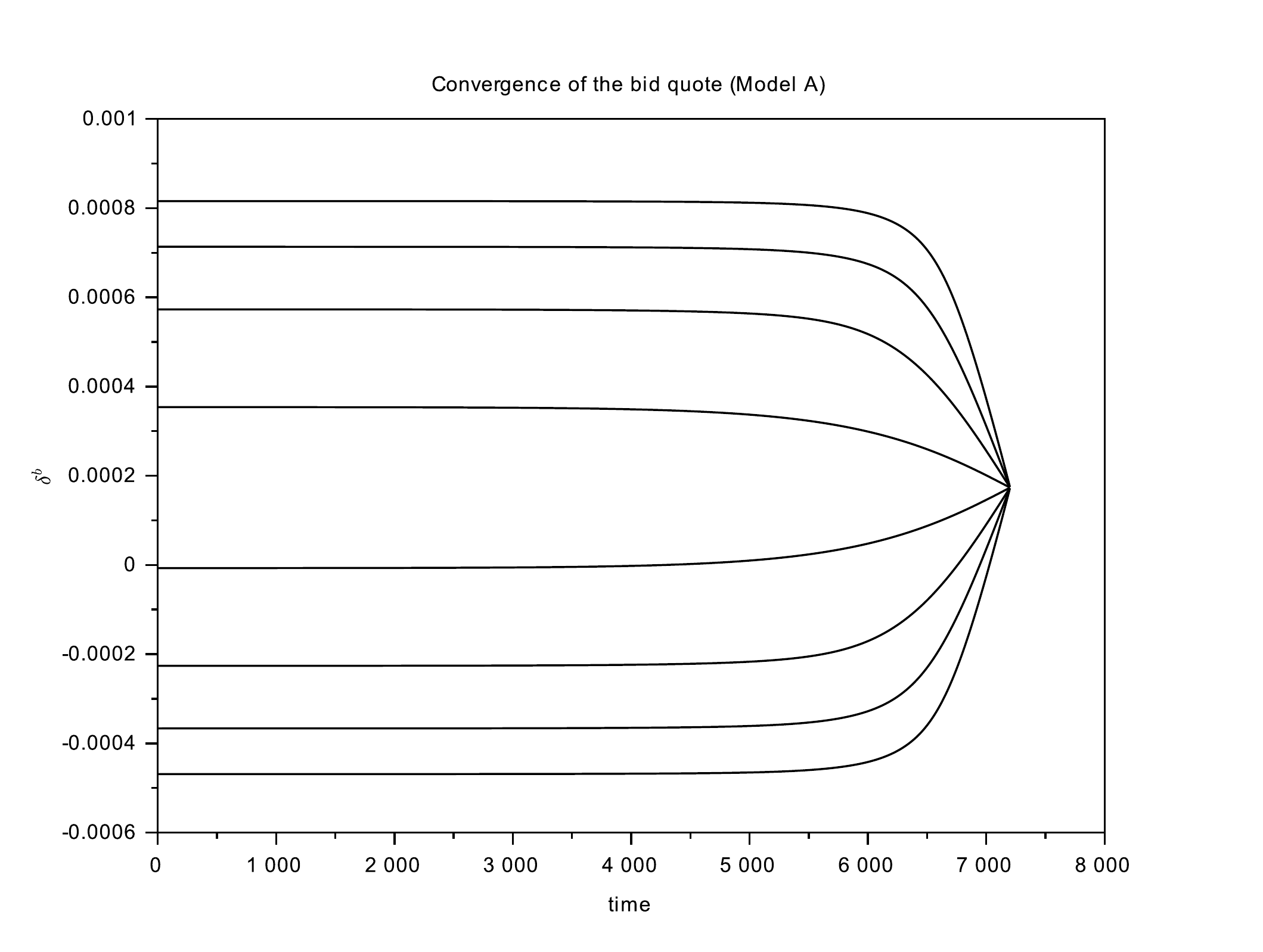}
  \caption{$t \mapsto \delta^{HY,b}(t,q^{HY})$ in Model A for the different values of $q^{HY}$.}\label{convergence2}
\end{figure}

In Figures \ref{HY_bid} and \ref{HY_ask}, we plot the initial (i.e., asymptotic) values of the bid and ask quotes, obtained with Model A, for the HY index, when it is considered on a stand-alone basis. As above, we see that the market maker quotes conservatively at the bid and aggressively at the ask when he is long, and conversely that he quotes conservatively at the ask and aggressively at the bid when he is short. We also see that the closed-form approximations are satisfactory only for small values of the inventory (in absolute value).\\

\begin{figure}[H]
  \centering
  \includegraphics[width=0.65\textwidth]{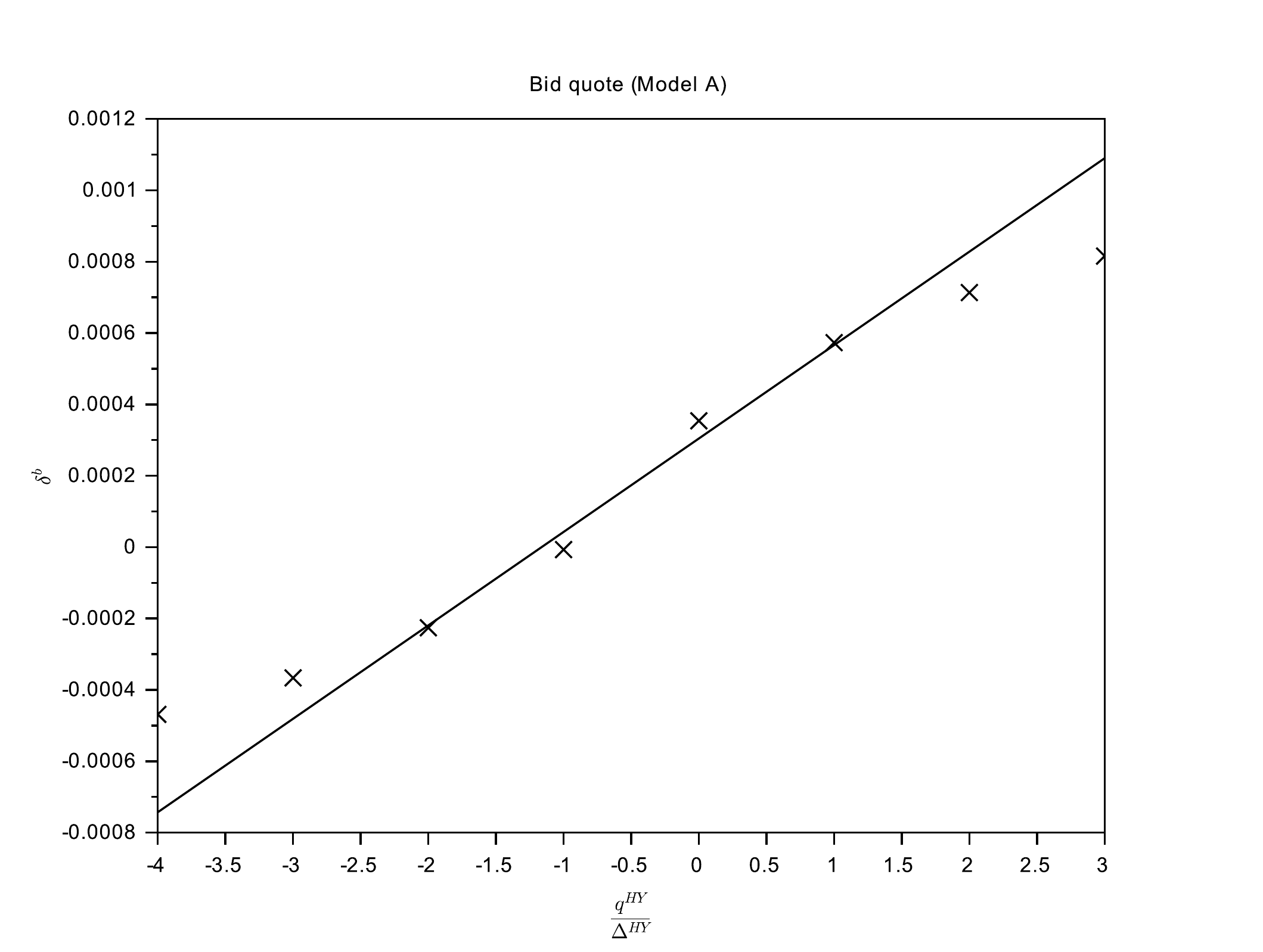}
  \caption{$q^{HY} \mapsto \delta^{HY,b}(0,q^{HY})$ (crosses) and the associated closed-form approximations (line) obtained with Eq. (\ref{sec4:glftformulab})  -- in the case of Model A.}\label{HY_bid}
\end{figure}

\begin{figure}[H]
  \centering
  \includegraphics[width=0.65\textwidth]{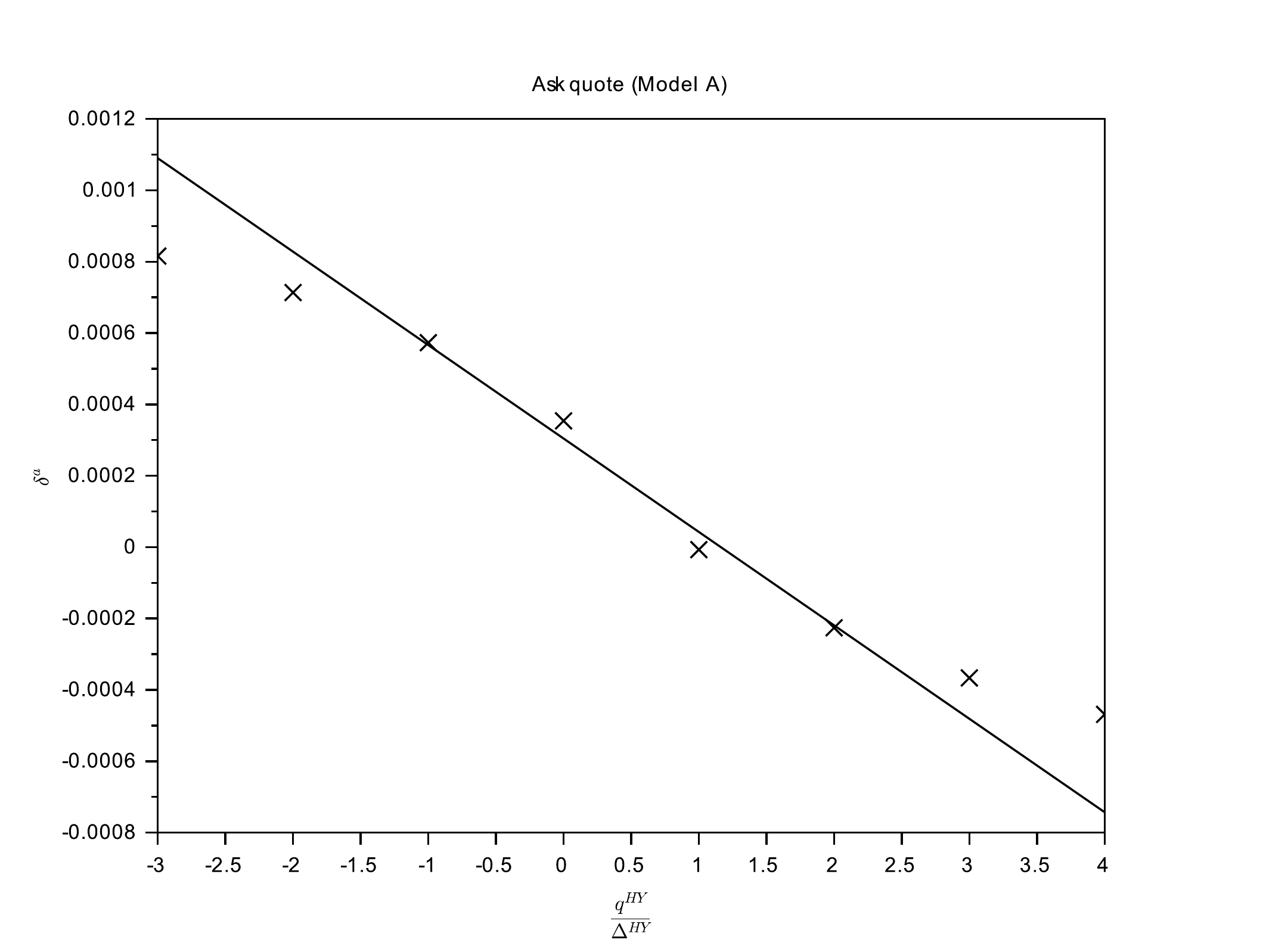}
  \caption{$q^{HY} \mapsto \delta^{HY,a}(0,q^{HY})$ (crosses) and the associated closed-form approximations (line)  obtained with Eq. (\ref{sec4:glftformulaa})  -- in the case of Model A.}\label{HY_ask}
\end{figure}

The difference between actual values and closed-form approximations can also be seen in Figures \ref{HY_spread} and \ref{HY_skew}, which represent the bid-ask spread and the skew of a market maker quoting optimally. The bid-ask spread is indeed not constant, and the skew is not linear on our example.\\

\begin{figure}[H]
  \centering
  \includegraphics[width=0.65\textwidth]{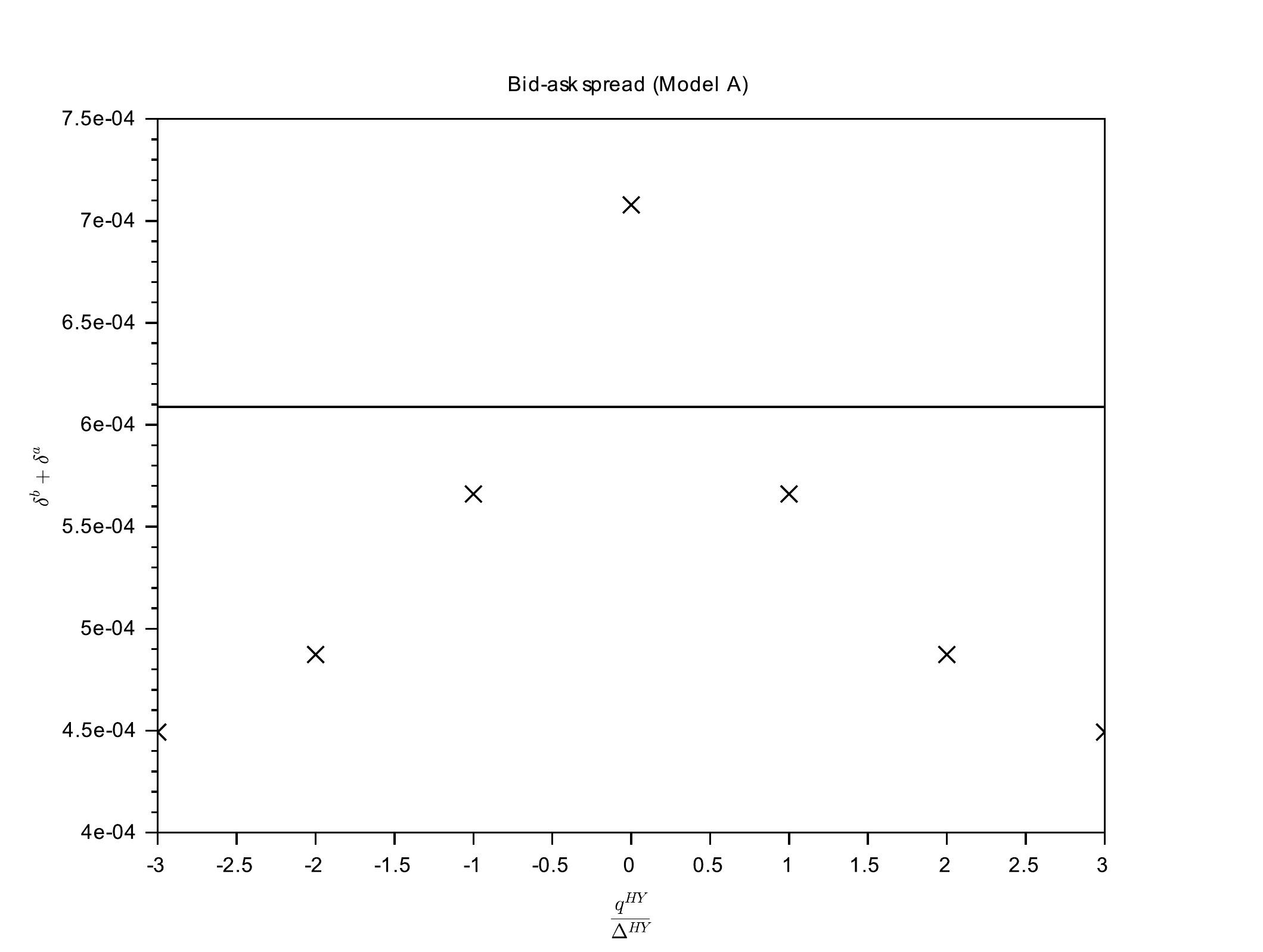}
  \caption{$q^{HY} \mapsto \delta^{HY,b}(0,q^{HY}) + \delta^{HY,a}(0,q^{HY})$ (crosses) and the associated closed-form approximations (line)  obtained with Eq. (\ref{sec4:spread})  -- in the case of Model A.}\label{HY_spread}
\end{figure}

\begin{figure}[H]
  \centering
  \includegraphics[width=0.65\textwidth]{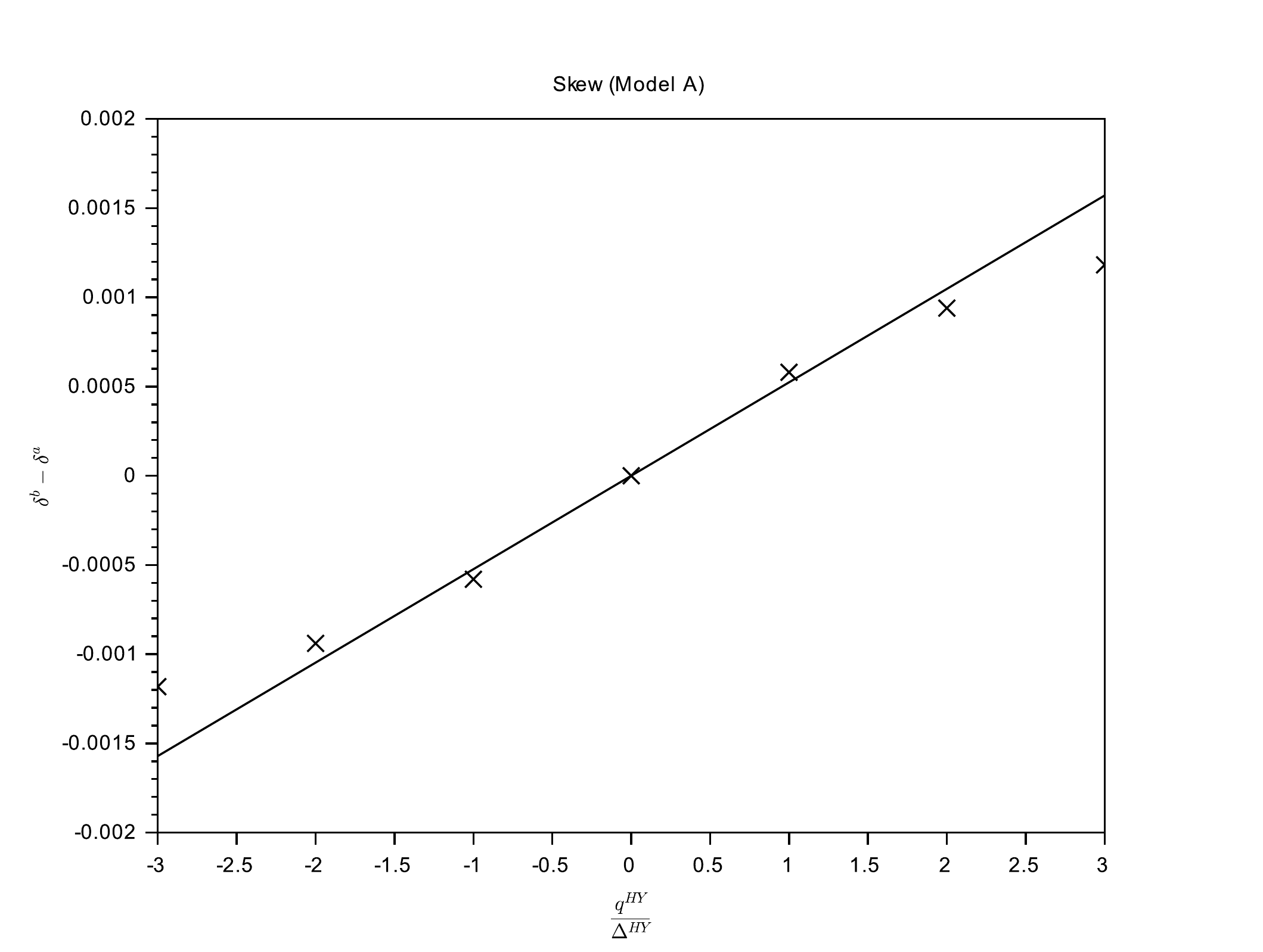}
  \caption{$q^{HY} \mapsto \delta^{HY,b}(0,q^{HY}) - \delta^{HY,a}(0,q^{HY})$ (crosses) and the associated closed-form approximations (line)  obtained with Eq. (\ref{sec4:skew})  -- in the case of Model A.}\label{HY_skew}
\end{figure}

As far as the comparison between Model A and Model B are concerned, we see in Figures \ref{comp3} and \ref{comp4} that the differences between the two models is very small, as in the case of the IG index.\\

\begin{figure}[H]
  \centering
  \includegraphics[width=0.65\textwidth]{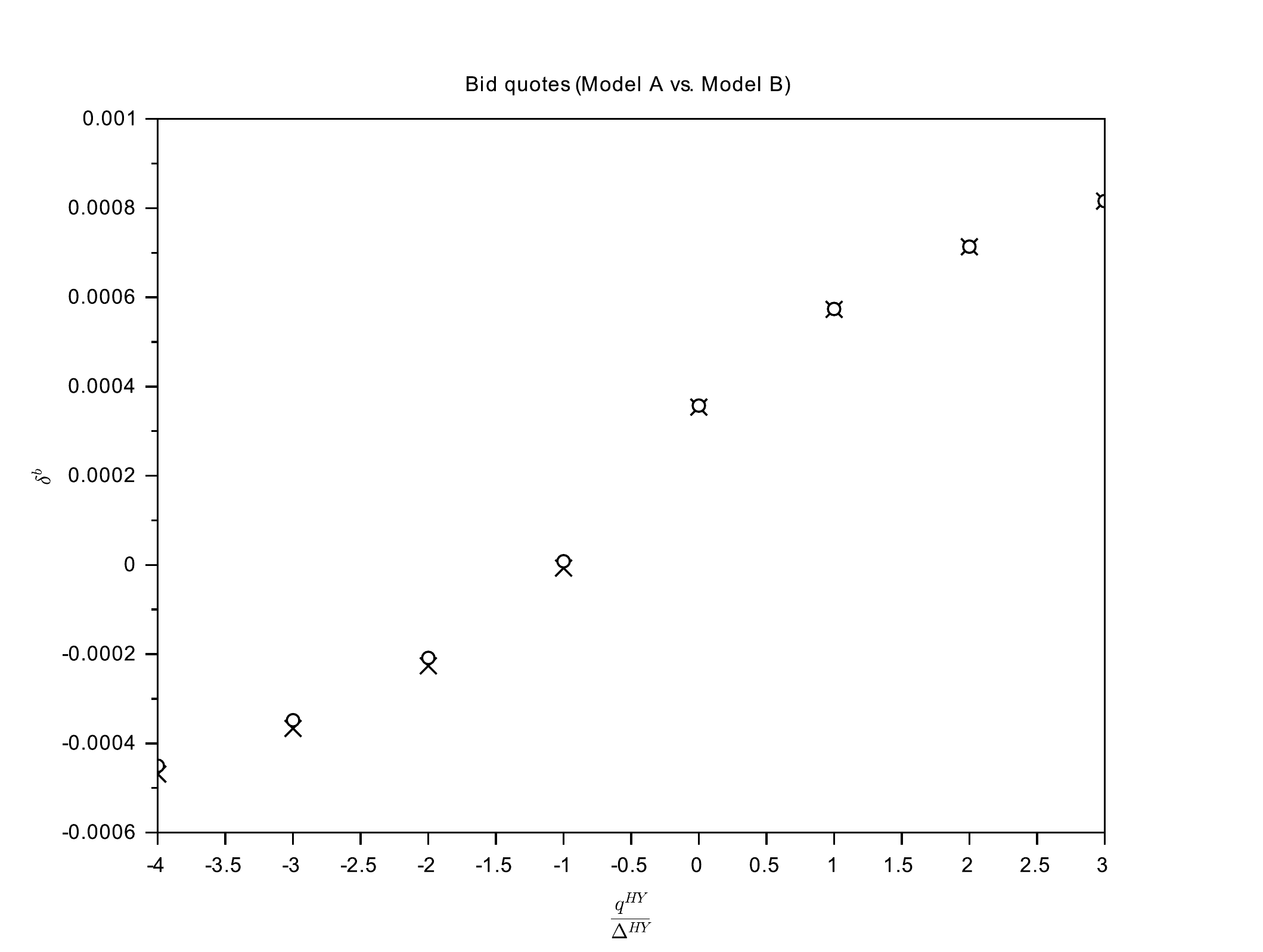}
  \caption{$q^{HY} \mapsto \delta^{HY,b}(0,q^{HY})$ in Model A (crosses) and $q^{HY} \mapsto \delta^{HY,b}(0,q^{HY})$ in Model B (circles).}\label{comp3}
\end{figure}

\begin{figure}[H]
  \centering
  \includegraphics[width=0.65\textwidth]{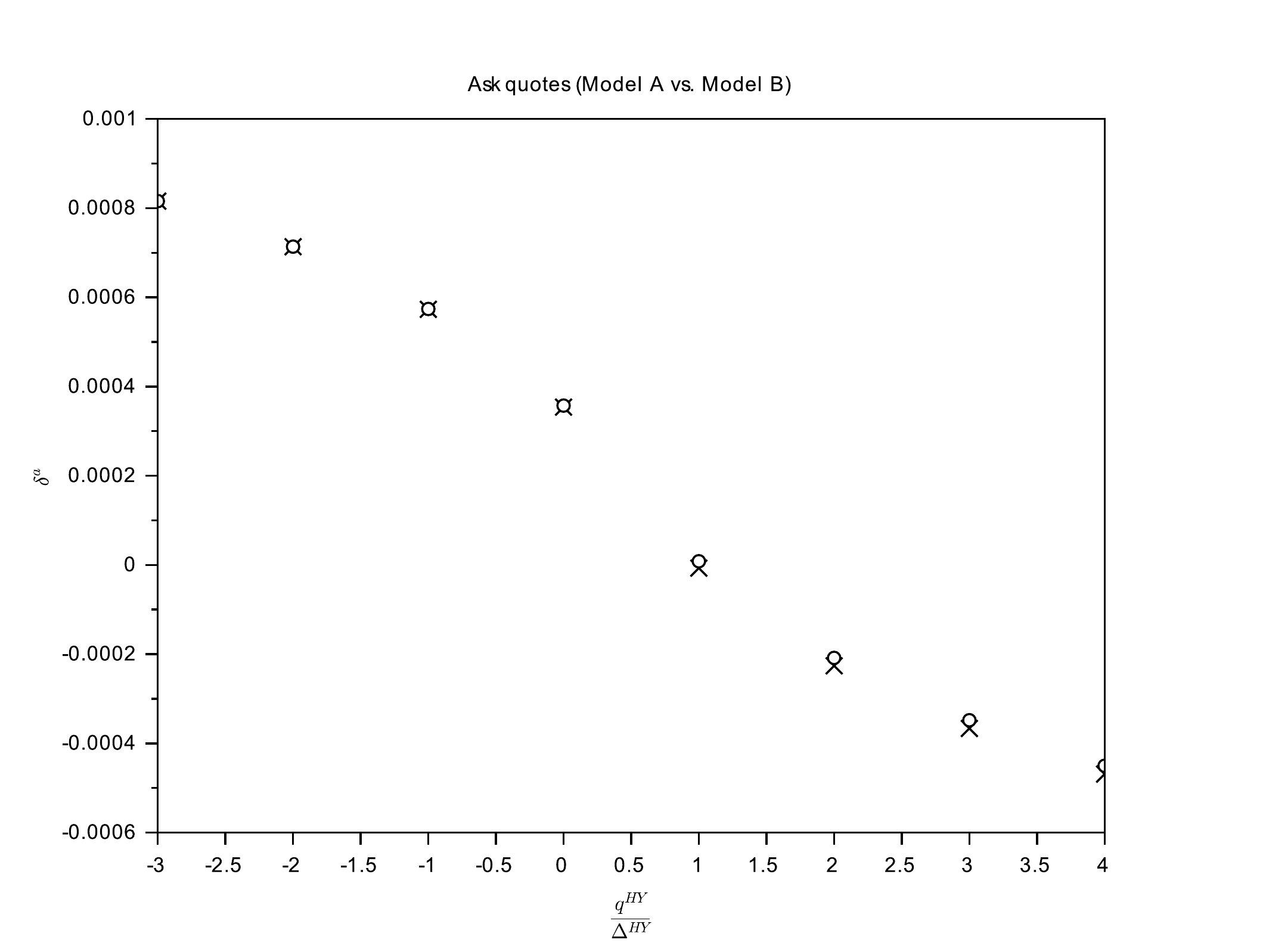}
  \caption{$q^{HY} \mapsto \delta^{HY,a}(0,q^{HY})$ in Model A (crosses) and $q^{HY} \mapsto \delta^{HY,a}(0,q^{HY})$ in Model B (circles).}\label{comp4}
\end{figure}

We can now consider the two indices together, and look at the influence of correlation for the market making of several assets at the same time. We approximated the solution $\theta$ of the systems of ODEs (\ref{sec5:thetagen}) by using an implicit scheme and a Newton's method at each time step to deal with the nonlinearity. Then we obtained the feedback control function $$(t,q^{IG},q^{HY}) \mapsto (\delta^{IG,b}(t,q^{IG},q^{HY}), \delta^{IG,a}(t,q^{IG},q^{HY}), \delta^{HY,b}(t,q^{IG},q^{HY}), \delta^{HY,a}(t,q^{IG},q^{HY}))$$ which gives the optimal bid and ask quotes at time $t$ for the two indices when $q^{IG}_{t-} = q^{IG}$ and $q^{HY}_{t-} = q^{HY}$.\\

In Figures \ref{IG_bid_2d} and \ref{HY_bid_2d}, we have plotted the optimal bid quotes for the two indices.\footnote{The results are similar, \emph{mutatis mutandis}, for the ask quotes, and are not displayed.} We see that the market maker's inventory on both indices influences his quotes. Because the correlation coefficient is positive, $(q^{IG},q^{HY}) \mapsto \delta^{IG,b}(0,q^{IG},q^{HY})$ and $(q^{IG},q^{HY}) \mapsto \delta^{HY,b}(0,q^{IG},q^{HY})$ are increasing in $q^{IG}$ and $q^{HY}$.\\

\begin{figure}[H]
  \centering
  \includegraphics[width=0.75\textwidth]{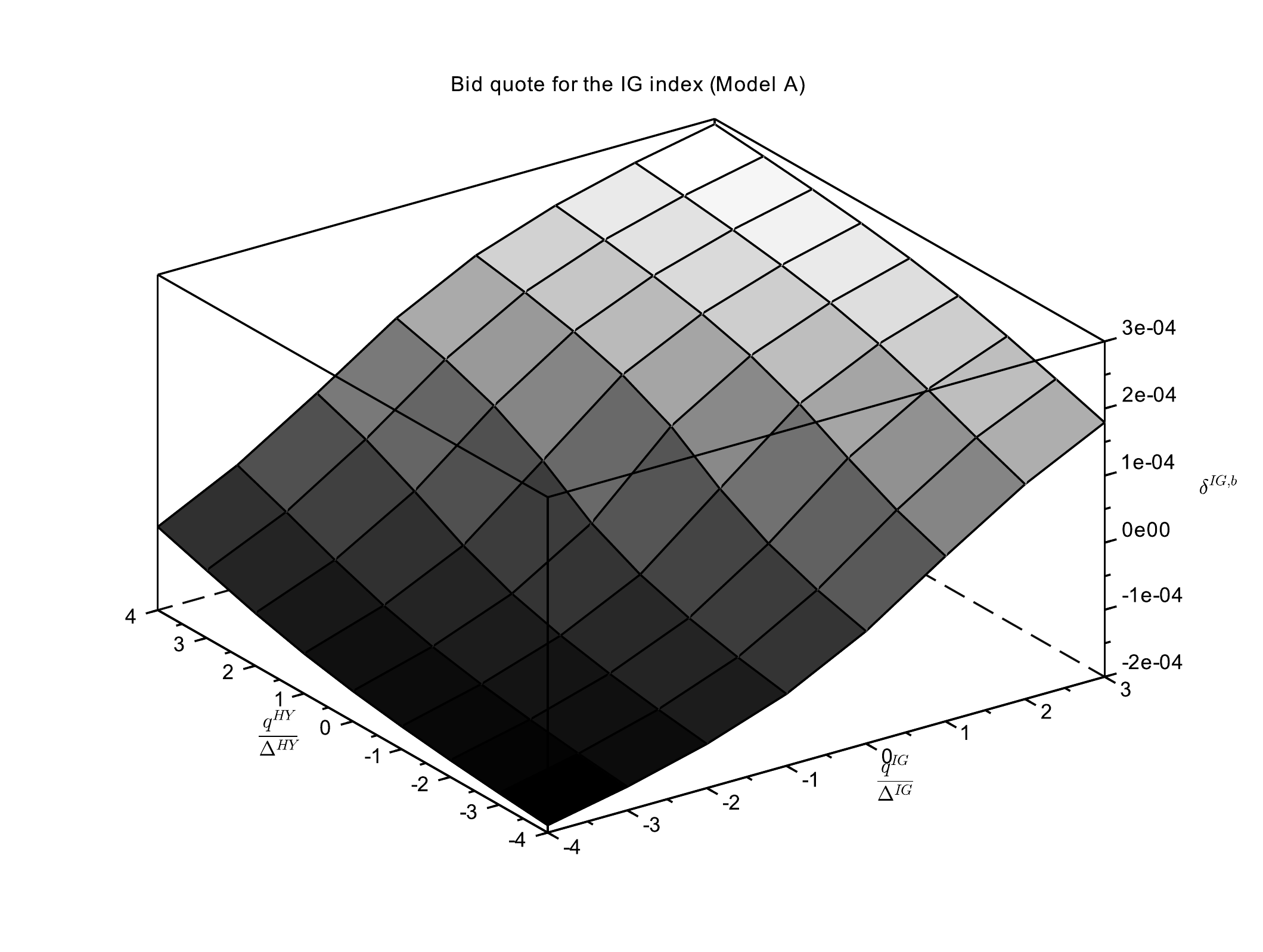}
  \caption{$(q^{IG},q^{HY}) \mapsto \delta^{IG,b}(0,q^{IG},q^{HY})$ -- in the case of Model A.}\label{IG_bid_2d}
\end{figure}

\begin{figure}[H]
  \centering
  \includegraphics[width=0.75\textwidth]{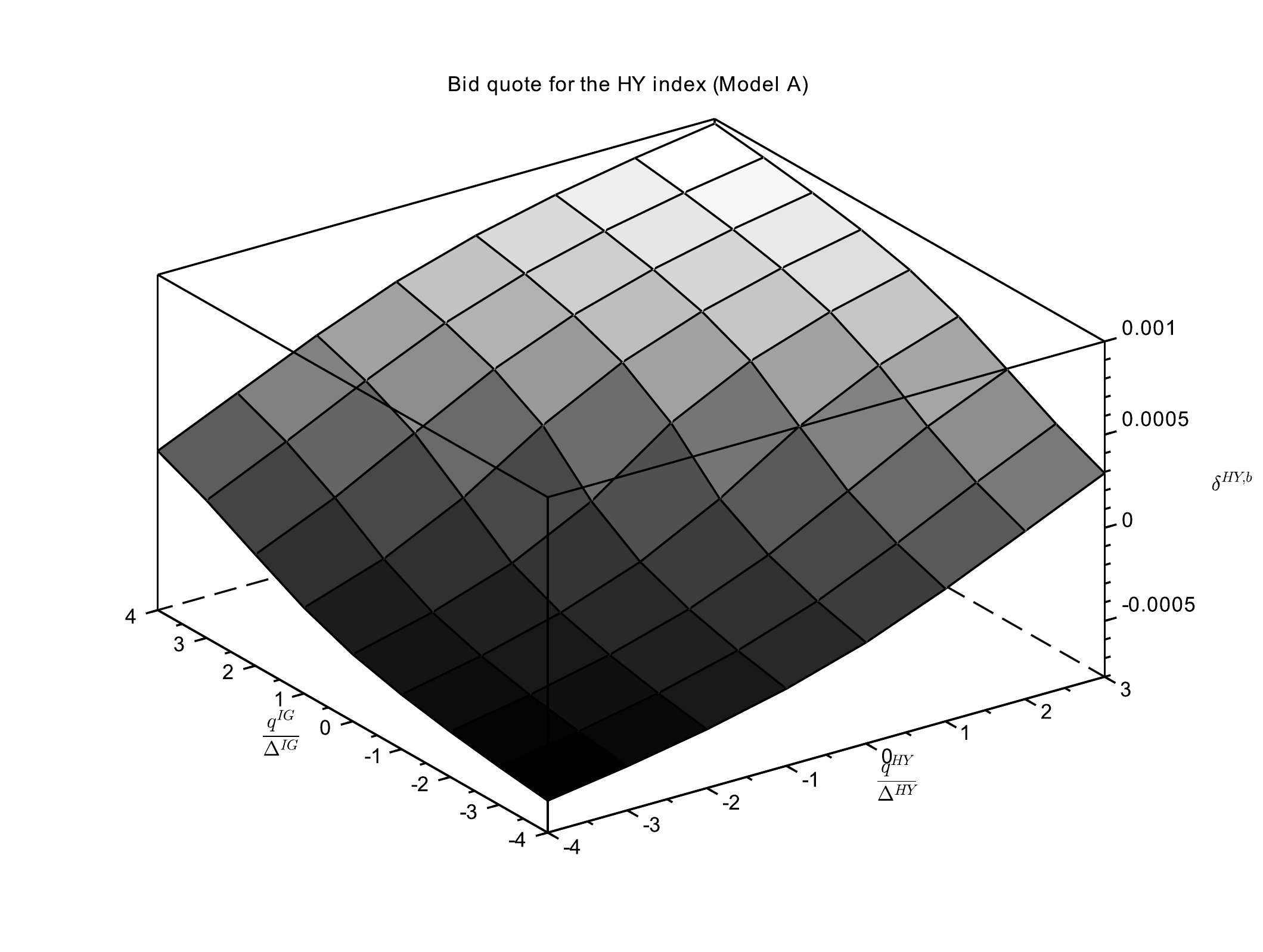}
  \caption{$(q^{IG},q^{HY}) \mapsto \delta^{HY,b}(0,q^{IG},q^{HY})$  -- in the case of Model A.}\label{HY_bid_2d}
\end{figure}

\begin{figure}[H]
  \centering
  \includegraphics[width=0.65\textwidth]{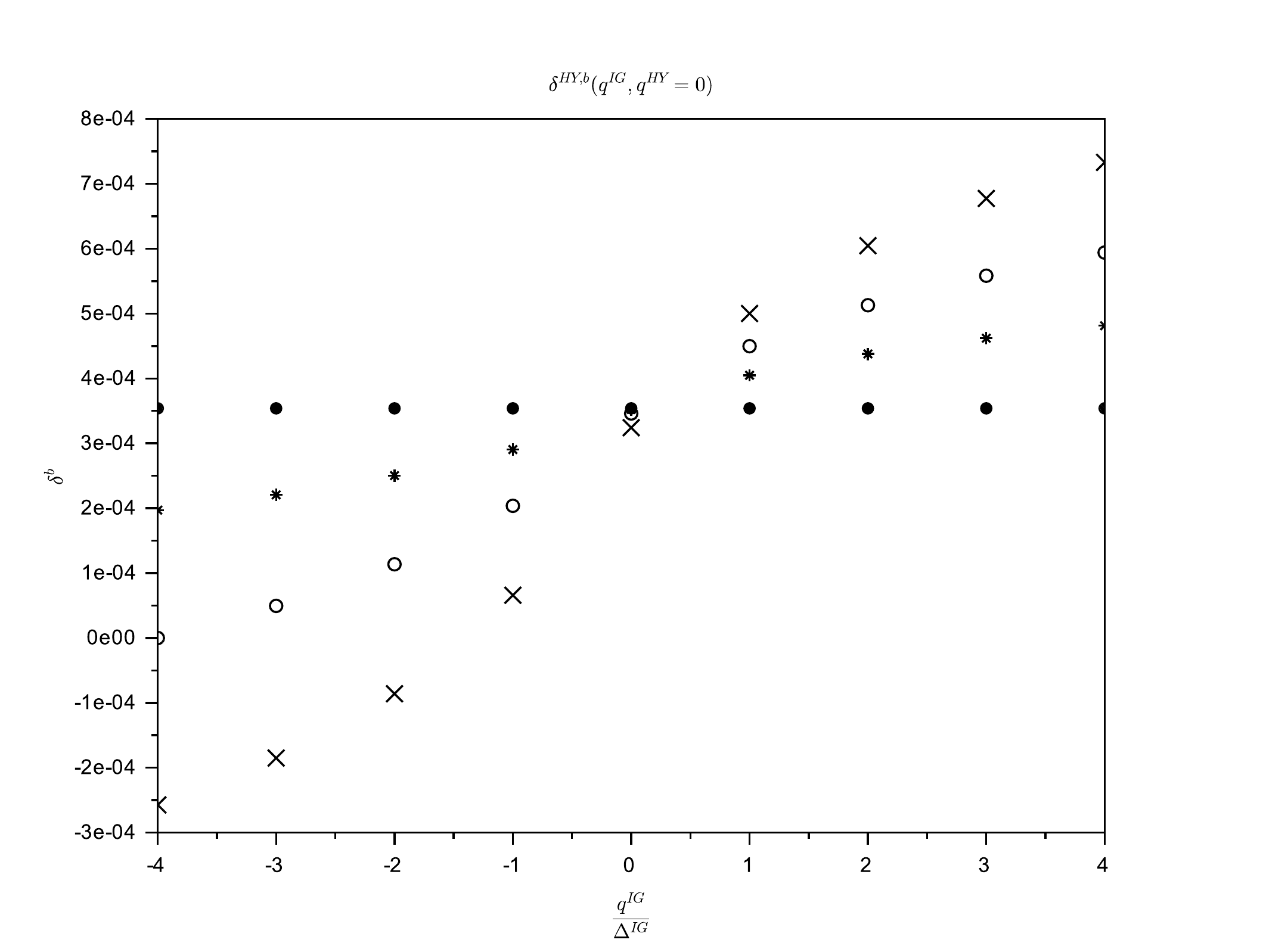}
  \caption{$q^{IG} \mapsto \delta^{HY,b}(0,q^{IG},0)$ in the case of Model A, for different values of $\rho$. $\rho=0.9$ (crosses), $\rho=0.6$ (circles), $\rho=0.3$ (stars)and $\rho=0$ (dots).}\label{correl}
\end{figure}

To see the influence of correlation, we have also computed the optimal quotes for four values of the correlation parameter: $\rho \in \lbrace 0, 0.3, 0.6, 0.9 \rbrace$. Figure \ref{correl} represents, for these different values of $\rho$, the bid quote $\delta^{HY,b}(0,q^{IG},0)$ for the HY index, when the inventory with respect to the HY index is equal to $0$, for different values of the inventory with respect to the IG index. We see that the correlation coefficient has a strong influence on the optimal quote: the more correlated the two assets, the more conservatively (respectively aggressively) the market maker should quote at the bid when he has a long (respectively short) inventory in the other asset.\\

\section*{Conclusion}

In this paper, we considered a framework \emph{\`a la} Avellaneda-Stoikov with general intensity functions, and we showed that for the different optimization criteria used in the literature, the dimensionality of the problem can be divided by 2. We also showed how to find closed-form approximations for the optimal quotes, generalizing therefore the Gu\'eant--Lehalle--Fernandez-Tapia formulas (used by many in the industry) to the two kinds of objective function used in the literature and to almost any intensity function. We also generalized our model to the multi-asset case, and showed the importance of taking account of the correlation between assets. In particular, we have derived closed-form approximations for the optimal quotes of a multi-asset market maker, an important breakthrough for practitioners who sometimes cannot solve systems of dozens or hundreds of nonlinear ODEs. The simple applications to credit indices we considered confirm the importance of the multi-asset framework.\\

\begin{center}
  \textbf{Financial support received for presenting the research results in an international conference}
\end{center}

Ce travail a \'et\'e r\'ealis\'e dans le cadre du laboratoire d'excellence ReFi port\'e par heSam Universit\'e, portant la r\'ef\'erence ANR-10-LABX-0095. Ce travail a b\'en\'efici\'e d'une aide de l'Etat g\'er\'ee par l'Agence Nationale de la Recherche au titre du projet Investissements d'Avenir Paris Nouveaux Mondes portant le r\'ef\'erence ANR-11-IDEX-0006-02.

\vspace{1cm}

\bibliographystyle{plain}

\end{document}